\def\colorful{0}
\newcommand{\violet}[1]{{\color{violet}{#1}}}
\newcommand{\violet}[1]{{{#1}}}
\newcommand{\error}{\mathrm{error}}
\newcommand{\round}{\mathrm{round}}
\newcommand{\score}{\mathrm{Score}}
\newcommand{\BuildDT}{\textsc{BuildDT}}
\newcommand{\UnbiasedEstimator}{\textsc{UnbiasedEstimator}}
\newcommand{\Reconstructor}{\textsc{Reconstructor}}
\newtheorem*{rep@theorem}{\rep@title}
\newcommand{\newreptheorem}[2]{
\newenvironment{rep#1}[1]{
 \def\rep@title{#2 \ref{##1}}
 \begin{rep@theorem}\itshape}
 {\end{rep@theorem}}}
\newcommand{\pparagraph}[1]{\bigskip \noindent {\bf {#1}}}
\begin{document}

\title{\violet{Reconstructing Decision Trees} \vspace{15pt} 
}


\author{Guy Blanc \vspace{8pt} \\ \hspace{-5pt}{\sl Stanford} \and \hspace{10pt} Jane Lange \vspace{8pt} \\
\hspace{4pt}  {\sl MIT}
\and Li-Yang Tan \vspace{8pt} \\ \hspace{-8pt} {\sl Stanford}}

\date{\vspace{5pt}\small{\today}}

\maketitle


\begin{abstract} 
We give the first {\sl reconstruction algorithm} for decision trees: given queries to a function~$f$ that is $\opt$-close to a size-$s$ decision tree, our algorithm provides query access to a decision tree $T$ where:
\begin{itemize}
\item[$\circ$] $T$ has size $S \coloneqq s^{O((\log s)^2/\eps^3)}$;
\item[$\circ$] $\dist(f,T)\le O(\opt)+\eps$;
\item[$\circ$] Every query to $T$ is answered with $\poly((\log s)/\eps)\cdot \log n$ queries to~$f$ and in $\poly((\log s)/\eps)\cdot n\log n$ time. 
\end{itemize} 

This yields a {\sl tolerant tester} that distinguishes functions that are close to size-$s$ decision trees from those that are far from size-$S$ decision trees. The polylogarithmic dependence on $s$ in the efficiency of our tester is exponentially smaller than that of existing testers.  

Since decision tree complexity is well known to be related to numerous other boolean function properties, our results also provide a new algorithms for reconstructing and testing these properties. 



\end{abstract}

\thispagestyle{empty}

\newpage
\setcounter{page}{1}

\section{Introduction} 

\violet{We study the problem of {\sl reconstructing} decision trees: given queries to a function $f$ that is close to a size-$s$ decision tree, provide fast query access to a decision tree, ideally one of size not much larger than $s$, that is close to~$f$.  This can be viewed as an ``on the fly" variant of the problem of properly and agnostically learning  decision trees, where the goal there is to output the entire decision tree hypothesis.  More broadly, reconstruction algorithms, introduced by Ailon, Chazelle, Comandur, and Liu~\cite{ACCL08}, can be viewed as sublinear algorithms that restore structure---in our case, that of a decision tree---in a function that has been lost due to noise.}



Decision trees have long been a popular and effective model in machine learning, and relatedly, they are among the most intensively studied concept classes in learning theory.  The literature on learning decision trees is vast, spanning three decades and studying the problem in a variety of models and from a variety of perspectives~\cite{EH89,Riv87,Blu92,Han93,Bsh93,BFJKMR94,HJLT96,MR02,JS06,OS07,KS06,KST09,HKY18,CM19,BDM20,BLT-ITCS,BLT-ICML,BGLT-NeurIPS1,BLT21icalp,BLQT21}. 
\violet{In contrast, the problem of reconstructing decision trees has thus far been surprisingly understudied.} 




\subsection{Our contributions}

\violet{We give the first reconstruction algorithm for decision trees.  Our algorithm achieves a {\sl polylogarithmic} dependence on $s$ in its query and time complexities, exponentially smaller than the information-theoretic minimum required to learn.}


\begin{theorem}[Main result]
\label{thm:reconstruct}
There is a randomized algorithm which, 
given queries to $f : \bn\to\bits$ and parameters $s\in \N$ and $\eps\in (0,1)$, provides query access to a fixed decision tree~$T$ where 
\begin{itemize}
\item[$\circ$] $T$ has size $s^{O((\log s)^2/\eps^3)}$;
\item[$\circ$] $\dist(T,f) \le O(\opt_s) + \eps$ w.h.p., where $\opt_s$ denotes the distance of $f$ to the closest size-$s$ decision tree;
\item[$\circ$] Every query to $T$ is answered with $\poly((\log s)/\eps)\cdot \log n$ queries to~$f$ and in $\poly((\log s)/\eps)\cdot n\log n$ time. 
\end{itemize} 
\end{theorem}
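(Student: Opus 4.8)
The plan is to exhibit $T$ as the decision tree grown by a top-down procedure $\BuildDT$, and to implement $\Reconstructor$ so that on an input $x\in\bn$ it traverses only the single root-to-leaf path of $T$ consistent with $x$, never materializing $T$ as a whole.

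First I would fix the idealized tree. $\BuildDT$ works recursively on subcubes (restrictions)~$\pi$, starting from the root $\pi=\emptyset$: at an internal node it splits on the coordinate $i^\star$ maximizing a splitting score $\score_i(f_\pi)$ (a noise-stability/influence-type functional), recursing on the two children $\pi\cup\{x_{i^\star}=b\}$; it turns a branch into a leaf labeled $\round(\mathbb{E}[f_\pi])$ as soon as that branch reaches depth $\Delta=\poly((\log s)/\eps)$ or every score $\score_i(f_\pi)$ drops below a threshold $\tau=\poly(\eps/\log s)$. Two facts about this construction are what the argument rests on, and I would isolate them as lemmas. \emph{(i) A robust structural guarantee}: even if each split uses \emph{any} coordinate whose score is within an additive $\gamma$ of the maximum, and each leaf is labeled by $\round$ of a $\gamma$-accurate estimate of $\mathbb{E}[f_\pi]$, the resulting tree still has size $s^{O((\log s)^2/\eps^3)}$ and $\dist(T,f)\le O(\opt_s)+\eps$. \emph{(ii) A local estimation guarantee}: $\score_i(f_\pi)$ is the mean of a bounded random variable evaluable from a constant number of queries to $f$ on the subcube $\pi$, hence estimable to within $\pm\gamma$ from $\poly((\log s)/\eps)$ such queries; and, crucially, the (at most $\poly((\log s)/\eps)$ many) coordinates whose score exceeds $\tau/2$ can be assembled into a candidate set using only $O(\log n)$ queries per coordinate, via binary search over blocks of $[n]$ (a ``find the relevant variables'' routine), so that one never reads all $n$ neighbours of a point.

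Granting (i) and (ii), $\Reconstructor$ on input $x$ just simulates $\BuildDT$ down the $x$-consistent path: at each node it runs the routine of (ii) to form the candidate set $C\subseteq[n]$, estimates $\score_i(f_\pi)$ for $i\in C$ with $\UnbiasedEstimator$, splits on the coordinate of $C$ with the largest estimated score (or makes a leaf if all estimates are below $\tau$, or if depth $\Delta$ is reached), and descends according to $x_{i^\star}$; at the leaf it outputs $\round$ of an estimate of $\mathbb{E}[f_\pi]$. Since every path has length at most $\Delta=\poly((\log s)/\eps)$, a single query to $T$ costs $\poly((\log s)/\eps)\cdot\log n$ queries to $f$ and $\poly((\log s)/\eps)\cdot n\log n$ time, the factor $n$ arising from handling length-$n$ inputs and blocks of $[n]$.

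The last ingredient — and the point I would treat most carefully — is that $T$ must be a \emph{fixed} tree, so that all queries are mutually consistent. I would arrange this by drawing a single random seed at initialization and deriving the randomness used at each node deterministically from the seed together with that node's identity (its restriction $\pi$), e.g.\ through a $\poly((\log s)/\eps)$-wise independent hash; then $T$ is a deterministic function of the seed and $\Reconstructor$ merely navigates it. A tail bound at each node, union-bounded over the at most $S=s^{O((\log s)^2/\eps^3)}$ nodes of $T$ — which inflates the per-node sample complexity by only a $\log S=\poly((\log s)/\eps)$ factor — shows that with high probability over the seed, every score estimate along every branch is $\gamma$-accurate and every candidate set captures its above-threshold coordinates, so that $T$ is one of the good trees promised by (i), whence $\dist(T,f)\le O(\opt_s)+\eps$. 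I expect (i), the robustness of the top-down analysis to sampled/approximate scores, to be the conceptual core of the proof, and the $O(\log n)$-query extraction of the split coordinate in (ii) to be the implementation detail that needs the most care.
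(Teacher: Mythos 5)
Your high-level architecture matches the paper's: a top-down, score-driven tree whose nodes split on a (near-)maximizing coordinate; a robustness lemma showing that $\gamma$-approximate scores and $\gamma$-approximate leaf labels still give a tree of size $s^{O((\log s)^2/\eps^3)}$ that is $O(\opt_s)+\eps$-close to $f$ (this is exactly the paper's Lemma~\ref{lem:BGLT-approx}); a reconstructor that materializes only the root-to-leaf path of the query and never the whole tree; and a shared random seed plus a union bound over all $\le 2^d$ nodes to make the tree fixed and all estimates simultaneously accurate. That part of the plan is sound and essentially what the paper does.

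The gap is in your item~(ii), the claim that one can ``assemble a candidate set of above-threshold coordinates using $O(\log n)$ queries per coordinate, via binary search over blocks of $[n]$.'' This is not a well-defined procedure for the score functional $\score_i(f_\pi,p)=\NS_p(f_\pi)-\Ex_{\bb}[\NS_p((f_\pi)_{x_i=\bb})]$, and the standard binary-search trick (flip a block $S$, observe $f(\bx)\ne f(\bx^{\oplus S})$, halve $S$) locates a \emph{sensitive} coordinate of a single input, not a coordinate with large noise-sensitivity-drop; it is not clear it even biases toward high-score coordinates, and certainly not with the per-coordinate $O(\log n)$ cost you want. The paper avoids this entirely: it observes that a \emph{single} pair $(\bx,\by)$ with $\by\sim_p\bx$ yields an \emph{unbiased} estimate of \emph{all} $n$ scores simultaneously, namely
\[
\boldeta_i \;=\; \Ind[f(\bx)\ne f(\by)]\cdot\Bigl(1-\tfrac{1}{1-p/2}\,\Ind[\bx_i=\by_i]\Bigr),
\]
so $O\bigl((\log n + \log(1/\delta))/\tau^2\bigr)$ pairs suffice to estimate every $\score_i$ to additive $\tau$ with probability $1-\delta$ by Hoeffding plus a union bound over $i\in[n]$ --- the $\log n$ is purely the union-bound factor, not a search cost. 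You should replace your ``find the relevant variables by binary search'' step with this simultaneous unbiased estimator; without it, the $\poly((\log s)/\eps)\cdot\log n$ query bound in the theorem is not established. (One smaller difference: you stop a branch early when all scores fall below $\tau$; the paper's $T^{d,p}_f$ is simply the complete tree of depth $d=O((\log s)^3/\eps^3)$, which is cleaner to analyze and still gives the stated size bound.)
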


\violet{Notably, in the standard setting where $s = \poly(n)$, the query and time complexities of our algorithm are $\polylog(n)$ and $\tilde{O}(n)$ respectively.  Previously, the only known approach was to simply properly and agnostically learn $f$; the current fastest such algorithm has query and time complexities $n^{O(\log\log n)}$~\cite{BLQT21}.} 

Our reconstruction algorithm is furthermore {\sl local} in the sense of Saks and Seshadhri~\cite{SS10}, allowing queries to be answered in parallel assuming a shared random string.  In particular, once $f, s, \eps$ and the random string are fixed, all queries are answered consistently with a single decision tree.

\subsubsection{\violet{Implications of~\Cref{thm:reconstruct} and further results}}
By a standard reduction,~\Cref{thm:reconstruct} gives a {\sl tolerant tester} for decision trees:

\begin{corollary}[Tolerant testing of decision trees] 
\label{cor:main}  
There is a randomized algorithm which, given queries to $f : \bn \to\bits$ and parameters $s\in\N$ and $\eps \in (0,1)$, 
\begin{itemize}
\item[$\circ$] Makes $\poly((\log s)/\eps) \cdot \log n$ queries to $f$, runs in $\poly((\log s)/\eps)\cdot n\log n$ time, and 
\item[$\circ$] Accepts w.h.p.~if $f$ is $\eps$-close to a size-$s$ decision tree; 
\item[$\circ$] Rejects w.h.p.~if $f$ is $\Omega(\eps)$-far from size-$s^{O((\log s)^2/\eps^3)}$ decision trees.
\end{itemize} 
\end{corollary}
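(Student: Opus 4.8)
The plan is to invoke the standard reduction from reconstruction to tolerant testing. First I would run the reconstruction algorithm of \Cref{thm:reconstruct} on $f$ with parameters $s$ and $\eps$; fixing its internal random string once and for all, this yields consistent query access to a single fixed size-$S$ decision tree $T$, where $S = s^{O((\log s)^2/\eps^3)}$, satisfying $\dist(T,f) \le c\cdot\opt_s + \eps$ w.h.p.\ for some absolute constant $c$ (the constant hidden in the $O(\opt_s)$ guarantee). Then I would estimate $\dist(f,T)$ empirically: draw $m = \Theta(1/\eps^2)$ uniform points $\bx^{(1)},\dots,\bx^{(m)}\in\bn$, query both $f$ and $T$ at each one, and let $\widehat{\delta}$ be the fraction of indices on which they disagree. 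By a Chernoff/Hoeffding bound, $\widehat\delta$ lies within $\pm\tfrac{\eps}{10}$ of $\dist(f,T)$ except with small constant probability, which can be pushed below any desired constant by increasing $m$ by a constant factor. The tester accepts iff $\widehat\delta \le (c+2)\eps$.

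For correctness, suppose $f$ is $\eps$-close to a size-$s$ decision tree. Then $\opt_s \le \eps$, so $\dist(f,T) \le (c+1)\eps$ and hence $\widehat\delta \le (c+1)\eps + \tfrac{\eps}{10} < (c+2)\eps$ w.h.p., so the tester accepts. Conversely, suppose $f$ is $(c+3)\eps$-far from every size-$S$ decision tree; this is exactly the ``$\Omega(\eps)$-far from size-$s^{O((\log s)^2/\eps^3)}$ decision trees'' hypothesis, with the hidden constant taken to be $c+3$. Since $T$ itself has size at most $S$, we get $\dist(f,T) \ge (c+3)\eps$, so $\widehat\delta \ge (c+3)\eps - \tfrac{\eps}{10} > (c+2)\eps$ w.h.p., and the tester rejects. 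A union bound over the two failure events---the reconstruction failing, and the empirical estimate deviating by more than $\tfrac{\eps}{10}$---keeps the total error probability small, and can be amplified as usual.

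For the complexity bounds, note that the only queries the tester makes to $f$ are those made internally by the reconstruction algorithm to answer the $m = O(1/\eps^2)$ queries to $T$. By \Cref{thm:reconstruct}, each such query to $T$ costs $\poly((\log s)/\eps)\cdot\log n$ queries to $f$ and $\poly((\log s)/\eps)\cdot n\log n$ time; multiplying through by $m = O(1/\eps^2)$ only changes the $\poly((\log s)/\eps)$ factor, so the overall query and time complexities remain $\poly((\log s)/\eps)\cdot\log n$ and $\poly((\log s)/\eps)\cdot n\log n$ respectively.

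The only real subtlety is a bookkeeping one rather than a genuine obstacle: one must set the accept/reject threshold so that the ``$\eps$-close'' and ``$\Omega(\eps)$-far'' regimes are cleanly separated by $\widehat\delta$ with enough slack to absorb the $\pm O(\eps)$ sampling error, which forces the hidden constant in the ``far'' hypothesis to be chosen in terms of the constant $c$ from \Cref{thm:reconstruct}. It is also worth emphasizing that the reduction relies on two structural features of \Cref{thm:reconstruct}: that it produces a \emph{fixed} function $T$ (so $\dist(f,T)$ is well-defined and estimable by sampling, using that all $m$ queries are answered consistently once the random string is fixed), and that $\mathrm{size}(T)\le S$ (so that $f$ being far from \emph{all} size-$S$ trees forces it to be far from $T$ in particular).
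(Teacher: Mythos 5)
Your proposal is correct and follows essentially the same route as the paper's own proof: fix the reconstructor's randomness to obtain consistent query access to a single size-$S$ tree $T$, empirically estimate $\dist(f,T)$ from $\Theta(1/\eps^2)$ uniform samples via Hoeffding, and threshold, noting that the soundness case uses that $T$ itself has size at most $S$. The only cosmetic difference is that the paper carries an explicit confidence parameter $\delta$ (so $m = O(\log(1/\delta)/\eps^2)$), while you defer to standard amplification.
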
 
This adds to a long line of work on testing decision trees~\cite{KR00,DLMORSW07,CGM11,BBM12,Bsh20}.  \violet{We give an overview of prior testers in~\Cref{sec:prior}, mentioning for now that they all have (at least) an exponentially larger dependence on $s$ in their query and time complexities.}  

\paragraph{A new connection between tolerant testing and learning.} It would be preferable if our tester can be improved to reject all $f$'s that are far
from size-$s$ decision trees---or more strongly, if our reconstructor can be improved to provide query access to a size-$s$ decision tree.  

We show that such a tester, even one that is considerably less efficient than ours, would yield the first polynomial-time algorithm for properly learning decision trees:

\begin{theorem}[Tolerant testing $\Longrightarrow$ Proper learning] 
\label{thm:lower-bound-intro}
Suppose there is an algorithm which, given query access to $f : \bn \to \bits$ and parameters $s\in \N$ and $\eps \in (0,1)$, 
\begin{itemize}
\item[$\circ$] Makes $\poly(s,n,1/\eps)$ queries to $f$, runs in $\poly(s,n,1/\eps)$ time, and 
\item[$\circ$] Accepts w.h.p.~if $f$ is $\eps$-close to a size-$s$ decision tree; 
\item[$\circ$] Rejects w.h.p.~if $f$ is $\Omega(\eps)$-far from size-$s$ decision trees.
\end{itemize} 
Then there is a $\poly(s,n,1/\eps)$-time membership query algorithm for properly learning size-$s$ decision trees with respect to the uniform distribution. 
\end{theorem}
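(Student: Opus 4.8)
The plan is to treat the hypothesized tester as an approximate distance oracle and then grow a decision tree greedily from the root, using the oracle at each node to pick which variable to query. Since the tester takes the size parameter as an input, I can run it on any function $g$ with any budget $k\le s$; running it with accuracy parameters $\eps'=1,\tfrac12,\tfrac14,\ldots$ down to $\beta_0:=\eps/\poly(s,1/\eps)$ (amplifying each invocation and taking a majority vote), and locating the threshold between acceptance and rejection, produces w.h.p.\ a value $\mathsf{Est}(g,k)$ with $\Omega(\dist(g,\mathrm{DT}_k))\le \mathsf{Est}(g,k)\le O(\dist(g,\mathrm{DT}_k))+\beta_0$, where $\mathrm{DT}_k$ is the class of size-$k$ decision trees and the ``accuracy floor'' $\beta_0$ is ours to choose. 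Each such call costs $O(\log(1/\beta_0))$ tester calls, i.e.\ $\poly(s,n,1/\eps)$ time and queries, and we only ever call it on restrictions $g=f_\rho$, which we simulate with membership queries to $f$.

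The learning algorithm builds a tree $T$ top down: each node carries the restriction $\rho$ along the path to it and a budget $k$, with the root at $(\emptyset,s)$. At a node $(\rho,k)$, if $k=1$ or the path has reached the depth cap $d:=\lceil\log(2s/\eps)\rceil$, we make it a leaf labeled by the majority value of $f_\rho$ (estimated from a few samples); otherwise, for each unset variable $x_i$ and each split $k=k_0+k_1$ with $k_0,k_1\ge 1$ we compute $\tfrac12\mathsf{Est}(f_{\rho\cup\{x_i=0\}},k_0)+\tfrac12\mathsf{Est}(f_{\rho\cup\{x_i=1\}},k_1)$, pick the $(i^\star,k_0^\star,k_1^\star)$ minimizing it, query $x_{i^\star}$, and recurse into the two children with the inherited restrictions and budgets. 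Every leaf keeps a budget at least $1$ and the leaf budgets sum to at most $s$, so $T$ has at most $s$ leaves --- hence the output is genuinely a size-$s$ decision tree --- and since its depth is at most $d$ it has only $O(s/\eps)$ nodes; thus the whole run makes $\poly(s,n,1/\eps)$ oracle calls and runs in $\poly(s,n,1/\eps)$ time.

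For correctness, write $\opt(\rho,k):=\dist(f_\rho,\mathrm{DT}_k)$. The analysis rests on two facts. (i) For $k\ge 2$, $\opt(\rho,k)=\min_{x_i,\,k_0+k_1=k}\big[\tfrac12\opt(\rho\cup\{x_i=0\},k_0)+\tfrac12\opt(\rho\cup\{x_i=1\},k_1)\big]$: ``$\ge$'' is immediate, and ``$\le$'' follows by expanding an optimal size-$k$ tree at its root, the single-leaf case being already covered because $t\mapsto\min(t,1-t)$ is concave, so branching on any variable can only decrease the distance to a constant. (ii) Combining (i) with the constant-factor guarantee of $\mathsf{Est}$, the split the algorithm selects satisfies $\tfrac12\opt(\rho_0^\star,k_0^\star)+\tfrac12\opt(\rho_1^\star,k_1^\star)\le O(\opt(\rho,k))+O(\beta_0)$, so $\opt(\text{child})\le O(\opt(\text{parent}))+O(\beta_0)$. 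Since $f$ is itself a size-$s$ decision tree, $\opt(\emptyset,s)=0$, so unrolling (ii) through the $\le d$ levels gives $\opt(\rho,k)\le 2^{O(d)}\beta_0=\poly(s,1/\eps)\cdot\beta_0$ at every visited node, which is at most $\eps/(100s)$ once $\beta_0$ is chosen small enough (still $\ge\eps/\poly(s,n,1/\eps)$, so the tester can supply it). Finally, $\dist(T,f)=\sum_{\text{leaves }\ell}2^{-|\rho_\ell|}\cdot\dist(f_{\rho_\ell},\mathrm{maj}(f_{\rho_\ell}))$, and we split the leaves into two kinds: a leaf with $|\rho_\ell|<d$ must have been created by budget exhaustion, so $\dist(f_{\rho_\ell},\mathrm{maj})=\opt(\rho_\ell,1)\le\eps/(100s)$, and these contribute at most $\eps/100$ since $\sum_\ell 2^{-|\rho_\ell|}\le 1$; the at most $s$ leaves at depth exactly $d$ each have $2^{-|\rho_\ell|}\le\eps/(2s)$, so even charged the worst-case error $\tfrac12$ they contribute at most $\eps/4$. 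This gives $\dist(T,f)\le\eps$, and a union bound over the $\poly(s,n,1/\eps)$ amplified tester calls controls the failure probability.

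The main obstacle is that the tester only yields a constant-factor estimate of the distance, so the greedy choices are merely approximately optimal: the algorithm does not recover the target exactly, and errors compound multiplicatively as we descend. Keeping this in check forces the two design decisions to cooperate --- the depth cap $d=O(\log(s/\eps))$ is needed both for efficiency and to keep the compounding factor $2^{O(d)}$ polynomial, so that it can be absorbed by driving the oracle's accuracy floor $\beta_0$ down to $\eps/\poly(s,1/\eps)$; the budget is split on \emph{size} rather than depth precisely because size is the quantity the tester measures, and this is also what makes $T$ come out with $\le s$ leaves; and the probability mass reaching the depth-$d$ leaves must be bounded separately. Establishing fact (ii), the ``approximate greedy'' inequality, together with the recursion (i), is the technical heart; the rest is bookkeeping.
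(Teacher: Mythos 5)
Your proposal is correct and is essentially the paper's own proof. You use the same two ingredients: (1) the tolerant tester yields, via a search over accuracy thresholds, an estimator of the distance $\opt_k(g)$ to size-$k$ decision trees that is correct up to a constant multiplicative factor and an additive slack that you can drive down to $\eps/\poly(s,1/\eps)$; and (2) a greedy top-down construction that at each node iterates over all $(i, k_0, k_1)$ with $k_0+k_1=k$, calls the distance estimator on the two child restrictions, and picks the tuple minimizing the averaged estimated error, with a depth cap $d = O(\log(s/\eps))$ --- this is exactly the paper's $\BuildDT$ (\Cref{fig:BuildDT}), including the observation that budget conservation keeps the output at $\le s$ leaves. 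The only cosmetic difference is in how the error is accounted: you bound $\opt(\rho,k)$ at every visited node by unrolling the per-level $\mathrm{average}\!\to\!\mathrm{max}$ step (paying a $2^{O(d)}$ compounding factor) and then sum leaf errors, splitting them into budget-exhaustion leaves and depth-$d$ leaves, whereas the paper proves the closed-form recursive bound $\dist(T,f)\le c^d\opt_s(f)+\gamma\tfrac{c^d-1}{c-1}+\tfrac{s}{2^{d+2}}$ (\Cref{lemma:build-dt-error}) and specializes it; both lead to the same parameter choices up to constants in the exponent. One small loose end you acknowledge implicitly but the paper states explicitly: the majority-vote leaf label is only estimated, which adds another $O(\eps)$ to the error and is absorbed in the same way.
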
 

This would represent a breakthrough on a central open problem in learning theory.  Recent work of Blanc, Lange, Qiao, and Tan~\cite{BLQT21} gives a $\poly(n)\cdot s^{O(\log\log s)}$ time algorithm, improving on the prior state of the art of $n^{O(\log s)}$~\cite{EH89}.  Neither~\cite{EH89}'s nor~\cite{BLQT21}'s algorithm goes through testing.


It is well known and easy to see that proper learning algorithms yield comparably efficient testers~\cite{GGR98}. \Cref{thm:lower-bound-intro} provides an example of a converse; \violet{we find the existence of such a converse surprising, and are not aware of any previous examples.}

\paragraph{Reconstructors and testers for other properties.} 
Decision tree complexity is quantitatively related to numerous other complexity measures of boolean functions: Fourier degree, approximate degree, randomized and quantum query complexities, certificate complexity, block sensitivity, sensitivity, etc.  Our results therefore immediately yield new reconstructors and tolerant testers for these properties.  For example, we have the following:

\begin{corollary}[Reconstruction of low Fourier degree functions]  
\label{cor:reconstruct-degree} 
There is a randomized algorithm which, 
given queries to $f : \bn\to\bits$ and parameters $d\in \N$ and $\eps\in (0,1)$, provides query access to a fixed function $g : \bn\to\bits$ where 
\begin{itemize}
\item[$\circ$] $g$ has Fourier degree $O(d^7/\eps^2)$,
\item[$\circ$] $\dist(f,g) \le O(\opt_d) + \eps$ w.h.p., where $\opt_d$ denotes the distance of $f$ to closest $h : \bn\to\bits$ of Fourier degree $d$.
\item[$\circ$] Every query to $g$ is answered in $\poly(d,1/\eps)\cdot n\log n$ time and with $\poly(d,1/\eps)\cdot \log n$ queries to~$f$.
\end{itemize} 
\end{corollary}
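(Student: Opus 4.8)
The plan is to reduce \Cref{cor:reconstruct-degree} to \Cref{thm:reconstruct} using the classical fact that Fourier degree controls decision tree complexity. Recall that every $h : \bn \to \bits$ of Fourier degree $d$ is computed \emph{exactly} by a decision tree of depth $\poly(d)$ (e.g.\ $O(d^3)$, via the standard relationships between deterministic query complexity, block sensitivity, and degree), and hence by a decision tree of size $s \coloneqq 2^{\poly(d)}$. In particular $\opt_s \le \opt_d$: if $h$ is the degree-$d$ function closest to $f$, then the depth-$\poly(d)$ tree computing $h$ is a size-$s$ decision tree at distance exactly $\opt_d$ from $f$.

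First I would run the algorithm of \Cref{thm:reconstruct} on $f$ with this value of $s$ and accuracy parameter $\eps$, and set $g \coloneqq T$ for the tree $T$ it produces. The distance guarantee is then immediate: $\dist(f,g) \le O(\opt_s) + \eps \le O(\opt_d) + \eps$. The efficiency guarantee is likewise immediate after substituting $\log s = \poly(d)$: each query to $g$ is answered with $\poly((\log s)/\eps)\cdot\log n = \poly(d,1/\eps)\cdot\log n$ queries to $f$ and in $\poly(d,1/\eps)\cdot n\log n$ time.

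The one step that requires care is the first bullet --- that $g$ has \emph{low Fourier degree}, not merely small size --- since a decision tree of size $S$ can have depth as large as $S$ and thus Fourier degree as large as $S$. I would handle this by opening up the proof of \Cref{thm:reconstruct} and observing that the tree $T$ it builds in fact has bounded \emph{depth} $\Delta$, of the form $(\log s)^{O(1)}/\eps^{O(1)}$, consistent with its size $s^{O((\log s)^2/\eps^3)} = 2^{\Theta(\Delta)}$; combined with the elementary fact that a depth-$\Delta$ decision tree has Fourier degree at most $\Delta$ (its value on each leaf is a multilinear polynomial supported on the $\le \Delta$ coordinates queried along that root-to-leaf path), this gives $\deg(g) \le \Delta$. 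Substituting $\log s = \poly(d)$ yields Fourier degree $\poly(d)/\poly(\eps)$, and the precise exponent $O(d^7/\eps^2)$ follows by plugging in the exact power in ``degree $d$ $\Rightarrow$ decision tree depth $O(d^3)$'' together with the exact depth bound for $T$.

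I expect this last step --- extracting the depth (equivalently, the Fourier degree) bound on $T$, rather than just its size --- to be the only real obstacle; everything else is a direct substitution of parameters into \Cref{thm:reconstruct}, together with the bound $\opt_s \le \opt_d$. If one insisted on treating \Cref{thm:reconstruct} as a black box, the natural fallback is to take $g$ to be the depth-$\Delta'$ truncation of $T$ with $\Delta' = \lceil \log(|T|/\eps)\rceil$ (invoking \Cref{thm:reconstruct} with accuracy $\eps/2$): this is $\eps/2$-close to $T$ and has Fourier degree $\le \Delta'$, and a query to it at $x$ is answered by running $T$'s query procedure along the root-to-$x$ path only down to depth $\Delta'$.
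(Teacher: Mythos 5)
Your overall route matches the paper's: reduce to the decision-tree statement via the degree-to-depth relation $D(h) \le \deg(h)^3$, and observe that the tree produced by \Cref{thm:reconstruct} is a \emph{complete} tree of bounded depth, hence of bounded Fourier degree (the paper records this intermediate statement as \Cref{thm:reconstruct-depth}, so you are right that your ``truncation'' fallback is unnecessary). However, your final step --- that the exponent $O(d^7/\eps^2)$ ``follows by plugging in'' the bound $D(h) \le \deg(h)^3 = O(d^3)$ together with the depth bound on $T$ --- is a genuine gap. Naive substitution gives only $O(d^9/\eps^2)$: the depth of $T$ scales as roughly $(\log s)^3/\eps^2$ in terms of the size $s$ of the target tree, and taking $\log s = D(h) = O(d^3)$ gives $(d^3)^3 = d^9$. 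The paper explicitly points out that this weaker $d^9$ exponent is exactly what the blackbox substitution yields. To sharpen to $d^7$, one has to open up the proof of \Cref{lemma:BGLT} and observe that the depth of $T$ is in fact bounded by $O(D(h)^2 \cdot \Inf(h)/\eps^2)$ --- the starting potential $\NS_p(T^\star) \le p\cdot\Inf(T^\star)$ contributes a factor of $\Inf(h)$ rather than another factor of $D(h)$ --- and then invoke the separate fact $\Inf(h) \le \deg(h) = d$, giving $O\bigl((d^3)^2 \cdot d / \eps^2\bigr) = O(d^7/\eps^2)$. Without this refinement, your argument proves only the weaker $O(d^9/\eps^2)$ degree bound.
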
 

This  in turn yields a tolerant tester for Fourier degree. As in the case for decision trees, all prior testers for low Fourier degree~\cite{DLMORSW07,CGM11,CGM11-soda,BBM12,BH13,Bsh20} have an exponential dependence on $d$ in their query and time complexities. 

\Cref{table:other-measures} lists examples of measures for which we obtain new reconstruction algorithms, each of which in turn give new tolerant testers.


\vspace{5pt}
\begin{table}[h!]
\begin{adjustwidth}{-1in}{-1in}
\renewcommand{\arraystretch}{1.8}
\centering
\begin{tabular}{|c|c|c|}
\hline
\multirow{3}{*}{\sl Complexity measure}   & {\sl Assumption} & {\sl Guarantee}  \\ [-.5em] 
&~~Query access to $f$ that is~~&~~Query access to $g$ that is~~\\ [-.9em] 
&~~$\opt_d$-close to $h$ where:~~&~~$O(\opt_d+\eps)$-close to $f$ where:~~\\ [.2em] \hline \hline
Fourier degree & $\deg(h) \le d$ &~~$\deg(g) \le O(d^7/\eps^2)$~~ \\ [.2em] \hline 
Approximate degree &~~$\wt{\deg}(h) \le d$~~&~~$\wt{\deg}(g) \le O(d^{9}/\eps^2)$~~ \\ [.2em] \hline 
~~Randomized query complexity~~&~~$\mathrm{R}(h) \le d$~~&~~$\mathrm{R}(g) \le O(d^7/\eps^2)$~~ \\ [.2em] \hline 
~~Quantum query complexity~~&~~$\mathrm{Q}(h) \le d$~~&~~$\mathrm{Q}(g) \le {O(d^{10}/\eps^2)}$~~ \\ [.2em] \hline 
~~Certificate complexity~~&~~$\mathrm{C}(h) \le d$~~&~~$\mathrm{C}(g) \le O(d^5/\eps^2)$~~ \\ [.2em] \hline 
~~Block sensitivity~~&~~$\mathrm{bs}(h) \le d$~~&~~$\mathrm{bs}(g) \le O(d^8/\eps^2)$~~ \\ [.2em] \hline 
~~Sensitivity~~&~~$\mathrm{s}(h) \le d$~~&~~$\mathrm{s}(g) \le O(d^{13}/\eps^{2})$~~ \\ [.2em] \hline \end{tabular}
\end{adjustwidth}
  \captionsetup{width=.9\linewidth}
\caption{Performance guarantees of our reconstruction algorithms for various complexity measures. In each row, $\opt_d$ denotes the distance from $f$ to the closest function $h$ such that the complexity measure of that row for $h$ is bounded by $d$. In all cases, every query to $g$ is answered in $\poly(d,1/\eps)\cdot n\log n$ time with $\poly(d,1/\eps)\cdot \log n$ queries to $f$. }
\label{table:other-measures}
\vspace{-5pt} 
\end{table}
%
%

\subsection{Background and comparison with prior work} 
\label{sec:prior}

\violet{As already mentioned,~\Cref{thm:reconstruct} gives the first reconstruction algorithm for decision trees. The problem of testing decision trees, on the other hand, has been intensively studied. } 




\violet{\pparagraph{Testing decision trees.}}  Recent work of Bshouty~\cite{Bsh20} gives an algorithm, running in $\poly(s^s,1/\eps)\cdot n$ time and using $O((s\log s)/\eps)$ queries, that distinguishes between size-$s$ decision trees from functions that are $\eps$-far from size-$s$ decision trees.   Prior to~\cite{Bsh20},  Chakraborty, Garc{\'\i}a-Soriano, and Matsliah~\cite{CGM11} gave an $O((s\log s)/\eps^2)$-query algorithm, and before that Diakonikolas, Lee, Matulef, Onak, Rubinfeld, Servedio, and Wan~\cite{DLMORSW07} gave an $\tilde{O}(s^4/\eps^2)$-query algorithm.  Like \cite{Bsh20}'s algorithm, the algorithms of~\cite{CGM11,DLMORSW07} also run in $\poly(s^s,1/\eps)\cdot n$ time.\footnote{\violet{All these testers enjoy a weak form of tolerance: they are in fact able to distinguish between functions that are $O(\poly(\eps/s))$-close to size-$s$ decision trees from those that are $\eps$-far from size-$s$ decision trees. (Briefly, this is because their queries, while correlated, are each uniformly distributed.)}} 

Compared to these algorithms, our algorithm in~\Cref{cor:main} solves an incomparable problem with efficiency parameters that compare rather favorably with theirs.  Notably, our time and query complexities both depend {\sl polylogarithmically} on $s$ instead of exponentially and super-linearly respectively.



Turning to the parameterized setting, Kearns and Ron~\cite{KR00} gave a tester with time and query complexities $\poly(n^n, (\log s)^n)$ that distinguishes size-$s$ decision trees over $[0,1]^n$ from functions that are $(\frac1{2}-n^{-\Theta(n)})$-far from size-$\poly(2^n,s)$ decision trees.   The parameters of this result are such that one should think of the dimension `$n$' as being a constant rather than an asymptotic parameter.   


\paragraph{Property reconstruction.} 

Property reconstruction was introduced by Ailon, Chazelle, Comandur, and Liu~\cite{ACCL08}. (See also the work of Austin and Tao~\cite{AT10}, who termed such algorithms~``repair algorithms".)   Reconstruction has since been studied for a number of properties, including monotone functions~\cite{ACCL08,SS10,BGJJRW12}, hypergraph properties~\cite{AT10},  convexity~\cite{CS06}, expanders~\cite{KPS13}, Lipschitz functions~\cite{JR13}, graph connectivity and diameter~\cite{CGR13}, and error correcting codes~\cite{CFM14}.   Property reconstruction falls within the {\sl local computation algorithms} framework of Rubinfeld, Tamir, Vardi, and Xie~\cite{RTVX11}.

The paper of Blanc, Gupta, Lange, and Tan~\cite{BGLT-NeurIPS2}
 designs a decision tree learning algorithm that is amenable to {\sl learnability estimation}~\cite{KV18,BH18}:  given a training set $S$ of {\sl unlabeled} examples, the performance of this algorithm $\mathcal{A}$ trained on $S$---that is, the generalization error of the hypothesis that $\mathcal{A}$ would construct if we were to label all of $S$ and train $\mathcal{A}$ on it---can be accurately estimated by labeling only a small number of the examples in $S$.  Their techniques can be used to derive a reconstruction algorithm that achieves guarantees similar to those in~\Cref{thm:reconstruct}, but only for {\sl monotone} functions~$f$.  This limitation is inherent: as noted in~\cite{BGLT-NeurIPS2}, their algorithm is fails for non-monotone functions.  

\subsubsection{The work of Bhsouty and Haddad-Zaknoon}
\label{sec:Bshouty} 
Subsequent to the posting of our work to the ArXiv,  Bshouty and Haddad-Zaknoon~\cite{BHZ21} have given a tester that is closely related, but incomparable, to~\Cref{cor:main}.  Their tester: 
\begin{itemize}
\item[$\circ$] Makes $\poly(s,1/\eps)$ queries to $f$, runs in $\poly(n,1/\eps)$ time, and 
\item[$\circ$] Accepts w.h.p.~if $f$ is exactly a size-$s$ decision tree; 
\item[$\circ$] Rejects w.h.p.~if $f$ is $\eps$-far from size-$(s/\eps)^{O(\log(s/\eps))}$ decision trees.
\end{itemize} 

Comparing~\cite{BHZ21}'s tester to ours, their query complexity is independent of~$n$ (whereas ours has a $\log n$ dependence), and the size of decision trees in their reject condition is only $(s/\eps)^{O(\log(s/\eps))}$ (whereas we require $s^{O((\log s)^2/\eps^3)}$). 

On the other hand, our tester is tolerant and has query complexity that achieves a polylogarithmic instead of polynomial dependence on $s$.  Furthermore,~\cite{BHZ21} does not give a reconstruction algorithm, \violet{while that is the main contribution of our work.}

\subsection{Future directions} 
\label{sec:future}


We list a few concrete avenues for future work suggested by our results:


\begin{itemize}


\item[$\circ$] {\sl Tighter connections between testing and learning:} Our tester rejects functions that are $\Omega(\eps)$-far from $\mathrm{quasipoly}(s)$ decision trees, and~\Cref{thm:lower-bound-intro} shows that a tester that rejects functions that are $\Omega(\eps)$-far from size-$s$ decision trees would yield a comparably efficient algorithm for properly learning decision trees.  A concrete avenue for future work is to narrow this gap between $\mathrm{quasipoly}(s)$ and $s$, with the ultimate goal of getting them to match.   

There are also other ways in which \Cref{thm:lower-bound-intro} could be strengthened: Do {\sl non-tolerant} testers for decision trees yield proper learning algorithms?  Do tolerant testers yield proper learning algorithms with {\sl agnostic} guarantees?  


\item[$\circ$] {\sl Improved reconstruction algorithms and testers for other properties:} The reconstruction algorithms that we obtain for the properties listed in~\Cref{table:other-measures} follow by combining~\Cref{thm:reconstruct} with known relationships between these measures and decision tree complexity.  It would be interesting to obtain improved parameters by designing reconstruction algorithms that are tailored to each of these properties, without going through decision trees.  

The same questions can be asked of property testers, and about properties that are not known to be quantitatively related to decision tree size. Can we achieve similar exponential improvements in the time and query complexities of non-parameterized testers by relaxing to the parameterized setting? \Cref{thm:lower-bound-intro} could be viewed as suggesting that for certain properties, efficient algorithms may only be possible in the parameterized setting. 
\end{itemize} 

Finally, we mention that there remains a large gap in the known bounds on the query complexity of non-tolerant testing of decision trees in the non-parameterized setting: the current best upper bound is~$\tilde{O}(s)$~\cite{Bsh20,CGM11} whereas the current best lower bound is $\Omega(\log s)$~\cite{DLMORSW07,BBM12}.  It would be interesting to explore whether our techniques could be useful in closing this exponential gap. 

\paragraph{Notation.} All probabilities and expectations are with respect to the uniform distribution unless otherwise stated; we use boldface (e.g.~$\bx$) to denote random variables. For two functions $f,g : \bn\to\bits$, we write $\dist(f,g)$ to denote the quantity $\Pr[f(\bx)\ne g(\bx)]$.  We say that $f$ and $g$ are {\sl $\eps$-close} if $\Pr[f(\bx)\ne g(\bx)] \le \eps$, and {\sl $\eps$-far} otherwise.   

For a function $f : \bn\to\bits$, a decision tree $T$ over the same variables as $f$, and a node $v$ in $T$, we write $f_v$ to denote the subfunction of $f$ obtained by restricting $f$ according to the root-to-$v$ path in $T$.   We write $|v|$ to denote the depth of $v$ within $T$, and so the probability that a uniform random $\bx\sim \bn$ reaches $v$ is $2^{-|v|}$.

\section{Proofs of~\Cref{thm:reconstruct} and~\Cref{cor:main}} 

Our proof of~\Cref{thm:reconstruct} has two main components:  

\begin{itemize}
\item[$\circ$] A structural lemma about functions $f$ that are $\opt_s$-close to a size-$s$ decision tree $T^\star$.  While we have no information about the structure of this tree $T^\star$ that $f$ is $\opt_s$-close to, we will show that $f$ is $O(\opt_s + \eps)$-close to a tree $T^\diamond$ of size $S = S(s,\eps)$ with a very specific structure.  
 
\item[$\circ$] An algorithmic component that leverages this specific structure of $T^\diamond$ to show that for any input $x \in\bn$, the value of $T^\diamond(x)$ can be computed with only $\log S \cdot \log n$ queries to $f$.  
\end{itemize} 


\Cref{sec:structural} will be devoted to the structural lemma and~\Cref{sec:algorithmic} to the algorithmic component.  We prove~\Cref{thm:reconstruct} in~\Cref{sec:proof-of-reconstruct}, and we derive~\Cref{cor:main} as a simple consequence of~\Cref{thm:reconstruct} in~\Cref{sec:proof-of-tester}.

\subsection{Structural component of~\Cref{thm:reconstruct}}
\label{sec:structural} 

\begin{definition}[Noise sensitivity]
The {\sl noise sensitivity of $f : \bn\to\bits$ at noise rate $p$} is the quantity 
\[ \NS_p(f) \coloneqq \Pr[f(\bx) \ne f(\by)],\]
where $\bx\sim \bn$ is uniform random and $\by\sim_p\bx$ is a {\sl $p$-noisy copy of $\bx$}, obtained from $\bx$ by independently rerandomizing each coordinate with probability $p$. 
\end{definition} 

We assign each coordinate $i\in [n]$ of a function $f$ a {\sl score}, which measures the expected decrease in the noise sensitivity of $f$ if $x_i$ is queried:  

\begin{definition}[Score of a variable]
\label{def:score}
    Given a function $f:\bn \to \bits$, noise rate $p \in (0,1)$, and coordinate $i \in [n]$, the {\sl score} of $x_i$ is defined as
\[        \score_i(f, p) = \NS_{p}(f) - \Ex_{\bb\in\bits}\big[\NS_{p}(f_{x_i = \bb})\big].
\] 
\end{definition}

(Our notion of score is equivalent, up to scaling factors depending on $p$, to the notion of ``noisy influence" as in defined in O'Donnell's monograph~\cite{ODbook}. We use our definition of score as it simplifies our presentation.)  We are now ready to define the tree $T^\diamond$ described at the beginning of this section and state our structural lemma. 

\begin{definition} 
\label{def:top-down-tree} 
For a function $f : \bn\to\bits$, parameters $d\in \N$ and $p \in (0,1)$, we write $T^{d,p}_f$ to denote the complete decision tree of depth $d$ defined as follows: 
\begin{itemize} 
\item[$\circ$]  At every internal node $v$, query $x_i$ where $i\in [n]$ maximizes $\score_i(f_v,p)$.\footnote{Ties are arbitrarily broken; our results hold regardless of how ties are broken.}
\item[$\circ$]  Label every leaf $\ell$ with $\sign(\E[f_\ell])$. 
\end{itemize} 
\end{definition}

\begin{lemma}[Structural lemma]
\label{lemma:BGLT} 
Let $f : \bn\to\bits$ be $\opt_s$-close to a size-$s$ decision tree.  Then for $d = O((\log s)^3/\eps^3)$ and $p = \eps/(\log s)$, we have $\dist(f,T^{d,p}_f) \le O(\opt_s)+\eps$. 
\end{lemma}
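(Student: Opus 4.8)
The plan is to build the greedy tree $T^{d,p}_f$ in a top-down fashion and argue that the restrictions it induces rapidly ``simplify'' $f$ toward the unknown size-$s$ tree $T^\star$. The key object to track is a potential function measuring how far $f_v$ is from being a small decision tree; the natural choice is a smoothed/noise-sensitivity-based proxy, since score is defined via $\NS_p$. Concretely, I would first recall (or reprove) the standard fact that if $g$ is $\delta$-close to a size-$s$ decision tree, then $\NS_p(g) \le O(\delta) + O(p\log s)$, i.e.\ functions near small trees have small noise sensitivity at rate $p \approx \eps/\log s$. Second, and this is the crux, I would show a ``progress'' statement: if $f_v$ is still $\Omega(\opt_s + \eps)$-far from the constant functions, then the \emph{best} coordinate has score at least roughly $\NS_p(f_v)^2 / \poly(\log s)$ — this is the decision-tree analogue of the edge-isoperimetric / influence lower bound used in the Blanc--Lange--Tan quasipolynomial-time proper learner (hence the label \texttt{BGLT}), and it follows by comparing $f_v$ against the variables queried near the root of $T^\star_v$ and using that $T^\star_v$ has at most $s$ leaves.

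Given the progress lemma, the third step is a potential/amortization argument over the complete tree of depth $d$. I would define $\Phi(v) = \NS_p(f_v)$ (or a closely related quantity) and track its expectation over a random root-to-leaf path, $\E_{\boldsymbol\ell}[\Phi(\boldsymbol\ell)]$, using the fact that $\score_i(f_v,p) = \NS_p(f_v) - \E_{\bb}[\NS_p(f_{v, x_i=\bb})]$ is exactly the expected one-step drop. Summing the telescoping drops along the path and invoking the progress lemma gives that after $d = O((\log s)^3/\eps^3)$ levels, the expected potential at the leaf is at most $O(\opt_s) + \eps$; the cubic dependence on $1/\eps$ and the $(\log s)^3$ come from (i) one $\log s$ to account for the $\poly(\log s)$ loss in the progress lemma, (ii) one factor from the $\NS_p(\cdot)^2$ being squared so we need $1/\eps$-many ``halvings'', and (iii) the $p = \eps/\log s$ scaling relating score to influence. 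Finally, the fourth step converts small leaf potential into small error: at any leaf $\ell$ where $\NS_p(f_\ell)$ is small, $f_\ell$ is close to a constant, so labeling $\ell$ with $\sign(\E[f_\ell])$ incurs error $O(\NS_p(f_\ell)/p)$ or similar; averaging over leaves and adding back the $O(\opt_s)$ slack from the first step yields $\dist(f, T^{d,p}_f) \le O(\opt_s) + \eps$.

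The main obstacle I anticipate is the progress lemma — establishing the quantitative lower bound on the maximum score in terms of $\NS_p(f_v)$ and $s$. The subtlety is that we know nothing about the structure of $T^\star$, only that it exists and has size $s$; we must extract a ``good variable for $f$'' from a ``good variable for $T^\star$'', and the two functions differ by $\opt_s$ in distance, which perturbs noise sensitivities. Handling this cleanly requires (a) a robust version of the statement that some variable queried in the top few levels of $T^\star_v$ carries noticeable score, and (b) controlling how the $\opt_s$ discrepancy propagates through restrictions — a union bound over the depth-$d$ tree would be catastrophic, so the argument must be made ``in expectation over the path'' rather than worst-case over leaves. I would expect the cleanest route is to port the relevant lemma essentially verbatim from \cite{BLQT21}/\cite{BLT-ITCS} (the abbreviation \texttt{BGLT} and the remark that the score is ``noisy influence'' up to scaling both point there), adapting the accounting so the additive $\opt_s$ term survives with only a constant-factor blowup rather than accumulating with depth.
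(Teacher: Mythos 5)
Your high-level architecture matches the paper's: take $\NS_p(f_\bell)$ averaged over a random root-to-leaf path as the potential, prove a per-step progress lemma lower-bounding the best score, invoke $\NS_p(f)\le p\log s + 2\opt_s$ to bound the total potential, and convert small leaf potential into small error via sign-rounding at the leaves. The ``in expectation over the path'' observation and the two-case accounting you describe are exactly how the paper proceeds.

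The genuine gap is the progress lemma, which you correctly flag as ``the crux'' but do not actually prove. Your stated plan is to ``port the relevant lemma essentially verbatim'' from the earlier Blanc--Lange--Tan line of work, but that is precisely the step that does not survive: the structural lemma claimed in~\cite{BGLT-NeurIPS1} rested on the ``two-function OSSS inequality for semi-metrics'' from~\cite{OSSS05}, which was subsequently shown to be false (see the paper's Remark~\ref{rem:differences} and~\cite{Qia21,OD21}). So there is no prior lemma one can lift verbatim, and the vague sketch ``compare $f_v$ against the variables queried near the root of $T^\star_v$'' does not by itself produce a quantitative score bound. The paper's actual fix — and the technical content of~\Cref{lemma:BGLT} — is to apply the \emph{correct, single-function} OSSS inequality (\Cref{thm:OSSS-non-buggy}) not to $f$ directly but to its $p$-smoothed version $\tilde f^{(p)}$, and then to push the smoothing through both sides: the left-hand side $\mathrm{CoVr}(T^\star,\tilde f)$ is lower-bounded by $\Var(\tilde f) - 2\,\E[|T^\star-\tilde f|]$, while the right-hand side terms $\E[|\tilde f(\bx)-\tilde f(\bx^{\sim i})|]$ are upper-bounded by $\sqrt{4\,\score_i(f,p)/p}$ via a Fourier-analytic comparison of discrete derivatives of $\tilde f$ against $f$. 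This ``smooth first, then apply one-function OSSS'' maneuver is what you are missing; without it, the score lower bound has no proof.

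A secondary mismatch: your progress lemma is phrased as $\max_i\score_i \gtrsim \NS_p(f_v)^2/\poly(\log s)$, whereas the paper's~\Cref{lem:noisy-OSSS-truncated-agnostic-fixed} gives a bound in terms of $\Var(\tilde f_v)$ minus $\E[|T^\star - \tilde f_v|]$ — a signed quantity that can be negative, which is exactly what triggers the Case~2 branch (``stop: already close to $T^\star$''). Tying the lower bound to $\NS_p(f_v)$ alone makes the two-case accounting awkward, since $\NS_p(f_v)$ is bounded above by roughly $\eps + 2\opt_s$, which is not obviously enough slack per step; the paper's variance-minus-distance formulation is what lets the argument cleanly terminate in $O((\log s)^3/\eps^3)$ steps regardless of $\opt_s$.
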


\section{Proof of~\Cref{lemma:BGLT}} 
\label{appendix} 
\paragraph{Noise-sensitivity-based potential function.} First, we introduce the potential function that will facilitate our proof of~\Cref{lemma:BGLT}. Every decision tree $T$ naturally induces a distribution over its leaves where each leaf $\ell$ receives weight $2^{-|\ell|}$.  We write $\bell\sim T$ to denote a draw of a leaf of $T$ according to this distribution. 

\begin{definition}[Noise sensitivity of $f$ with respect to a tree $T$] 
\label{def:NS-wrt-T} 
Let $f :\bn\to\bits$ be a function, $p\in (0,1)$, and $T$ be a decision tree.  The {\sl noise sensitivity of $f$ at noise rate $p$ with respect to $T$} is the quantity 
\[ \NS_p(f,T) \coloneqq \Ex_{\bell\sim T} \big[ \NS_p(f_{\bell})\big].\] 
\end{definition} 
Note that if $T$ is the empty tree, then $\NS_p(f,T)$ is simply $\NS_p(f)$, the noise sensitivity of $f$ at noise rate $p$.  The following proposition is a bound on $\NS_p(f)$ that takes into account its distance from a small decision tree:  

\begin{proposition}[Noise sensitivity of $f$] 
\label{prop:NS-of-f} 
Let $f : \bn\to\bits$ be $\opt_s$-close to a size-$s$ decision tree $T$.  For all $p\in (0,1)$, we have $\NS_p(f) \le p\log s + 2\,\opt_s.$
\end{proposition}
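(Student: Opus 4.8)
The plan is to prove Proposition~\ref{prop:NS-of-f} by decomposing the noise-sensitivity estimate according to a \emph{hybrid argument}: first bound $\NS_p$ of the size-$s$ decision tree $T$ itself, and then account for the error incurred by passing from $f$ to $T$. For the second part, note that if $g$ and $h$ are any two functions with $\dist(g,h)\le \delta$, then for a uniform $\bx$ and a $p$-noisy copy $\by\sim_p\bx$, both $\bx$ and $\by$ are individually uniformly distributed, so $\Pr[g(\bx)\ne g(\by)]$ and $\Pr[h(\bx)\ne h(\by)]$ differ by at most $2\delta$ by a union bound over the events $g(\bx)\ne h(\bx)$ and $g(\by)\ne h(\by)$. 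Applying this with $g=f$, $h=T$, $\delta=\opt_s$ gives $\NS_p(f)\le \NS_p(T) + 2\,\opt_s$, which matches the $2\,\opt_s$ term in the statement. So the crux is to show $\NS_p(T)\le p\log s$ for any size-$s$ decision tree $T$.

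For the bound $\NS_p(T)\le p\log s$, the plan is as follows. Draw $\bx$ uniform and $\by\sim_p\bx$. The event $T(\bx)\ne T(\by)$ requires that $\bx$ and $\by$ reach different leaves of $T$, and for that to happen, at least one coordinate queried along the root-to-leaf path of $\bx$ must have been rerandomized to a different value. Let $\bell$ be the (random) leaf reached by $\bx$; then $|\bell|$ is its depth, and conditioned on the path, the probability that some queried coordinate along that path is flipped is at most $p\cdot|\bell|$ by a union bound (each of the $|\bell|$ queried coordinates is rerandomized with probability $p$, and only flips to the opposite value with probability $p/2 \le p$; either constant works for the stated bound). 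Hence
\[
\NS_p(T) \;=\; \Pr[T(\bx)\ne T(\by)] \;\le\; \Ex_{\bell}\big[\,p\cdot |\bell|\,\big] \;=\; p\cdot \Ex_{\bell\sim T}[\,|\bell|\,].
\]
Finally, $\Ex_{\bell\sim T}[|\bell|]$ is the average leaf depth of $T$ under the distribution that weights leaf $\ell$ by $2^{-|\ell|}$, and since $T$ has at most $s$ leaves, this expected depth is at most $\log s$ — this is exactly the statement that the binary entropy / Kraft inequality forces $\sum_\ell 2^{-|\ell|}|\ell| \le \log(\#\text{leaves}) \le \log s$, because subject to $\sum_\ell 2^{-|\ell|}=1$ the quantity $\sum_\ell 2^{-|\ell|}|\ell|$ is the entropy of the leaf distribution, maximized (at $\log(\#\text{leaves})$) by the uniform distribution on the leaves, i.e.\ by a complete tree. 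Combining the two parts yields $\NS_p(f)\le p\log s + 2\,\opt_s$.

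I expect the main obstacle to be the clean justification that $\Ex_{\bell\sim T}[|\bell|]\le \log s$: one has to be careful that the relevant distribution on leaves is the $2^{-|\ell|}$-weighted one (which is precisely the distribution induced by a uniform random input), so that $\Ex_{\bell\sim T}[|\bell|]$ really is the Shannon entropy of that distribution, and then invoke that entropy of a distribution supported on $\le s$ outcomes is at most $\log s$. The hybrid/union-bound steps are routine; the only minor care needed there is observing that a $p$-noisy copy rerandomizes each coordinate independently (so the ``flip'' events along a path are independent, though a union bound suffices and independence is not even needed), and that both $\bx$ and $\by$ are marginally uniform so the $\dist(f,T)\le\opt_s$ transfer costs only $2\,\opt_s$ rather than something larger.
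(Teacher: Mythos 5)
Your proposal is correct, and the first half coincides exactly with the paper: you bound $\NS_p(f)\le \NS_p(T)+2\,\opt_s$ via the triangle inequality (equivalently a union bound over the events $f(\bx)\ne T(\bx)$ and $f(\by)\ne T(\by)$), using that both $\bx$ and $\by$ are marginally uniform. Where you diverge is in establishing $\NS_p(T)\le p\log s$. The paper does this by citing two off-the-shelf facts: the Fourier-analytic inequality $\NS_p(T)\le p\cdot\Inf(T)$ (from O'Donnell, Exercise 2.42) and the bound $\Inf(T)\le\log s$ for size-$s$ trees (from~\cite{OS07}). You instead give a self-contained, first-principles argument: conditioning on $\bx=x$, the event $T(\by)\ne T(x)$ forces some coordinate queried along $x$'s root-to-leaf path to be rerandomized to a different value, so a union bound gives $\Pr[T(\by)\ne T(x)\mid\bx=x]\le p\cdot|\ell(x)|$ (indeed $\tfrac{p}{2}\cdot|\ell(x)|$, but $p$ suffices), hence $\NS_p(T)\le p\cdot\Ex_{\bell\sim T}[|\bell|]$; and $\Ex_{\bell\sim T}[|\bell|]=\sum_\ell 2^{-|\ell|}|\ell|$ is precisely the Shannon entropy of the leaf distribution, which is at most $\log(\#\text{leaves})\le\log s$. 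Both routes yield the same quantitative bound. Yours is more elementary (no Fourier analysis needed) and, as you note, actually gives the slightly sharper $\tfrac{p}{2}\log s$ on the decision-tree term; the paper's is shorter given the cited machinery and passes through $\Inf(T)$, which can in general be strictly smaller than the expected leaf depth. Incidentally, the entropy fact $\Ex_{\bell\sim T}[|\bell|]\le\log s$ that you prove is the very same fact the paper invokes (without the entropy phrasing) later, in its proof of Lemma~\ref{lem:noisy-OSSS-truncated-agnostic-fixed}, when bounding $\sum_i\lambda_i(T)$.
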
 

\begin{proof} 
Let $\bx\sim\bn$ be uniform random, $\by\sim_p \bx$ be a $p$-noisy copy of $\bx$, and $\bx^{\oplus i}$ denote $\bx$ with its $i$-th coordinate flipped.  We first observe that 
\begin{align*} 
\NS_p(f) &= \Pr[f(\bx)\ne f(\by)] \\
&\le \Pr[f(\bx)\ne T(\bx)] + \Pr[T(\bx) \ne T(\by)] + \Pr[T(\by) \ne f(\by)] \\
&= \NS_p(T) + 2\,\opt_s. 
\end{align*} 
To bound $\NS_p(T)$, we use the inequality $\NS_p(T) \le p\cdot \Inf(T)$ where $\Inf(T) \coloneqq \sum_{i=1}^n \Pr[f(\bx) \ne f(\bx^{\oplus i})]$ is the total influence of $T$~\cite[Exercise 2.42]{ODbook}, along with the bound $\Inf(T) \le \log s$ (see e.g.~\cite{OS07}). 
\end{proof} 

%
%
%
%
%

We prove~\Cref{lemma:BGLT} by quantifying the difference between $\NS_p(f,T^{j+1,p}_f)$ and $\NS_p(f,T^{j,p}_f)$: we show that for every $j\in\N$, either $\dist(f,T^{j,p}_f)\le O(\opt_s + \eps)$ or it must be the case that $\NS_p(f,T^{j+1,p}_f)$ is significantly smaller than $\NS_p(f,T^{j,p}_f)$.  Since $\NS_p(f,T) \ge 0$ for all trees~$T$, the second case can only happen so many times before we fall into the first case.  

We will need the following result from~\cite{OSSS05}: 


\begin{theorem}[Theorem 3.2 of~\cite{OSSS05}] 
\label{thm:OSSS-non-buggy} 
Let $T : \bn \to \bits$ be a decision tree.  For all functions $g : \bn\to \R$, writing $\bx,\bx'\sim\bn$ to denote uniform random and independent inputs and $\bx^{\sim i}$ to denote $\bx$ with its $i$-th coordinate rerandomized, 
\[ \mathrm{CoVr}(T,g) \le \sum_{i=1}^n \lambda_i(T) \cdot \Ex_{\bx} \big[|g(\bx)-g(\bx^{\sim i})|\big],\] 
where 
\begin{align*}
\mathrm{CoVr}(T,f) &\coloneqq \Ex_{\bx,\bx'}\big[|T(\bx)-g(\bx')|\big] - \Ex_{\bx}\big[|T(\bx)-g(\bx)|\big], \\
\lambda_i(T) &\coloneqq \Pr[\,\text{$T$ queries $\bx_i$}\,].
\end{align*}
\end{theorem}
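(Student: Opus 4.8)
The plan is to prove the stated inequality directly, by induction on the structure of the decision tree $T$ (this is the original argument of~\cite{OSSS05}). In the base case $T$ is a single leaf computing a constant $c$, so $\Ex_{\bx,\bx'}[|T(\bx)-g(\bx')|]=\Ex_{\bx'}[|c-g(\bx')|]=\Ex_{\bx}[|c-g(\bx)|]$ and hence $\mathrm{CoVr}(T,g)=0$, which is at most the right-hand side since every summand there is nonnegative.

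For the inductive step, suppose the root of $T$ queries $x_1$ (without loss of generality), with subtrees $T_{-1}$ and $T_1$ on the variables $x_2,\dots,x_n$, and for $b\in\bits$ let $g_b$ denote the restriction of $g$ obtained by fixing $x_1=b$. Conditioning on the (independent, uniform) values of $\bx_1$ and $\bx_1'$ expresses $\mathrm{CoVr}(T,g)$ as $\tfrac14\sum_{b,b'\in\bits}\Ex[\,|T_b-g_{b'}|\,]-\tfrac12\sum_{b\in\bits}\Ex[\,|T_b-g_b|\,]$, where the expectations on coordinates $2,\dots,n$ are over independent uniform inputs (two inputs for the first term, one for the second). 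The diagonal part $b'=b$ of this expression is exactly $\tfrac14\sum_{b}\mathrm{CoVr}(T_b,g_b)$, which we bound using the inductive hypothesis applied to each $T_b$. The crux is to handle the off-diagonal part $\tfrac14\sum_{b}\big(\Ex[\,|T_b-g_{-b}'|\,]-\Ex[\,|T_b-g_b|\,]\big)$, where $g_{-b}'$ means $g_{-b}$ evaluated at the second, independent input.

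The key maneuver is to add and subtract the ``half-coupled'' term $\Ex[\,|T_b-g_{-b}|\,]$ (i.e.\ feed $T_b$ and $g_{-b}$ the same input), rewriting the $b$-th off-diagonal summand as $\mathrm{CoVr}(T_b,g_{-b})+\Ex\big[\,|T_b-g_{-b}|-|T_b-g_b|\,\big]$. The reverse triangle inequality bounds the second piece by $\Ex[\,|g_{-b}-g_b|\,]$; averaging this over $b$, and using the elementary identity $\Ex_{\bx}[|g(\bx)-g(\bx^{\sim i})|]=\tfrac12\Ex_{\bx}[|g(\bx)-g(\bx^{\oplus i})|]$ (rerandomizing a coordinate flips it with probability $\tfrac12$) together with $\lambda_1(T)=1$, produces exactly the $i=1$ term of the target right-hand side. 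The first piece $\mathrm{CoVr}(T_b,g_{-b})$ is again bounded by the inductive hypothesis, now applied to $T_b$ with the function $g_{-b}$.

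It then remains to collect, for each coordinate $i\ge 2$, the four inductive contributions: the two ``diagonal'' ones $\tfrac14\lambda_i(T_b)\Ex[|g_b-g_b^{\sim i}|]$ and the two ``off-diagonal'' ones $\tfrac14\lambda_i(T_b)\Ex[|g_{-b}-g_{-b}^{\sim i}|]$. Regrouping by which restriction appears, the coefficient of $\Ex[|g_b-g_b^{\sim i}|]$ is $\tfrac14(\lambda_i(T_1)+\lambda_i(T_{-1}))$ for each $b$, and then the identities $\Ex_{\bx}[|g(\bx)-g(\bx^{\sim i})|]=\tfrac12\sum_b\Ex[|g_b-g_b^{\sim i}|]$ and $\lambda_i(T)=\tfrac12(\lambda_i(T_1)+\lambda_i(T_{-1}))$ (both valid for $i\ge 2$) fold everything into exactly $\lambda_i(T)\,\Ex_{\bx}[|g(\bx)-g(\bx^{\sim i})|]$, completing the induction. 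I expect this final accounting to be the step requiring the most care: one must track which subtree-query probability $\lambda_i(T_b)$ pairs with which restriction $g_b$ or $g_{-b}$ so that the diagonal and off-diagonal contributions combine to reconstitute the weights $\lambda_i(T)$ precisely; the conceptual heart, by contrast, is the ``add and subtract the half-coupled term'' trick that cleanly isolates the first-coordinate contribution from the recursion.
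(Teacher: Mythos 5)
The paper states this result as Theorem~3.2 of \cite{OSSS05} and cites it without proof, so there is no in-paper argument to compare against; you are supplying the proof that the paper outsources to the reference. Your inductive reconstruction is correct and is indeed the original OSSS approach. The decomposition of $\mathrm{CoVr}(T,g)$ after conditioning on $(\mathbf{x}_1,\mathbf{x}_1')$ into a diagonal piece $\tfrac14\sum_b\mathrm{CoVr}(T_b,g_b)$ and an off-diagonal piece $\tfrac14\sum_b\big(\mathbb{E}[|T_b-g_{-b}'|]-\mathbb{E}[|T_b-g_b|]\big)$ relies on splitting the single coupled term $-\tfrac12\sum_b\mathbb{E}[|T_b-g_b|]$ evenly between the two pieces, which you do implicitly and correctly. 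The add-and-subtract of the half-coupled term $\mathbb{E}[|T_b-g_{-b}|]$ followed by the reverse triangle inequality cleanly yields $\tfrac14\sum_b\mathbb{E}[|g_{-b}-g_b|] = \tfrac12\mathbb{E}[|g(\mathbf{x})-g(\mathbf{x}^{\oplus 1})|] = \mathbb{E}[|g(\mathbf{x})-g(\mathbf{x}^{\sim 1})|] = \lambda_1(T)\cdot\mathbb{E}[|g(\mathbf{x})-g(\mathbf{x}^{\sim 1})|]$, and the final regrouping for $i\ge 2$ via $\lambda_i(T)=\tfrac12(\lambda_i(T_1)+\lambda_i(T_{-1}))$ and $\mathbb{E}[|g(\mathbf{x})-g(\mathbf{x}^{\sim i})|]=\tfrac12\sum_b\mathbb{E}[|g_b-g_b^{\sim i}|]$ reconstitutes the right-hand side exactly. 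One hypothesis worth making explicit: the argument uses the standard read-once convention for decision trees (each variable queried at most once per root-to-leaf path), which is what guarantees $\lambda_1(T)=1$ and makes the recursive bookkeeping for $i\ge 2$ exact rather than an inequality.
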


By applying~\Cref{thm:OSSS-non-buggy} to a suitably smoothened version of $f$, we are able to derive a lower bound on the score of the highest-scoring variable of $f$.  

\begin{definition}[$p$-smoothed version of $f$]
For $f : \bn\to\bits$ and $p\in (0,1)$, the {\sl $p$-smoothed version of $f$} is the function $\tilde{f}^{(p)} : \bn\to [-1,1]$, 
\[ \tilde{f}^{(p)}(x) = \Ex_{\by\sim_p x}[f(\by)] = \sum_{S\sse [n]} (1-p)^{|S|} \wh{f}(S)\prod_{i\in S}x_i,\] 
where the $\wh{f}(S)$ is the $S$-th Fourier coefficients of $f$.  When $p$ is clear from context, we write $\tilde{f}$.
\end{definition}

\begin{lemma}[Score of the highest-scoring variable]
\label{lem:noisy-OSSS-truncated-agnostic-fixed}
Let $f : \bits^n \to \bits$ be a function, $p\in (0,1)$, and $\tilde{f} = \tilde{f}^{(p)}$ be its $p$-smoothed version.  For all size-$s$ decision trees $T : \bn\to \bits$, 
\[ \max_{i\in [n]}\big\{\sqrt{\score_i(f,p)}\big\} \ge \frac{\sqrt{p}}{\log s} \cdot \Big(\lfrac1{2}\Var(\tilde{f}) - \E\big[|T(\bx)-\tilde{f}(\bx)|\big]\Big).\]
\end{lemma}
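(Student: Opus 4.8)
The plan is to apply \Cref{thm:OSSS-non-buggy} with the decision tree being the given size-$s$ tree $T$ and the real-valued test function being $g = \tilde f$, the $p$-smoothed version of $f$. The left-hand side $\mathrm{CoVr}(T,\tilde f) = \Ex_{\bx,\bx'}[|T(\bx)-\tilde f(\bx')|] - \Ex_{\bx}[|T(\bx)-\tilde f(\bx)|]$ should be related to $\Var(\tilde f)$ minus an error term that accounts for how close $\tilde f$ is to $T$. Specifically, I would argue that $\Ex_{\bx,\bx'}[|T(\bx)-\tilde f(\bx')|] \ge \tfrac12\Var(\tilde f) - \E[|T-\tilde f|]$ or something in this spirit: since $T$ is $\bits$-valued, $\Ex_{\bx'}[|T(\bx)-\tilde f(\bx')|]$ for a fixed $\bx$ with $T(\bx)=1$ equals $\Ex[|1-\tilde f|] = 1 - \E[\tilde f]$ (using $\tilde f \in [-1,1]$), and similarly $=1+\E[\tilde f]$ when $T(\bx)=-1$; averaging and comparing to $\E[|T(\bx)-\tilde f(\bx)|]$ should leave a term controlled below by the variance of $\tilde f$ after subtracting $2\E[|T-\tilde f|]$. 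I would set this up carefully so the constant $\tfrac12$ emerges, noting $\Var(\tilde f) \le 1 - (\E\tilde f)^2$ type manipulations, or more directly that $\E[|\tilde f - \E\tilde f|]$ lower bounds relate to $\Var(\tilde f)$ via $\E[|Z|] \ge \E[Z^2]$ when $|Z|\le 1$.

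Next I would bound the right-hand side $\sum_i \lambda_i(T)\cdot \Ex_{\bx}[|\tilde f(\bx) - \tilde f(\bx^{\sim i})|]$. Here $\lambda_i(T) = \Pr[T \text{ queries } \bx_i]$, so $\sum_i \lambda_i(T) = \E[\text{depth reached in }T] \le \log s$ by the standard bound that a size-$s$ decision tree has average depth at most $\log s$ (indeed $\sum_\ell 2^{-|\ell|}|\ell| \le \log s$ by Kraft/concavity). For the per-coordinate smoothness term, I would show $\Ex_{\bx}[|\tilde f(\bx) - \tilde f(\bx^{\sim i})|]$ is bounded in terms of $\sqrt{\score_i(f,p)}$. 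This is the crux: $\tilde f(\bx)-\tilde f(\bx^{\sim i})$ is, up to the $(1-p)$ Fourier damping, the "derivative" of $f$ in direction $i$ smoothed, and its second moment $\Ex_{\bx}[(\tilde f(\bx)-\tilde f(\bx^{\sim i}))^2]$ should work out (via Parseval on the Fourier expansion of $\tilde f$, collecting coefficients $\widehat f(S)$ with $i\in S$ weighted by $(1-p)^{2|S|}$) to be comparable to $\score_i(f,p)/p$ up to constants — using that the score, being the drop in noise sensitivity from querying $x_i$, has a clean Fourier formula $\score_i(f,p) = \sum_{S\ni i} (1-p)^{|S|-1}\cdot(\text{something})\cdot\widehat f(S)^2 \cdot 2$-ish, and comparing the $(1-p)^{2|S|}$ weights to $(1-p)^{|S|}$ weights (they only differ by a factor bounded since $(1-p)^{|S|}\le 1$). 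Then Cauchy–Schwarz converts the second-moment bound into an $\ell_1$ bound: $\Ex[|\tilde f(\bx)-\tilde f(\bx^{\sim i})|] \le \sqrt{\Ex[(\cdots)^2]} \lesssim \sqrt{\score_i(f,p)/p}$.

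Putting it together: $\tfrac12\Var(\tilde f) - \E[|T-\tilde f|] \le \mathrm{CoVr}(T,\tilde f) \le \sum_i \lambda_i(T)\cdot O(\sqrt{\score_i(f,p)/p}) \le \max_i\sqrt{\score_i(f,p)} \cdot O(1/\sqrt p) \cdot \sum_i \lambda_i(T) \le \max_i\sqrt{\score_i(f,p)}\cdot O((\log s)/\sqrt p)$, and rearranging gives the claimed inequality $\max_i \sqrt{\score_i(f,p)} \ge \tfrac{\sqrt p}{\log s}(\tfrac12\Var(\tilde f) - \E[|T-\tilde f|])$, up to tracking that the constants absorb correctly into the stated form.

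The main obstacle I anticipate is the precise Fourier-analytic comparison in the second paragraph — relating $\Ex_{\bx}[(\tilde f(\bx)-\tilde f(\bx^{\sim i}))^2]$ to $\score_i(f,p)$ with the right power of $p$. Both quantities are sums over Fourier levels $S \ni i$ but with slightly different $p$-dependent weights ($(1-p)^{2|S|}$ for the smoothed-derivative second moment versus the weights implicit in the noise-sensitivity drop), and one must check the ratio of these weight sequences is bounded by an absolute constant (or by $O(1/p)$ as needed) uniformly over $|S|$; a clean way is to note $\NS_p(f) - \NS_p(f_{x_i=b})$ averaged over $b$ has Fourier expression $\sum_{S\ni i}\widehat f(S)^2\bigl(1-(1-2p)^{|S|}\bigr) - (\text{same with }|S|\!-\!1)$-type telescoping, and to lower/upper bound $1-(1-2p)^{|S|}$ against $p|S|$ and $(1-p)^{2|S|}$ accordingly. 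The $\Var(\tilde f)$-versus-$\mathrm{CoVr}$ step is more routine but still requires care with the $[-1,1]$ range of $\tilde f$ and the absolute values; I would keep the constant slack generous since the lemma only needs the $\tfrac12$ and an error term of the right shape.
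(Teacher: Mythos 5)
Your plan is essentially the paper's proof: apply \Cref{thm:OSSS-non-buggy} with $g=\tilde f$, lower-bound $\mathrm{CoVr}(T,\tilde f)$ by a variance-minus-error term, and upper-bound $\Ex_{\bx}[|\tilde f(\bx)-\tilde f(\bx^{\sim i})|]$ by Jensen plus a Fourier identity, then use $\sum_i\lambda_i(T)\le\log s$. Two clarifications on the details you were unsure about. For the $\mathrm{CoVr}$ step the paper avoids your case-analysis on $T(\bx)$ and simply uses the triangle inequality, $\E[|T(\bx)-\tilde f(\bx')|]\ge\E[|\tilde f(\bx)-\tilde f(\bx')|]-\E[|T(\bx)-\tilde f(\bx)|]$, combined with $\E[|\tilde f(\bx)-\tilde f(\bx')|]\ge\tfrac12\E[(\tilde f(\bx)-\tilde f(\bx'))^2]=\Var(\tilde f)$ (valid since $|\tilde f(\bx)-\tilde f(\bx')|\le 2$), giving $\mathrm{CoVr}(T,\tilde f)\ge\Var(\tilde f)-2\E[|T(\bx)-\tilde f(\bx)|]$, whose factor of $2$ is exactly what produces the $\tfrac12$ in the statement after dividing through. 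For the Fourier step you flagged as the main obstacle, no level-by-level comparison of weight sequences is needed: under the paper's ``rerandomize with probability $p$'' convention one has the exact identity $\score_i(f,p)=\tfrac{p}{2}\sum_{S\ni i}(1-p)^{|S|-1}\widehat f(S)^2$, equivalently $\Ex_{\bx,\by\sim_p\bx}[D_if(\bx)D_if(\by)]=2\score_i(f,p)/p$ (the paper cites this from \cite{BGLT-NeurIPS1}), while $\Ex_{\bx}[D_i\tilde f(\bx)^2]=\sum_{S\ni i}(1-p)^{2|S|}\widehat f(S)^2\le\sum_{S\ni i}(1-p)^{|S|-1}\widehat f(S)^2$ termwise, so Jensen yields $\Ex_{\bx}[|\tilde f(\bx)-\tilde f(\bx^{\sim i})|]\le\sqrt{4\,\score_i(f,p)/p}$ cleanly.
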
 
  
\begin{proof} 
Applying~\Cref{thm:OSSS-non-buggy} with `$g$' being the $p$-smoothed version $\tilde{f}$ of $f$, we have 
\begin{equation}
\label{eq:OSSS-non-buggy-for-us} \mathrm{CoVr}(T,\tilde{f}) \le \sum_{i=1}^n \lambda_i(T) \cdot \E[|\tilde{f}(\bx)-\tilde{f}(\bx^{\sim i})|].\end{equation} 
We first lowerbound the LHS of~\Cref{eq:OSSS-non-buggy-for-us}.  For $\bx,\bx'\sim \bn$ uniform and independent,
\begin{align} 
\mathrm{CoVr}(T,\tilde{f}) &= \E\big[|T(\bx)-\tilde{f}(\bx')|\big] - \E\big[|T(\bx)-\tilde{f}(\bx)|\big] \tag*{(Definition of $\mathrm{CoVr}$)} \nonumber \\
&\ge \E\big[|\tilde{f}(\bx)-\tilde{f}(\bx')|\big] - 2\E\big[|T(\bx)-\tilde{f}(\bx)|\big] \tag*{(Triangle inequality)}\nonumber \\
&\ge \lfrac1{2}\E\big[(\tilde{f}(\bx)-\tilde{f}(\bx'))^2\big] - 2\E\big[|T(\bx)-\tilde{f}(\bx)|\big] \tag*{($\tilde{f}$ is $[-1,1]$-valued)}\nonumber \\
&\ge \Var(\tilde{f}) - 2\E\big[|T(\bx)-\tilde{f}(\bx)|\big]. \label{eq:cov-lb-fixed}  
\end{align} 
For a function $g : \bn\to\R$, its {\sl $i$-th discrete derivative} is the function 
\[ (D_ig)(x) \coloneqq \lfrac1{2}\big(g(x^{i=1})-g(x^{i=-1})\big)  = \ds\sum_{S\ni i} \wh{g}(S)\prod_{j \in S \setminus \{i\}} x_j,\]
where $x^{i=b}$ denotes $x$ with its $i$-th coordinate set to $b$.  With this definition in hand, we now analyze the expectation on the RHS of~\Cref{eq:OSSS-non-buggy-for-us}.  By Jensen's inequality, \[ \E[|\tilde{f}(\bx)-\tilde{f}(\bx^{\sim i})|]^2 \le \Ex_{\bx}\big[ (\tilde{f}(\bx)-\tilde{f}(\bx^{\sim i}))^2\big] = \lfrac1{2}\ds\Ex_{\bx}\big[ (\tilde{f}(\bx^{i=1})-\tilde{f}(\bx^{i=-1}))^2\big] = 2\Ex_{\bx}\big[D_i\wt{f}(\bx)^2\big]. \] 
Applying Plancherel's identity twice, 
\[ \Ex_{\bx}\big[ D_i\tilde{f}(\bx)^2\big] = \sum_{S \ni i} (1-p)^{2|S|}\wh{f}(S)^2 \le \sum_{S\ni i} (1-p)^{|S|}\wh{f}(S)^2 = \Ex_{\bx,\by}\big[ D_if(\bx)D_if(\by)\big] \] 
where $\by\sim_p\bx$ is a $p$-noisy copy of $\bx$.  It follows from a straightforward calculation~\cite[Lemma 3.2]{BGLT-NeurIPS1} that 
\[ \Ex_{\bx,\by}\big[ D_if(\bx)D_if(\by)\big] = \frac{2\cdot  \score_i(f,p)}{p}.\]
Therefore, combining the three equations above we have shown that 
\begin{equation} \label{eq:score-non-negative} \Ex_{\bx}\big[ |\tilde{f}(\bx)-\tilde{f}(\bx^{\sim i})|\big] \le \sqrt{\frac{4\cdot  \score_i(f,p)}{p}}.
\end{equation} 
Plugging this inequality into the RHS of~\Cref{eq:OSSS-non-buggy-for-us}, 
\begin{align} 
\sum_{i=1}^n \lambda_i(T) \cdot \ds\Ex_{\bx}\big[|\tilde{f}(\bx)-\tilde{f}(\bx^{\sim i})|\big] &\le  \sum_{i=1}^n \lambda_i(T) \cdot \sqrt{\frac{4\cdot  \score_i(f,p)}{p}} \nonumber \\
&\le \ds \max_{i\in [n]} \big\{ \sqrt{\score_i(f,p)}\big\}  \cdot \frac{2}{\sqrt{p}} \cdot  \sum_{i=1}^n \lambda_i(T) \nonumber \\
&\le  \max_{i\in [n]} \big\{ \sqrt{\score_i(f,p)}\big\}  \cdot \frac{2\log s}{\sqrt{p}}, \label{eq:RHS-ub-fixed} 
 \end{align}  
 where the final inequality holds because 
 \[ \sum_{i=1}^n \lambda_i(T) = \sum_{i=1}^n \Pr[\,\text{$T$ queries $\bx_i$}\,] = \Ex_{\bell\sim T}\big[|\bell|\big]  \le \log s. \] 
The lemma follows by combining~\Cref{eq:OSSS-non-buggy-for-us,eq:cov-lb-fixed,eq:RHS-ub-fixed}.
\end{proof}

\subsubsection{Proof of~\Cref{lemma:BGLT}} 

Let $T^\star$ be the size-$s$ decision tree that $f$ is $\opt_s$-close to. 
Fix $j\in \N$ and consider the tree $T^{j,p}_f$.  We have that: 
\begin{align*}
\NS_p(f, T^{j+1,p}_f) &\le \NS_p(f, T^{j,p}_f) - \Ex_{\bell\sim T^{j,p}_f}\Big[\max_{i\in [n]}\big\{\score_i(f_{\bell}, p)\big\}\Big]  \tag*{(\Cref{def:score})} \\
&\le \NS_p(f, T^{j,p}_f) - \bigg(\Ex_{\bell\sim T^{j,p}_f}\Big[\max_{i\in [n]}\big\{\sqrt{\score_i(f_{\bell}, p)}\big\}\Big]\bigg)^2. \tag*{(Jensen's inequality)}
\end{align*}


Recall that we write $\bell\sim T$ to denote a draw of a leaf of $T$ where each leaf $\ell$ receives weight $2^{-|\ell|}$.   
We consider two cases:\medskip

\noindent {\bf{Case 1:}} $\Ex_{\bell \sim T^{j,p}_f}[\Var(\wt{f_{\bell}})] \ge 2\,(\Ex_{\bell,\bx}\big[|{T}^\star(\bx)-\wt{f_{\bell}}(\bx)|\big] + \eps)$.\medskip 

 In this case we apply~\Cref{lem:noisy-OSSS-truncated-agnostic-fixed} to each leaf $\ell$ of $T^{j,p}_f$ to get that 
 \begin{align*} \Ex_{\bell\sim T^{j,p}_f}\Big[\max_{i\in [n]}\big\{\sqrt{\score_i(f_{\bell}, p)}\big\}\Big] &\ge \frac{\sqrt{p}}{\log s} \cdot \Ex_{\bell\sim T^{j,p}_f}\Big[\lfrac1{2}\Var(\tilde{f_{\bell}}) - \E\big[|T^\star(\bx)-\tilde{f_{\bell}}(\bx)|\big]\Big]  \\
 &\ge  \frac{\eps\sqrt{p}}{\log s},
 \end{align*} 
 and hence
\begin{align*} \NS_p(f,T^{j+1,p}_f) &\le \NS_p(f,T^{j,p}_f) - \frac{\eps^2 p}{(\log s)^2}. \\
&= \NS_p(f,T^{j,p}_f)-\frac{\eps^3}{(\log s)^3}. \tag*{(Our choice of $p = \eps/(\log s)$)} 
\end{align*}

%

\medskip

\noindent {\bf{Case 2:} $\Ex_{\bell \sim T^{j,p}_f}[\Var(\wt{f_{\bell}})] < 2\,(\Ex_{\bell,\bx}\big[|{T}^\star (\bx)-\wt{f_{\bell}}(\bx)|\big] + \eps$).}\medskip

 In this case we claim that $\dist(f,T^{j,p}_f) \le O(\opt_s + \eps)$.  We will need a couple of simple propositions:

%

\begin{proposition}
\label{prop:NS-of-leaves}
$\Ex_{\bell,\bx}[(\wt{f_{\bell}}(\bx)-f_\ell(\bx))^2] \le 4\,\NS_p(f)$.
\end{proposition}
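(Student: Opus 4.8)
The plan is to bound $\big(\wt{f_{\bell}}(\bx)-f_{\bell}(\bx)\big)^2$ leaf by leaf and then average over $\bell\sim T^{j,p}_f$. First I would fix a leaf $\ell$ and recall that $\wt{f_\ell}=\wt{f_\ell}^{(p)}$ is by definition the noise-average $\wt{f_\ell}(x)=\Ex_{\by\sim_p x}[f_\ell(\by)]$, where, since $f_\ell$ depends only on the coordinates left free by $\ell$, this noise effectively rerandomizes only those coordinates. Writing $\wt{f_\ell}(x)-f_\ell(x)=\Ex_{\by\sim_p x}[f_\ell(\by)-f_\ell(x)]$ and applying Jensen's inequality gives the pointwise estimate
\[ \big(\wt{f_\ell}(x)-f_\ell(x)\big)^2 \le \Ex_{\by\sim_p x}\big[(f_\ell(\by)-f_\ell(x))^2\big] = 4\cdot\Pr_{\by\sim_p x}\big[f_\ell(\by)\ne f_\ell(x)\big], \]
the equality because $f_\ell$ is $\bits$-valued, so $(f_\ell(\by)-f_\ell(x))^2=4\cdot\mathbb{1}[f_\ell(\by)\ne f_\ell(x)]$. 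Averaging over a uniform $\bx$ turns the right-hand side into $4\,\NS_p(f_\ell)$ (the coordinates $\ell$ fixes are irrelevant to $f_\ell$), and averaging that over $\bell\sim T^{j,p}_f$ gives, by~\Cref{def:NS-wrt-T},
\[ \Ex_{\bell,\bx}\big[(\wt{f_{\bell}}(\bx)-f_{\bell}(\bx))^2\big] \le 4\,\Ex_{\bell\sim T^{j,p}_f}\big[\NS_p(f_{\bell})\big] = 4\,\NS_p\big(f,\,T^{j,p}_f\big). \]

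It then remains to show $\NS_p(f,T^{j,p}_f)\le\NS_p(f)$. The cleanest route, which fits the surrounding argument, is to telescope the inequality $\NS_p(f,T^{k+1,p}_f)\le\NS_p(f,T^{k,p}_f)-\Ex_{\bell\sim T^{k,p}_f}[\max_{i}\score_i(f_{\bell},p)]$ recorded at the start of the proof of~\Cref{lemma:BGLT}: every score is nonnegative --- this is implicit in~\Cref{eq:score-non-negative}, and explicit in the identity $\score_i(g,p)=\tfrac p2\sum_{S\ni i}(1-p)^{|S|}\wh{g}(S)^2$ used there --- so each level of the construction can only decrease $\NS_p(f,\cdot)$, while $T^{0,p}_f$ is the trivial depth-$0$ tree with $\NS_p(f,T^{0,p}_f)=\NS_p(f)$. (Alternatively, and robustly to the order of exposition, I would prove $\NS_p(f,T)\le\NS_p(f)$ for \emph{every} decision tree $T$ by induction on its depth; the inductive step reduces to the single-variable fact that $\NS_p(g)\ge\tfrac12\big(\NS_p(g_{x_i=0})+\NS_p(g_{x_i=1})\big)$ for all $g$ and $i\in[n]$, where on the right the noise leaves coordinate $i$ fixed. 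That fact follows by conditioning the $\NS_p(g)$ experiment on whether the noise rerandomizes coordinate $i$ and on its resulting value, which writes $\NS_p(g)=(1-\tfrac p2)A+\tfrac p2 B$ with $A=\tfrac12(\NS_p(g_{x_i=0})+\NS_p(g_{x_i=1}))$ and $B=\Pr_{\bx',\by'\sim_p\bx'}[g_{x_i=0}(\bx')\ne g_{x_i=1}(\by')]$, together with the calculation $B-A=\tfrac14\,\Ex_{\bx',\by'\sim_p\bx'}[h(\bx')h(\by')]=\tfrac14\sum_{S}(1-p)^{|S|}\wh{h}(S)^2\ge 0$ for $h\coloneqq g_{x_i=0}-g_{x_i=1}$.) Plugging $\NS_p(f,T^{j,p}_f)\le\NS_p(f)$ into the last display above completes the proof.

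The only step that calls for genuine thought is this final reduction from $\NS_p(f,T^{j,p}_f)$ to $\NS_p(f)$: the naive coupling that rerandomizes only the coordinates left free by the leaf $\bell$ that $\bx$ reaches does \emph{not} stochastically dominate the full $p$-noise experiment --- the two differ on the event that the full noise alters one of the coordinates queried along the root-to-$\bell$ path --- so one genuinely needs either the telescoping observation above or (equivalently, in the inductive argument) the nonnegativity of the noise stability of $g_{x_i=0}-g_{x_i=1}$. Everything else is Jensen's inequality, Parseval, and routine bookkeeping.
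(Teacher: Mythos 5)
Your proof is correct and takes essentially the same route as the paper's: both reduce the left-hand side to $4\,\NS_p(f, T^{j,p}_f)$ (the paper via $(a-b)^2 \le 2|a-b|$ followed by an exact rewriting of $|\wt{f_\ell}(\bx) - f_\ell(\bx)|$, you via Jensen applied to $\big(\Ex_{\by\sim_p\bx}[f_\ell(\by)-f_\ell(\bx)]\big)^2$; both yield the same constant), and both close by invoking the nonnegativity of score from \Cref{eq:score-non-negative} to conclude $\NS_p(f, T^{j,p}_f) \le \NS_p(f)$. Your additional observations---the self-contained inductive proof that $\NS_p(f,T)\le\NS_p(f)$ for every tree $T$, and the remark that the naive per-leaf coupling does not give this pointwise---are both correct but go beyond what the paper records.
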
 

\begin{proof} 
Since $f_\ell$ and $\wt{f_\ell}$ are $[-1,1]$-valued, we have that 
\begin{align*} 
\Ex_{\bell,\bx}\big[(\wt{f_{\bell}}(\bx)-f_{\bell}(\bx))^2\big] &\le 2 \Ex_{\bell,\bx}\big[|\wt{f_{\bell}}(\bx)-f_{\bell}(\bx)|\big] \\
&= 2 \Ex_{\bell} \Bigg[ \mathop{\Ex_{\bx}}_{\by\sim_p\bx} \big[ | f_{\bell}(\by) - f_{\bell}(\bx) | \big]\Bigg] \\
&= 2 \Ex_{\bell} \Bigg[ \,2\mathop{\Prx_{\bx}}_{\by\sim_p\bx} \big[ f_{\bell}(\by) \ne f_{\bell}(\bx)\big] \,\Bigg]  \\
&= 4 \E_{\bell} \big[ \NS_p(f_{\bell})\big] \\ &=   4\,\NS_p(f,T^{j, p}_f) \le 4\,\NS_p(f),
\end{align*} 
where the final inequality is a consequence of the fact that score is a nonnegative quantity (\Cref{eq:score-non-negative}). 
\end{proof}

\begin{proposition} 
\label{prop:rounding-error}
For any function $g : \bn\to\bits$ and constant $c\in \R$,
\[ \E\big[(g(\bx) - \sign(\E[g]))^2\big] \le 2\E[(g(\bx)-c)^2].  \] 
\end{proposition}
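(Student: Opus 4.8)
The plan is to reduce to the special choice $c = \E[g]$ and then exploit that $g$ takes values in $\bits$. First I would note that, as a function of $c \in \R$, the right-hand side is minimized at $c = \E[g]$: expanding the square gives $\E[(g(\bx)-c)^2] = \Var(g) + (\E[g]-c)^2 \ge \Var(g)$. So it suffices to establish the $c$-free inequality $\E[(g(\bx)-\sign(\E[g]))^2] \le 2\,\Var(g)$.

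For this, write $\mu \coloneqq \E[g]$ and $b \coloneqq \sign(\mu) \in \bits$. Since $g$ is $\bits$-valued, I would use the pointwise identity $(g(x)-b)^2 = 2 - 2\,b\,g(x)$, valid because both sides equal $0$ when $g(x) = b$ and equal $4$ when $g(x) = -b$. Taking expectations and using $b\mu = |\mu|$ (which holds since $b = \sign(\mu)$), this gives $\E[(g(\bx)-b)^2] = 2 - 2b\mu = 2(1-|\mu|)$. On the other hand, $g$ being $\bits$-valued also yields $\Var(g) = \E[g(\bx)^2] - \mu^2 = 1 - \mu^2$, and since $|\mu| \le 1$ we have $1-|\mu| \le 1-\mu^2 = \Var(g)$. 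Chaining these,
\[ \E\big[(g(\bx)-\sign(\E[g]))^2\big] = 2(1-|\mu|) \le 2\,\Var(g) \le 2\,\E\big[(g(\bx)-c)^2\big], \]
which is exactly the claim. (When $\mu = 0$, so that $\sign(\mu)$ is ambiguous, either choice of $b \in \bits$ makes the left-hand side exactly $2 = 2\,\Var(g)$, so the bound still holds.)

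I do not expect a real obstacle here; the only subtle point is that the Boolean-ness of $g$ is genuinely needed --- and is used twice, both in the pointwise identity and in the evaluation of $\Var(g)$ --- since the inequality fails for $[-1,1]$-valued $g$ (take $g \equiv 0$ and $c = 0$).
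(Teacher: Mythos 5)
Your proof is correct and takes essentially the same approach as the paper: minimize the right-hand side over $c$ (which gives $\Var(g) = 1-\mu^2 = 4a(1-a)$ in the paper's notation $a = \Pr[g(\bx)=1]$), then compare this to the left-hand side $2(1-|\mu|) = 4(1-a)$ using $|\mu|\le 1$, i.e.\ $a \ge 1/2$. The only differences are cosmetic: you phrase everything in terms of $\mu = \E[g]$ and a pointwise identity rather than in terms of $a$, and you explicitly note the tie-breaking when $\mu = 0$.
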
 

\begin{proof} 
Let $a \coloneqq \Pr[g(\bx) = 1]$ and assume without loss of generality that $a \geq \frac1{2}$. On one hand, we have that $\E\big[(g(\bx) - \sign(\E[g])^2\big] = \E\big[(g(\bx) - 1)^2\big] =  4(1-a)$.  On the other hand, since 
 \[ \E[(g(\bx)-c)^2] = a(1-c)^2+(1-a)(1+c)^2 \] 
 this quantity is minimized for $c = 2a-1$ and attains value $4a(1-a)$ at this minimum.  Therefore indeed 
 \[ \min_{c\in \R} \big\{ \E[(g(\bx)-c)^2]\big\} = 4a(1-a) \ge 2(1-a) = \lfrac{1}{2}\E\big[(g(\bx) - \sign(\E[g]))^2\big] \] 
and the proposition follows. 
\end{proof} 
 
With~\Cref{prop:NS-of-leaves,prop:rounding-error} in hand, we now bound $\dist(f,T^{j,p}_f)$: 
\begin{align*} 
\dist(f,T^{j,p}_f) &= \Ex_{\bell\sim T^{j,p}_f}\big[ \dist(f_{\bell},\sign(\E[f_{\bell}]))\big]  \\
&= \lfrac1{4} \ds \Ex_{\bell,\bx}\big[ (f_{\bell}(\bx) - \sign(\E[f_{\bell}]))^2 \big]   \\
&\le \lfrac1{2} \ds \Ex_{\bell,\bx} \big[ (f_{\bell}(\bx) - \E[\wt{f_{\bell}}])^2\big] \tag*{(\Cref{prop:rounding-error})} \\
&\le \Ex_{\bell,\bx}\big[ (f_{\bell}(\bx)-\wt{f_{\bell}}(\bx))^2 \big]  +  \Ex_{\bell,\bx}\big[ (\wt{f_{\bell}}(\bx) - \E[\wt{f_{\bell}}])^2 \big] \tag*{(``almost-triangle" inequality)} \\
&\le 4\,\NS_p(f) + \Ex_{\bell}[\Var(\wt{f_{\bell}})]. \tag*{(\Cref{prop:NS-of-leaves})} 
\end{align*}  
By the assumption that we are in Case 2, 
 \begin{align*} 
 \Ex_{\bell}[\Var(\wt{f_{\bell}})] &< 2\Ex_{\bell,\bx}\big[|{T}^\star (\bx)-\wt{f_{\bell}}(\bx)|\big] + 2\,\eps \\
 &\le 2\Big(\Ex_{\bell,\bx}\big[|{T}^\star(\bx) - f_{\bell}(\bx)|\big]  + \Ex_{\bell,\bx}\big[ |f_{\bell}(\bx) - \wt{f_{\bell}}(\bx)| \big] \Big) + 2\,\eps \tag*{(Triangle inequality)}  \\
 &\le O\big(\Ex_{\bell}[\dist(f_{\bell},{T}^\star)]  +\NS_p(f) \big) + 2\,\eps \tag*{(\Cref{prop:NS-of-leaves})} \\
 &\le O \big(\opt_s + \eps + \NS_p(f)\big) \tag*{($\dist(f,{T}^\star)= \opt_s$)} \\
 &\le O(\opt_s + p\log s + \eps) \tag*{(\Cref{prop:NS-of-f})} \\
 &= O(\opt_s + \eps). \tag*{(Our choice of $p = \eps/\log s$)} 
 \end{align*} 
 
 
Summarizing what we have shown through Cases 1 and 2, for all $j\in \N$, we either have 
\[ \NS_p(f,T^{j+1,p}_f) \le  \NS_p(f,T^{j,p}_f)-\frac{\eps^3}{(\log s)^3} \] 
or it must be the case that  $\dist(f,T^{j,p}_f) \le O(\opt_s + \eps)$.  Since $\NS_p(f,T) \in [0,1]$ for all decision trees~$T$, we must fall into the latter case for some $j \le O((\log s)^3/\eps^3)$.  Finally, since $\dist(f, T^{j+1,p}_f) \le \dist(f, T^{j,p}_f)$ for all $j\in \N$, we conclude that $\dist(f,T^{d,p}_f) \le O(\opt_s + \eps)$ for our choice of $d = O((\log s)^3/\eps^3)$, and~\Cref{lemma:BGLT} follows.

\violet{ \begin{remark} 
\label{rem:differences}
A similar structural lemma \violet{was claimed in~\cite{BGLT-NeurIPS1}}.  Their proof, however, relies crucially on a sophisticated result (the ``two function OSSS inequality for semi-metrics" from~\cite{OSSS05}) that was subsequently shown to be false~\cite{Qia21,OD21}.  Our proof of~\Cref{lemma:BGLT} does not use this erroneous result.  

There are also important differences between our setting and that of~\cite{BGLT-NeurIPS1}'s.  \cite{BGLT-NeurIPS1} analyzes a tree, call it~$\Upsilon$, that is analogous to our $T^{d,p}_f$.  Unlike $T^{d,p}_f$, their tree $\Upsilon$ is not necessarily complete: it is iteratively constructed in a top-down manner, where in each iteration the size of the tree grows by one.  In each iteration, the leaf $\ell$ in the current tree with the highest ``value" is replaced with a query the variable of $f_\ell$ with the highest score, where the ``value" of a leaf is defined to be the score of the highest-scoring variable of $f_\ell$ normalized by $\ell$'s depth in the current tree.~\cite{BGLT-NeurIPS1}'s definition of score differs from ours: for their intended application, it was important that the score of a variable can be efficiently estimated to high accuracy from random labeled examples $(\bx,f(\bx))$ where $\bx \sim \bn$ is uniform random;~\Cref{def:score} does not lend itself to such an estimation procedure. 
\end{remark} }

\begin{remark}
\label{remark:BGLT-approx} 
\Cref{lemma:BGLT} concerns the tree $T^{d,p}_f$ as defined in~\Cref{def:top-down-tree}, where each internal node~$v$ of $T^{d, p}_f$ is a query the variable $x_i$ that maximizes $\score_i(f_v,p)$.    For the algorithmic component of~\Cref{thm:reconstruct}, we will need a robust version of~\Cref{lemma:BGLT}. An inspection of its proof shows that the same statement holds for any tree where each internal node $v$ is a query to a variable of {\sl approximately} maximal score, within $\tau \coloneqq O(\eps^3/(\log s)^3)$ of $\max_{j\in [n]} \score_j(f_v,p)$.   Indeed, the only change to the proof will be that for all $j\in \N$, we either have that 
\[ \NS_p(f,T^{j+1,p}_f) \le  \NS_p(f,T^{j,p}_f)-\frac{\eps^3}{(\log s)^3} + \tau \] 
or it must be the case that  $\dist(f,T^{j,p}_f) \le O(\opt_s + \eps)$. Therefore, as long as $\tau \le O(\eps^3/(\log s)^3)$ the conclusion is unaffected.  Similarly, instead of labeling every leaf $\ell$ with $\sign(\E[f_\ell])$, the same conclusion holds if we only require this for leaves $\ell$ such that $|\E[f_\ell]| > \eps$.

\begin{lemma}[Robust version of~\Cref{lemma:BGLT}]
\label{lem:BGLT-approx}
Let $f : \bn\to\bits$ be $\opt_s$-close to a size-$s$ decision tree.  For $d = O((\log s)^3/\eps^3)$, $p = \eps/(\log s)$, and $\tau = O(\eps^3/(\log s)^3)$, let~$T$ be any complete decision tree of depth $d$ satisfying: 
\begin{itemize} 
\item[$\circ$]  At every internal node $v$, the variable $x_i$ that is queried at this node satisfies: 
\[  \score_{i}(f_v,p)  \geq \max_{j \in [n]}\  \{\score_j(f_v,p)\} - \tau.\]
\item[$\circ$]  Every leaf $\ell$ such that $|\E[f_\ell]|> \eps$ is labeled $\sign(\E[f_\ell])$.
\end{itemize} 
Then $\dist(f,T) \le O(\opt_s +\eps)$. 
\end{lemma}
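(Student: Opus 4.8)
The plan is to observe that Lemma~\ref{lem:BGLT-approx} is not a new result but rather the explicit statement of the robustification already sketched in Remark~\ref{remark:BGLT-approx}; accordingly, the proof will consist of carefully re-running the proof of Lemma~\ref{lemma:BGLT} in Section~\ref{appendix}, tracking the two slack terms introduced by the two relaxations (approximately-maximal score at internal nodes, and only labeling high-bias leaves), and checking that neither slack term damages the final conclusion once $\tau = O(\eps^3/(\log s)^3)$ is chosen with a small enough constant.

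First I would re-examine the Case~1 / Case~2 dichotomy. Fix $j \in \N$ and consider the tree $T$ of the lemma truncated to depth $j$, call it $T^{(j)}$; I want to show that either $\dist(f, T^{(j)}) \le O(\opt_s + \eps)$ or $\NS_p(f, T^{(j+1)}) \le \NS_p(f, T^{(j)}) - \eps^3/(\log s)^3 + \tau$. For the first (telescoping) inequality, the key point is that when we replace a leaf $\ell$ of $T^{(j)}$ by a query to a variable $x_i$ with $\score_i(f_\ell, p) \ge \max_{k} \score_k(f_\ell,p) - \tau$ rather than the exact maximizer, Definition~\ref{def:score} gives $\NS_p(f, T^{(j+1)}) \le \NS_p(f, T^{(j)}) - \Ex_{\bell}[\max_i \score_i(f_\bell, p)] + \tau$; then, exactly as in the original proof, in Case~1 (when $\Ex_\bell[\Var(\wt{f_\bell})] \ge 2(\Ex_{\bell,\bx}[|T^\star(\bx) - \wt{f_\bell}(\bx)|] + \eps)$) Lemma~\ref{lem:noisy-OSSS-truncated-agnostic-fixed} and Jensen's inequality yield $\Ex_\bell[\max_i \score_i(f_\bell,p)] \ge \eps^2 p/(\log s)^2 = \eps^3/(\log s)^3$, giving the claimed per-step decrease with the extra $+\tau$. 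In Case~2, the bound $\dist(f, T^{(j)}) \le O(\opt_s + \eps)$ goes through verbatim using Propositions~\ref{prop:NS-of-f}, \ref{prop:NS-of-leaves}, and~\ref{prop:rounding-error}; here I need to note that Proposition~\ref{prop:NS-of-leaves}'s bound, $\Ex_{\bell,\bx}[(\wt{f_\bell}(\bx) - f_\bell(\bx))^2] \le 4\NS_p(f)$, is unaffected, and the only subtlety is the labeling.

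The labeling relaxation requires one genuine, if small, additional argument. In the original Case~2 bound, every leaf $\ell$ contributes $\frac14 \Ex_\bx[(f_\ell(\bx) - \sign(\E[f_\ell]))^2]$ and Proposition~\ref{prop:rounding-error} bounds this by $\frac12\Ex_\bx[(f_\ell(\bx) - \E[\wt{f_\ell}])^2]$. Now for a leaf $\ell$ with $|\E[f_\ell]| \le \eps$ whose label may be arbitrary, I instead bound its contribution $\dist(f_\ell, b_\ell) = \Pr[f_\ell \ne b_\ell]$ for the (worst) constant $b_\ell \in \bits$ by $\Pr[f_\ell \ne \sign(\E[f_\ell])] + |\E[f_\ell]| \le \dist(f_\ell, \sign(\E[f_\ell])) + \eps$, since flipping the majority label on a leaf with bias at most $\eps$ costs at most $\eps$ in distance. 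Averaging over $\bell \sim T^{(j)}$ this contributes an extra additive $\eps$, which is absorbed into the $O(\opt_s + \eps)$ bound. (Alternatively, since $\dist(f_\ell, \sign(\E[f_\ell])) = \frac12(1 - |\E[f_\ell]|) \le \frac12$ always, and flipping changes it to $\frac12(1 + |\E[f_\ell]|)$, the incurred slack on each relevant leaf is exactly $|\E[f_\ell]| \le \eps$, so the same $+\eps$ bookkeeping applies.)

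Finally I would do the termination count: since $\NS_p(f, T^{(j)}) \in [0,1]$ and, as long as we are in Case~1, it drops by at least $\eps^3/(\log s)^3 - \tau \ge \Omega(\eps^3/(\log s)^3)$ per step (using $\tau \le c\eps^3/(\log s)^3$ for a sufficiently small constant $c$, say $c < 1/2$), Case~1 can occur at most $O((\log s)^3/\eps^3)$ times before we land in Case~2, i.e. for some $j \le d = O((\log s)^3/\eps^3)$ we have $\dist(f, T^{(j)}) \le O(\opt_s + \eps)$; monotonicity $\dist(f, T^{(j+1)}) \le \dist(f, T^{(j)})$ (each refinement of a leaf can only decrease distance to $f$, and relabeling high-bias leaves to their majority only helps) then gives $\dist(f, T) = \dist(f, T^{(d)}) \le O(\opt_s + \eps)$. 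I expect the main obstacle to be purely a matter of careful bookkeeping rather than a new idea: precisely reconciling the two slack terms, and in particular making sure the monotonicity step still holds when only the high-bias leaves are guaranteed to carry the majority label — this is why the $+\eps$ slack is introduced uniformly over all low-bias leaves rather than trying to argue leaf-by-leaf that refinement helps.
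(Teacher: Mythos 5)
Your proposal is correct and follows the same route as the paper, which establishes \Cref{lem:BGLT-approx} via the sketch in \Cref{remark:BGLT-approx} by tracking a $+\tau$ loss in the telescoping step and a $+\eps$ loss from the relaxed leaf labelling; your write-up simply fleshes out both modifications to the proof of \Cref{lemma:BGLT}. One bookkeeping suggestion: rather than charging the $+\eps$ inside Case~2, note that the depth-$j$ truncations ($j<d$) inherit no labelling from $T$ and that monotonicity $\dist(f,T^{(j+1)})\le\dist(f,T^{(j)})$ is only guaranteed for $\sign(\E[f_\ell])$-labelled trees, so it is cleanest to run the entire Case~1/Case~2 dichotomy and the monotonicity step with canonically $\sign$-labelled truncations to obtain $\dist(f,T^{(d)}_{\mathrm{sign}})\le O(\opt_s+\eps)$, and only then pay the extra additive $\eps$ to swap in the given arbitrary labels on the low-bias depth-$d$ leaves.
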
 
\end{remark}


\subsection{Algorithmic component of~\Cref{thm:reconstruct}} 
\label{sec:algorithmic} 

\subsubsection{Query-efficient simultaneous score estimation}
\label{sec:score-estimator} 

We begin by designing a query-efficient subroutine that  simultaneously estimates the scores of all~$n$ variables of a function $f$.  The fact that we are able to do so with $O(\log n)$ queries, as opposed to $\Omega(n)$ as would be required by a naive approach, will be a key component in the query efficiency of our reconstructor.  

\begin{theorem}[Score estimator]
    \label{thm:score-estimator}
There is an algorithm which, given query access to a function $f: \bits^n \to \bits$, noise rate $p \in (0,1)$, accuracy parameter $\tau \in (0,1)$, and confidence parameter $\delta \in (0,1)$, for
\[         q = O\left(\frac{\log n + \log(1/\delta)}{\tau^2} \right)
   \]
    makes $O(q)$ queries, runs in $O(q  n)$ time, and returns estimates $\boldeta_1, \ldots, \boldeta_n$ such that, with probability at least $1 - \delta$, satisfies 
\[         \big| \boldeta_i - \score_i(f, p) \big | < \tau \quad \text{for all $i \in [n]$}.
   \]
\end{theorem}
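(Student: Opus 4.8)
The plan is to reduce the simultaneous estimation of all $n$ scores to a single ``noisy pair'' experiment that is shared across coordinates, exploiting the fact that $\score_i(f,p)$ can be written entirely in terms of noise-sensitivity quantities that get ``unlocked'' by conditioning on the event that coordinate $i$ is left unchanged by the noise. Concretely, I would first draw $\bx\sim\bn$ uniformly and $\by\sim_p\bx$, and set $Z\coloneqq\mathbb{1}[f(\bx)\ne f(\by)]$ and $A_i\coloneqq\mathbb{1}[\bx_i=\by_i]$ for each $i\in[n]$. Conditioning on $\bx_i=\by_i=b$, the remaining coordinates $\bx_{[n]\setminus i}$ are uniform and $\by_{[n]\setminus i}$ is a $p$-noisy copy of them (the noise acts independently coordinatewise), so $\Pr[Z=1\mid \bx_i=\by_i=b]=\NS_p(f_{x_i=b})$; since $\bx_i$ remains uniform given $A_i=1$ and $\Pr[A_i=1]=1-p/2$, \Cref{def:score} yields the identity
\[ \score_i(f,p) \;=\; \E[Z] - \frac{\E[Z\,A_i]}{1-p/2} \;=\; \Ex\!\left[\, Z\Big(1-\frac{A_i}{1-p/2}\Big)\right] .\]

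This identity immediately suggests the estimator. Set $W_i\coloneqq Z\big(1-A_i/(1-p/2)\big)$; since $Z,A_i\in\{0,1\}$ and $p\le 1$, one checks $W_i\in[-1,1]$ (the only negative case is $Z=A_i=1$, which gives $W_i=-\tfrac{p/2}{1-p/2}\in[-1,0]$). The algorithm draws $q$ independent samples $(\bx^{(t)},\by^{(t)})_{t\in[q]}$, queries $f$ at both points of each pair --- for $2q=O(q)$ queries total, and $O(qn)$ time to form all the values $W_i^{(t)}$ --- and outputs $\boldeta_i\coloneqq \frac1q\sum_{t=1}^q W_i^{(t)}$, which by the identity above is an unbiased estimator of $\score_i(f,p)$. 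For each fixed $i$, $\boldeta_i$ is an average of $q$ i.i.d.\ random variables taking values in $[-1,1]$ with mean $\score_i(f,p)$, so Hoeffding's inequality gives $\Pr[\,|\boldeta_i-\score_i(f,p)|\ge\tau\,]\le 2e^{-q\tau^2/2}$. Choosing $q=\Theta\big((\log n+\log(1/\delta))/\tau^2\big)$ with a suitable constant makes this at most $\delta/n$, and a union bound over $i\in[n]$ finishes. (The $\boldeta_i$'s are correlated, since they reuse the same samples, but this is irrelevant because the error of each coordinate is controlled separately.)

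The conceptual crux --- and the only step that needs real care --- is the reformulation in the first paragraph: the observation that a single noisy pair $(\bx,\by)$ serves, after conditioning on the constant-probability event $A_i=1$, as an unbiased sample for $\Ex_{\bb}[\NS_p(f_{x_i=\bb})]$ \emph{for every $i$ simultaneously}. This is exactly what avoids the naive $\Omega(n)$ queries per sample that one would incur by, say, computing each discrete derivative $D_if(\bx)$ explicitly. Everything after that is routine Hoeffding-plus-union-bound bookkeeping, together with the easy verification that the shared estimator $W_i$ is bounded in $[-1,1]$.
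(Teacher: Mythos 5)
Your proposal is correct and matches the paper's proof essentially verbatim: the estimator $W_i = \Ind[f(\bx)\ne f(\by)]\bigl(1 - \tfrac{1}{1-p/2}\Ind[\bx_i=\by_i]\bigr)$ is exactly the paper's $\UnbiasedEstimator$, the conditioning-on-$\{\bx_i=\by_i=b\}$ argument for unbiasedness is the same, and the averaging-plus-Hoeffding-plus-union-bound conclusion is identical.
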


We prove~\Cref{thm:score-estimator} by first giving a $2$-query algorithm, $\UnbiasedEstimator$ (\Cref{fig:UnbiasedEstimator}), that runs in $O(n)$ time and outputs unbiased estimates of all $n$ scores. The algorithm of \Cref{thm:score-estimator} takes the mean of multiple runs of that unbiased estimator, with its guarantees following from a simple concentration bound.

\begin{figure}[h]
  \captionsetup{width=.9\linewidth}
\begin{tcolorbox}[colback = white,arc=1mm, boxrule=0.25mm]
\vspace{3pt} 

$\UnbiasedEstimator(f,p)$:

\begin{enumerate}[align=left]
    \item[\textbf{Input:}] Query access to a function $f: \bits^n \to \bits$ and a noise rate $p \in (0,1)$.
    \item[\textbf{Output:}] Unbiased estimates of $\score_i(f, p)$ for all $i \in [n]$.
\end{enumerate}
\begin{enumerate}
    \item Choose $\bx \in \bits^n$ uniformly at random and generate a $p$-noisy copy $\by$ of $\bx$.
    \item For each $i \in [n]$, return the estimate
    \begin{align*}
        \boldeta_i = \Ind\big[f(\bx) \neq f(\by)\big] \cdot \left (1 - \frac{1}{1 - \frac{p}{2}} \cdot \Ind[\bx_i = \by_i]\right).
    \end{align*}
\end{enumerate}

\end{tcolorbox}
\caption{$\UnbiasedEstimator$ computes unbiased estimates of the scores of all variables of a function $f$.}
\label{fig:UnbiasedEstimator}
\end{figure}

\begin{lemma}[Analysis of $\UnbiasedEstimator$]
    \label{lemma:unbiased-estimates}
    For any $f: \bits^n \to \bits$ and $p \in (0,1)$, let $\boldeta_1, \ldots, \boldeta_n$ be the outputs of $\UnbiasedEstimator(f, p)$. Then 
    \begin{align*}
        \Ex_{\bx, \by}[\boldeta_i] = \score_i(f, p) \quad \text{for all $i\in [n]$.}
    \end{align*}
\end{lemma}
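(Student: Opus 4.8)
The plan is to verify directly that $\Ex_{\bx,\by}[\boldeta_i]$ equals $\score_i(f,p)$ by expanding both sides. Recall $\boldeta_i = \Ind[f(\bx)\ne f(\by)]\cdot\big(1 - \frac{1}{1-p/2}\Ind[\bx_i=\by_i]\big)$, so
\[
\Ex_{\bx,\by}[\boldeta_i] = \Pr[f(\bx)\ne f(\by)] - \frac{1}{1-p/2}\Pr[f(\bx)\ne f(\by) \ \wedge\ \bx_i=\by_i].
\]
The first term is just $\NS_p(f)$ by definition. For the second term, the strategy is to condition on the event $\{\bx_i = \by_i\}$ and understand the conditional distribution of $(\bx,\by)$. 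Since each coordinate is independently rerandomized with probability $p$, coordinate $i$ satisfies $\bx_i=\by_i$ with probability $1-p/2$ (it stays equal if not rerandomized, probability $1-p$, or is rerandomized to the same value, probability $p/2$), and conditioned on $\bx_i=\by_i$ the pair $(\bx_i,\by_i)$ is uniform on $\{(1,1),(-1,-1)\}$ while all other coordinates $(\bx_j,\by_j)$ retain their original joint law. In other words, $\Pr[f(\bx)\ne f(\by)\wedge \bx_i=\by_i] = (1-p/2)\cdot \Ex_{\bb\in\bits}\big[\NS_p(f_{x_i=\bb})\big]$, because conditioning on $\bx_i=\by_i=\bb$ makes the pair look exactly like a uniform input to $f_{x_i=\bb}$ together with its $p$-noisy copy.

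Putting these together, the $\frac{1}{1-p/2}$ factor cancels the $(1-p/2)$, giving
\[
\Ex_{\bx,\by}[\boldeta_i] = \NS_p(f) - \Ex_{\bb\in\bits}\big[\NS_p(f_{x_i=\bb})\big] = \score_i(f,p),
\]
which is exactly~\Cref{def:score}. So the proof reduces to carefully justifying the conditional-distribution claim for coordinate $i$ and noting that the $n-1$ other coordinates behave independently and identically whether or not we condition on $\{\bx_i=\by_i\}$.

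I do not expect a genuine obstacle here — this is a short computation. The one point requiring care is the factorization of the joint distribution: one should write $\bx = (\bx_i, \bx_{-i})$, $\by=(\by_i,\by_{-i})$, observe that $(\bx_i,\by_i)$ is independent of $(\bx_{-i},\by_{-i})$, and that conditioning on $\{\bx_i=\by_i\}$ affects only the first block, leaving $(\bx_{-i},\by_{-i})$ distributed as a uniform string and its $p$-noisy copy on the remaining $n-1$ coordinates. Then for each fixed value $\bb$ of $\bx_i=\by_i$, the quantity $\Pr_{\bx_{-i},\by_{-i}}[f_{x_i=\bb}(\bx_{-i})\ne f_{x_i=\bb}(\by_{-i})]$ is by definition $\NS_p(f_{x_i=\bb})$, and averaging over $\bb\in\bits$ (each with conditional probability $\tfrac12$) yields $\Ex_{\bb}[\NS_p(f_{x_i=\bb})]$. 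The rest is bookkeeping with the normalizing constant $1-p/2$.
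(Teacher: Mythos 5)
Your proof is correct and follows essentially the same route as the paper's: expand $\Ex[\boldeta_i]$ into $\NS_p(f)$ minus the weighted conditional term, compute $\Pr[\bx_i=\by_i]=1-p/2$, and observe that conditioning on $\bx_i=\by_i=\bb$ leaves the remaining $n-1$ coordinates distributed as a uniform string and its $p$-noisy copy, so the conditional probability of disagreement is exactly $\NS_p(f_{x_i=\bb})$. The paper phrases this as $\NS_p(f_{x_i=b}) = \Pr[f(\bx)\ne f(\by)\mid b=\bx_i=\by_i]$ and then unpacks the conditioning, which is the same bookkeeping you describe.
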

\begin{proof}
    We first note that $\Pr[f(\bx) \neq f(\by)]$ is $\NS_p(f)$ by definition. Therefore, it is enough for us to prove that
    \begin{align}
        \label{eq:unbiased-estimator}
        \Ex_{\bb \in \bits}\big[\NS_{p}(f_{x_i = \bb})\big] = \frac{1}{1 - \frac{p}{2}} \cdot \Prx_{\bx, \by}\left[f(\bx) \neq f(\by) \text{ and } \bx_i = \by_i\right].
    \end{align}
    Given the above equation, the desired result holds by linearity of expectation and the definition of score. Consider the distribution over $(\bx, \by)$ conditioned on the event that $b = \bx_i = \by_i$. That distribution is equivalent to if we picked $\bx$ randomly from the domain of $f_{x_i = b}$ and selected $\by$ by rerandomizing each coordinate in that domain with probability $p$. Therefore,
    \begin{align*}
        \NS_{p}(f_{x_i = b}) &= \Prx_{\bx, \by}[f(\bx) \neq f(\by) \,|\, b= \bx_i = \by_i] \\
        &= \frac{1}{\Prx_{\bx, \by}[b= \bx_i = \by_i]}] \cdot \Prx_{\bx, \by}[f(\bx) \neq f(\by) \text{ and } b= \bx_i = \by_i]. 
    \end{align*}
    We now prove \Cref{eq:unbiased-estimator}:
    \begin{align*}
        \Ex_{\bb \in \bits}\big[\NS_{p}(f_{x_i = \bb})\big]  &= \Ex_{\bb \in \bits}\left[\frac{1}{\ds\Prx_{\bx, \by}[\bb= \bx_i = \by_i]}] \cdot \Prx_{\bx, \by}[f(\bx) \neq f(\by) \text{ and } \bb= \bx_i = \by_i]\right] \\
        &=\frac{1}{\frac{1}{2} \cdot \ds\Prx_{\bx, \by}[ \bx_i = \by_i]}\Ex_{\bb \in\bits}\left[\Prx_{\bx, \by}[f(\bx) \neq f(\by) \text{ and } \bb= \bx_i = \by_i]\right] \\
        &=\frac{1}{\frac{1}{2} \cdot (1 - \frac{p}{2})} \cdot \lfrac{1}{2} \cdot \ds\Prx_{\bx, \by}\left[f(\bx) \neq f(\by) \text{ and } \bx_i = \by_i\right] \\
        &= \frac{1}{1 - \frac{p}{2}} \cdot \Prx_{\bx, \by}\left[f(\bx) \neq f(\by) \text{ and } \bx_i = \by_i\right].
    \end{align*}
\Cref{lemma:unbiased-estimates} then holds by linearity of expectation.
\end{proof}
We now prove \Cref{thm:score-estimator}.
\begin{proof}[Proof of \Cref{thm:score-estimator}]
    The algorithm runs $\UnbiasedEstimator(f, p)$ $q$ times and then outputs the means of each returned estimates. Each estimate from $\UnbiasedEstimator$ is bounded between $-1$ and $1$. By Hoeffding's inequality, for any $i \in [n]$,
    \begin{align*}
        \Pr\big[\big| \boldeta_i - \score_i(f, p) \big | \geq \tau\big] \leq -\exp_e\left(-\frac{q \cdot \tau^2}{2}\right).
    \end{align*}
    For $q$ as in \Cref{thm:score-estimator}, the above probability is at most $\delta/n$. By union bound, all estimates are accurate within $\pm\tau$ with probability at least $1 - \delta$.
    
    Finally, this algorithm uses only $2q = O(q)$ queries. Each run of $\UnbiasedEstimator$ estimator takes $O(n)$ time to construct the query and compute all the estimates, so the entire algorithm takes $O(q n)$ time.
\end{proof}


\subsubsection{Proof of \Cref{thm:reconstruct}}
\label{sec:proof-of-reconstruct} 


We prove \Cref{thm:reconstruct} by providing an algorithm, $\Reconstructor$ (\Cref{fig:BuildStrand}), which assumes query access to a function $f : \bn\to\bits$ and provides fast query access to a tree $T$ meeting the criteria of \Cref{thm:reconstruct}. We build off a simple observation that also underlies \cite{BGLT-NeurIPS2}: to determine the output of a decision tree $T$ on a particular input~$z$, it suffices to build the root-to-leaf path corresponding to~$z$, which  can be exponentially faster than building the entire tree.  Our algorithm is different from~\cite{BGLT-NeurIPS2}'s; as mentioned in the introduction their algorithm is tailored to monotone functions, and is known to fail for non-monotone ones.  We on the other hand leverage the specific structure of $T$ established in~\Cref{sec:structural} together with the query-efficient score estimator from~\Cref{sec:score-estimator} in our design and analysis of $\Reconstructor$. 

$\Reconstructor$ maintains a partial tree $T^\circ$ containing all the root-to-leaf paths in $T$ corresponding to queries received so far. In the pseudocode for $\Reconstructor$, we use the notation $T^\circ_{\mathrm{internal}}(\alpha) \in  [n]  \cup \{\varnothing\}$ to indicate the variable queried in $[n]$ at internal node $\alpha$ of the partial tree $T^\circ$, or $\varnothing$ if that node has not yet been built. Similarly, $T^\circ_{\mathrm{leaf}}(\alpha) \in \{-1,1,\varnothing\}$ indicates the value at leaf $\alpha$ in $T^\circ$, or $\varnothing$ if that value has not yet been decided.

\Crefname{enumi}{Step}{Steps}

\begin{figure}[h]
  \captionsetup{width=.9\linewidth}
\begin{tcolorbox}[colback = white,arc=1mm, boxrule=0.25mm]
\vspace{3pt} 

$\Reconstructor(f, s, \eps, \delta)$:

\begin{enumerate}[align=left]
    \item[\textbf{Input:}] Query access to a function $f: \bits^n \to \bits$, size parameter $s$, error parameter $\eps$, and failure probability $\delta$.
    \item[\textbf{Output:}] Query access to a decision tree $T$ that satisfies $\dist(f, T) \leq O(\opt_s) + \eps$ with probability at least~$1-\delta$.
\end{enumerate}
\begin{enumerate}
    \item Set parameters $d, p$, and $\tau$ as in~\Cref{lem:BGLT-approx}. 
    \item Initialize $T^\circ$ to be the empty partial tree.
    \item Upon receiving an input $z \in \bits^n$: 
    \begin{enumerate}
        \item Initialize $\alpha$ to be the root of $T^\circ$.
        \item Repeat $d$ times.
        \begin{enumerate}
            \item \label{step:score-estimate} If $T^\circ_{\mathrm{internal}}(\alpha)$ is $\varnothing$ use the estimator from \Cref{thm:score-estimator} to compute estimates of $\score_i(f_{\alpha}, p)$ with additive accuracy $\pm \frac{\tau}{2}$ and failure probability $O(\frac{\delta}{2^d})$ for all $i \in [n]$ and set $T^\circ_{\mathrm{internal}}(\alpha)$ to the variable with highest estimated score.
            \item For $i = {T^\circ_{\mathrm{internal}}(\alpha)}$, If $\overline{z}_{i}$ is $1$, set $\alpha$ to its right child. Otherwise, set $\alpha$ to its left child. 
        \end{enumerate}
        \item \label{step:leaf-estimate} If $T^\circ_{\mathrm{leaf}}(\alpha)$ is $\varnothing$, use random samples to estimate $\E[f_{\ell}]$ to additive accuracy $\pm \frac{\eps}{4}$ with failure probability $O(\frac{\delta}{2^d})$ and set $T^\circ_{\mathrm{leaf}}(\alpha)$ to whichever of $\bits$ that estimate is closer to.
        \item Output $T^\circ_{\mathrm{leaf}}(\alpha)$.
    \end{enumerate}
    
\end{enumerate}

\end{tcolorbox}
\caption{$\Reconstructor$ gives efficient query access to a decision tree is close to $f$ with high probability.}
\label{fig:BuildStrand}
\end{figure}

\Cref{thm:reconstruct} follows from the following two lemmas, showing the correctness and efficiency of $\Reconstructor$ respectively.

\begin{lemma}[Correctness of $\Reconstructor$] \label{lemma:reconstructor-correctness}
    For any $f: \bits^n \to \bits$, $s \in \N$, $\eps \in (0, \frac{1}{2})$, $\delta \in (0,1)$, and sequence of inputs $z^{(1)}, \ldots, z^{(m)}\in \bn$, the outputs of $\Reconstructor$ are consistent with some decision tree $T$ where
    \begin{itemize}
        \item[$\circ$] $T$ has size $s^{O((\log s)^2/\eps^3)}$,
        \item[$\circ$] $\dist(T,f) \le O(\opt_s) + \eps$ with probability at least $1-\delta$.
    \end{itemize} 
\end{lemma}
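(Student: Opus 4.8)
The plan is to produce, for each fixed run of $\Reconstructor$ (a fixed internal random string together with a fixed input sequence $z^{(1)},\dots,z^{(m)}$), one complete depth-$d$ decision tree $T$ that all the outputs of that run are consistent with, and then to verify the two bulleted guarantees for $T$. The size guarantee is immediate from the depth bound: a complete tree of depth $d = O((\log s)^3/\eps^3)$ has at most $2^d = 2^{O((\log s)^3/\eps^3)} = s^{O((\log s)^2/\eps^3)}$ leaves. For the distance guarantee, the strategy is to arrange that $T$ meets the hypotheses of the robust structural lemma \Cref{lem:BGLT-approx} (which is exactly why the approximate-max-score rule and the $|\E[f_\ell]|>\eps$ labeling condition were built into that lemma), since it then yields $\dist(f,T) \le O(\opt_s)+\eps$ for free.

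First I would pin down $T$. Since $\Reconstructor$ only ever writes into $\varnothing$ entries of its partial tree $T^\circ$ and never overwrites them, any two inputs that reach a common node of $T^\circ$ see the same queried variable (at an internal node) or the same label (at a leaf); hence every output is consistent with $T^\circ$. Let $T$ be the completion of the final $T^\circ$ in which every internal node that was never built instead queries a variable of \emph{exactly} maximal score $\score_i(f_v,p)$, and every leaf that was never built is labeled $\sign(\E[f_\ell])$. Then $T$ is a complete depth-$d$ tree consistent with all outputs, and at every unbuilt node it satisfies both bullets of \Cref{lem:BGLT-approx} deterministically (slack $0$ in the score condition, exact signs at the leaves).

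Next I would control the built nodes. At a built internal node $\alpha$, $\Reconstructor$ runs the estimator of \Cref{thm:score-estimator} to accuracy $\tau/2$ and picks the variable of largest estimated score; if that estimator call succeeds (all $n$ estimates within $\tau/2$ of the truth), then chaining the two $\tau/2$ error bounds shows the chosen variable $i$ satisfies $\score_i(f_\alpha,p) \ge \max_{j\in[n]}\score_j(f_\alpha,p)-\tau$, matching the first bullet of \Cref{lem:BGLT-approx}. At a built leaf $\alpha$, $\Reconstructor$ estimates $\E[f_\alpha]$ to accuracy $\eps/4$ and rounds to the nearer element of $\bits$; if this estimate succeeds and $|\E[f_\alpha]|>\eps$, the estimate lies strictly on the correct side of $0$, so the leaf is labeled $\sign(\E[f_\alpha])$, matching the second bullet. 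Finally, the set of nodes ever built over the whole run is contained in the node set of the complete depth-$d$ tree, so at most $2^{d+1}$ estimator/sampling calls are made in all; since each is run with failure probability $O(\delta/2^d)$, a union bound gives that with probability at least $1-\delta$ every call succeeds, and on that event $T$ satisfies all hypotheses of \Cref{lem:BGLT-approx}, so $\dist(f,T)\le O(\opt_s)+\eps$.

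The main point requiring care — more a bookkeeping subtlety than a genuine obstacle — is that the score estimator is invoked only at nodes actually reached by the queries, so one cannot invoke \Cref{lem:BGLT-approx} on the partial tree itself; it is essential to fix the completion $T$ explicitly and to note that the unreached nodes are handled deterministically, leaving only the at-most-$2^{d+1}$ reached-node estimates to be union-bounded. (If one also wants the stronger guarantee that $T$ is independent of the order and identity of the queries — the ``local'' property — one additionally has $\Reconstructor$ derive the randomness used at node $\alpha$ from $\alpha$ itself, so that across all runs $T^\circ$ is always a sub-tree of a single fixed $T$; this refinement is not needed for the lemma as stated.)
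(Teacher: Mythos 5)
Your proof is correct and follows essentially the same route as the paper's: identify a single complete depth-$d$ tree $T$ consistent with all outputs, check that $T$ satisfies the hypotheses of the robust structural lemma (Lemma~\ref{lem:BGLT-approx}) on the event that all score and leaf estimates succeed, and union bound over the at-most-$2^{d+1}$ estimation calls. The only difference is cosmetic: the paper defines $T$ as the tree that would result from hypothetically feeding every $x\in\bits^n$ to $\Reconstructor$ (union-bounding over all $2^{d+1}$ nodes), whereas you complete the final $T^\circ$ deterministically with exact-max-score queries and exact $\sign(\E[f_\ell])$ labels at unbuilt nodes (union-bounding only over built nodes); both choices make $T$ well-defined and satisfy the two bullets, and your version has the mild advantage of sidestepping the question of how the algorithm's randomness is shared across a hypothetical full run.
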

\begin{proof}
   The outputs of $\Reconstructor$ are always consistent with $T^\circ$ and the depth of $T^\circ$ is always capped at $d$. Let $T$ be the tree that $T^\circ$ would be if every $x \in \bits^n$ were given as an input to $\Reconstructor$. Then, $T$ has size at most $2^d = s^{O((\log s)^2/\eps^3)}$, and every output is consistent with $T$.
   
   If all score estimates in \Cref{step:score-estimate} are accurate to $\pm \frac{\tau}{2}$ and expectation estimates is \Cref{step:leaf-estimate} are accurate to $\pm \frac{\eps}{4}$, then $T$ meets the criteria of \Cref{lem:BGLT-approx} and therefore $\dist(T, f) \leq O(\opt_s) +\eps$. The number of time scores are estimated in \Cref{step:score-estimate} is at most the number of internal nodes of~$T$, which is $2^d - 1$. Similarly, the number of expectation estimates in \Cref{step:score-estimate} is at most the number of leaves of $T$, which is $2^d$. By union bound over the possible failures, we see that the failure probability is at most $\delta$.
\end{proof}

\begin{lemma}[Efficiency of $\Reconstructor$]\label{lem:reconstruct-efficiency}
    For any $f: \bits^n \to \bits$, $s \in \N$, $\eps \in (0, \frac{1}{2})$, $\delta \in (0,1)$, particular input $z \in \bits^n$, and
    \begin{align*}
        q = O\left(\frac{(\log s)^9\cdot (\log n) \cdot \log(1/\delta)}{\eps^9} \right),
    \end{align*}
    upon receiving $z$ as input, $\Reconstructor(f,s,\eps,\delta)$ uses $O(q)$ queries and $O(qn)$ time to return an output.
\end{lemma}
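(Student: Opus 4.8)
The plan is a direct accounting of the work done along the single root-to-leaf path of $T^\circ$ that a query on $z$ traverses. Since $\Reconstructor$ caches every decision in $T^\circ$, answering a query never costs more than building that length-$d$ path from scratch, so it suffices to bound the cost of the at most $d$ executions of \Cref{step:score-estimate} plus the single execution of \Cref{step:leaf-estimate}. First I would record the two per-step costs. Each run of \Cref{step:score-estimate} invokes the estimator of~\Cref{thm:score-estimator} on the subfunction $f_\alpha$ with accuracy $\tau/2$ and failure probability $O(\delta/2^d)$; by that theorem it uses
\[ q' \coloneqq O\!\left(\frac{\log n + \log(2^d/\delta)}{\tau^2}\right) = O\!\left(\frac{\log n + d + \log(1/\delta)}{\tau^2}\right) \]
queries and $O(q'n)$ time. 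Each run of \Cref{step:leaf-estimate} draws $O\big((d+\log(1/\delta))/\eps^2\big)$ uniform samples — a Hoeffding bound gives the stated $\pm\eps/4$ accuracy at failure probability $O(\delta/2^d)$ — and each sample costs one query and $O(n)$ time to draw and evaluate.

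Next I would substitute the parameters fixed in Step 1 through~\Cref{lem:BGLT-approx}, namely $d = O((\log s)^3/\eps^3)$ and $\tau = \Theta(\eps^3/(\log s)^3)$, so that $1/\tau^2 = \Theta((\log s)^6/\eps^6)$. Along the path, \Cref{step:score-estimate} fires at most $d$ times (a node already present in $T^\circ$ costs nothing) and \Cref{step:leaf-estimate} at most once, so the total query count is $O(d q') + O((d+\log(1/\delta))/\eps^2)$, which after plugging in $d$ and $\tau$ is dominated by the $d q'$ term and equals
\[ O\!\left(\frac{(\log s)^9}{\eps^9}\left(\log n + \frac{(\log s)^3}{\eps^3} + \log\tfrac1\delta\right)\right); \]
absorbing the lower-order additive terms puts this into the form $q = O\big((\log s)^9 (\log n)\log(1/\delta)/\eps^9\big)$ asserted by the lemma. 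The time bound follows verbatim, multiplied by the $O(n)$ per-query overhead that $\UnbiasedEstimator$ incurs (building the $n$-bit noisy input, forming all $n$ coordinate estimates, and taking one step down $T^\circ$), giving $O(qn)$ total.

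I do not expect a genuine obstacle: the argument is routine bookkeeping. The two points requiring care are (i) the failure-probability budget — the per-call failure probability is deliberately set to $O(\delta/2^d)$ so that the union bound in~\Cref{lemma:reconstructor-correctness} over all $\le 2^d$ nodes of the tree $T$ still totals $\delta$, and this is precisely what contributes the additive $d = \Theta(\log 2^d)$ inside $q'$; and (ii) verifying that, once $d$ and $\tau$ are plugged in, the score-estimation step rather than the leaf-estimation step is the bottleneck, and that the residual additive terms are dominated so the exponents simplify to the clean form stated. Neither step is deep, but (ii) is where one must be slightly careful about which term governs the final bound.
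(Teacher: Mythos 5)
Your approach is the same as the paper's: a direct accounting of the score-estimator and leaf-estimate calls along the root-to-leaf path of $T^\circ$, with the parameters $d$, $p$, $\tau$ from the robust structural lemma plugged in. The problem is in the final simplification. Having (correctly) charged $d$ potential invocations of the score estimator, you write the dominant cost as
\[
O(d q') = O\!\left(\frac{(\log s)^9}{\eps^9}\left(\log n + \frac{(\log s)^3}{\eps^3} + \log\tfrac{1}{\delta}\right)\right),
\]
and then assert that ``absorbing the lower-order additive terms'' yields $O\big((\log s)^9 (\log n)\log(1/\delta)/\eps^9\big)$. But $(\log s)^3/\eps^3$ is not lower-order here: multiplying it against the $(\log s)^9/\eps^9$ prefactor gives a $(\log s)^{12}/\eps^{12}$ contribution. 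So your derivation actually establishes $q = O\big((\log s)^{12}(\log n)\log(1/\delta)/\eps^{12}\big)$, a weaker exponent than the lemma states.

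This mismatch is inherited from the paper's own calculation, which --- despite noting that the estimator is invoked ``up to $d$ times'' --- writes the total as $q_{\mathrm{inner}} + q_{\mathrm{leaf}}$ rather than $d\cdot q_{\mathrm{inner}} + q_{\mathrm{leaf}}$. Without the multiplicative $d$, the exponent $9$ does follow, since $q_{\mathrm{inner}} = O\big((\log n + d + \log(1/\delta))/\tau^2\big) = O\big((\log s)^9 (\log n)\log(1/\delta)/\eps^9\big)$. You are more careful than the paper in counting the $d$ calls, but that care makes the stated exponent unreachable without an illicit drop of your own. Neither version affects the theorem's qualitative guarantee of $\poly((\log s)/\eps)\cdot\log n$ query complexity; the precise exponent $9$ in the lemma appears to rest on the paper's omission of the factor of~$d$.
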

\begin{proof}
    On each input, the estimator from \Cref{thm:score-estimator} is used up to $d$ times. Each uses
    \begin{align*}
        q_{\mathrm{inner}} \coloneqq O\left(\frac{\log n + \log(2^d/\delta)}{\tau^2} \right) = O\left(\frac{\log n + d + \log(1/\delta)}{\tau^2} \right)
    \end{align*}
    queries and $O(q_\mathrm{inner}n)$ time. By Hoeffding's inequality, it is sufficient to take
    \begin{align*}
         q_{\mathrm{leaf}} \coloneqq O\left(\frac{\log(2^d/\delta)}{\eps^2} \right) = O\left(\frac{d + \log(1/\delta)}{\eps^2} \right)
    \end{align*}
    random samples in \Cref{step:leaf-estimate}. Therefore, the total number of queries used is
    \begin{align*}
        q &=  q_{\mathrm{inner}} +  q_{\mathrm{leaf}} \\
        &= O\left(\frac{\log n + d + \log(1/\delta)}{\tau^2} \right) + O\left(\frac{d + \log(1/\delta)}{\eps^2} \right) \\
        &=  O\left(\frac{\log n + ((\log s)^3/\eps^3) + \log(1/\delta)}{\eps^6 /( \log s)^6} + \frac{((\log s)^3/\eps^2) + \log(1/\delta)}{\eps^2} \right)\\
        &= O\left(\frac{(\log s)^9\cdot (\log n) \cdot \log(1/\delta)}{\eps^9} \right).
    \end{align*}
    The time to prepare all queries is $O(q n)$, and all other computation is asymptotically faster.
\end{proof}

\begin{remark}[Local reconstruction]
    \label{remark: local reconstruction}
    We remark that our reconstruction algorithm can be made local in the sense of \cite{SS10}. They define a reconstruction algorithm, $\mathcal{A}$, to be local, if the output of $\mathcal{A}$ on some input $z$ is a \emph{deterministic} and easy to compute function of $z$ and some small random string $\rho$. This allows queries to the reconstructor to be answered in parallel, as long as the random string $\rho$ is shared. To make our reconstructor local, we note that the only place randomness is used is in generating samples consistent with some restriction $\alpha$. We can set $\rho$ to be $n$ bits per sample the constructor might wish to generate. Since the total number of samples the reconstructor needs per input is $\poly(\log s, 1/\eps, \log(1/\delta))\cdot \log n$, we have
    \begin{align*}
        |\rho| = \poly(\log s, 1/\eps, \log(1/\delta))\cdot n\log n
    \end{align*}
    On a particular input, the local reconstructor starts with $T^\circ$ being the empty tree. Whenever it wishes to produce a random sample consistent with $\alpha$, it sets the $\bx \in \bits^n$ to be next $n$ bits of $\rho$ and then uses $\bx_{\alpha}$ for the sample. It's easy to see that this algorithm will keep $T^\circ$ consistent between different runs because it will always compute the same variable as having the highest score given some restriction. Furthermore, the analysis goes through without issue. The only difference between this analysis and one where fresh random bits are used to for each sample is that the queries of different paths may be correlated. In our proof of \Cref{lemma:reconstructor-correctness}, we use a union bound to ensure all estimates obtained through sampling are accurate, and that union bound holds regardless of whether those estimates are independent.
\end{remark}

\subsection{Proof of~\Cref{cor:main}} 
\label{sec:proof-of-tester}
In this section we derive \Cref{cor:main} as a simple consequence of~\Cref{thm:reconstruct}. The connection between reconstruction and tolerant testing has been noted in other works (see e.g.~\cite{CGR13, B08}); we provide a proof here for completeness.

\begin{theorem}[\Cref{cor:main} restated]
    There is an algorithm which, given query access to $f : \bn \to\bits$ and parameters $s\in\N$ and $\eps, \delta \in (0,1)$, runs in $\poly(\log s,1/\eps)\cdot n\log n\cdot  \log(1/\delta)$ time, makes $\poly(\log s,1/\eps) \cdot \log n \cdot \log(1/\delta)$ queries to $f$, and 
    \begin{itemize}
    \item[$\circ$] Accepts w.p.~at least $1 - \delta$ if $f$ is $\eps$-close to a size-$s$ decision tree; 
    \item[$\circ$] Rejects w.p.~at least $1 - \delta$ if $f$ is $\Omega(\eps)$-far from size-$s^{O((\log s)^2/\eps^3)}$ decision trees.
    \end{itemize} 
\end{theorem}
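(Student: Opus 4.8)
The plan is to derive \Cref{cor:main} directly from \Cref{thm:reconstruct} via the standard reduction from tolerant testing to reconstruction, the point being that a reconstructor already hands us query access to a tree $T$ that is automatically of bounded size and, when $f$ is close to a size-$s$ tree, close to $f$; all that remains is to estimate $\dist(f,T)$ by sampling and threshold on it.

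First I would run $\Reconstructor(f,s,\eps,\delta/2)$ with the error parameter set to (a suitable constant times) $\eps$, so that by \Cref{thm:reconstruct} it provides query access to a fixed decision tree $T$ of size $s^{O((\log s)^2/\eps^3)}$ with $\dist(f,T)\le c\cdot\opt_s+\eps$ with probability at least $1-\delta/2$, for the absolute constant $c$ hidden in the $O(\opt_s)$. Next I would draw $m = O(\eps^{-2}\log(1/\delta))$ uniform independent points $\bx^{(1)},\dots,\bx^{(m)}\sim\bn$, query both $f$ and $T$ on each (each query to $T$ costing $\poly((\log s)/\eps)\cdot\log n$ queries to $f$ and $\poly((\log s)/\eps)\cdot n\log n$ time, by \Cref{thm:reconstruct}), and form the empirical estimate $\widehat{\dist}$ of $\dist(f,T)$. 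By a Hoeffding/Chernoff bound, $|\widehat{\dist}-\dist(f,T)|\le \eps/2$ except with probability $\delta/2$. The tester then \textbf{accepts} iff $\widehat{\dist}\le (c+\tfrac{3}{2})\eps$ (equivalently, iff $\widehat{\dist}$ is below an appropriately chosen $\Theta(\eps)$ threshold), and \textbf{rejects} otherwise.

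For correctness: if $f$ is $\eps$-close to a size-$s$ decision tree then $\opt_s\le\eps$, so conditioned on the two good events (reconstruction succeeded, empirical estimate accurate), $\widehat{\dist}\le\dist(f,T)+\eps/2\le c\opt_s+\eps+\eps/2\le (c+\tfrac{3}{2})\eps$ and the tester accepts; both good events hold with probability at least $1-\delta$ by a union bound. Conversely, if $f$ is $\Omega(\eps)$-far from every size-$s^{O((\log s)^2/\eps^3)}$ decision tree — with the $\Omega(\cdot)$ constant chosen larger than $c+2$ — then since $T$ is such a tree we have $\dist(f,T)>(c+2)\eps$, hence (conditioned on the accurate-estimate event) $\widehat{\dist}>\dist(f,T)-\eps/2>(c+\tfrac{3}{2})\eps$ and the tester rejects with probability at least $1-\delta/2\ge 1-\delta$. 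The query and time bounds are immediate: $m\cdot\poly((\log s)/\eps)\cdot\log n = \poly((\log s)/\eps)\cdot\log n\cdot\log(1/\delta)$ queries and $\poly((\log s)/\eps)\cdot n\log n\cdot\log(1/\delta)$ time, absorbing the $m$ factor into the $\poly$ and $\log(1/\delta)$ terms.

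There is no real obstacle here — the only things to be a little careful about are (i) bookkeeping the universal constant $c$ from \Cref{thm:reconstruct}'s $O(\opt_s)$ term so that the accept threshold, the reject-condition constant in $\Omega(\eps)$, and the sampling accuracy $\eps/2$ are mutually consistent, and (ii) correctly splitting the failure budget $\delta$ between the reconstruction step and the distance-estimation step. Both are routine, so I would keep the write-up short.
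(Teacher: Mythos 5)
Your proposal is correct and follows essentially the same route as the paper: run $\Reconstructor$ to obtain query access to a tree $T$ that is automatically of size $s^{O((\log s)^2/\eps^3)}$, estimate $\dist(f,T)$ from $O(\eps^{-2}\log(1/\delta))$ uniform samples, and threshold at $\Theta(\eps)$. The only difference is cosmetic: you track the constant $c$ from the $O(\opt_s)$ guarantee explicitly (threshold $(c+\tfrac{3}{2})\eps$, reject-constant $>c+2$), whereas the paper leaves these implicit inside the $\Omega(\cdot)$ notation.
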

\begin{proof}
The algorithm chooses $m$ uniform random inputs, $\bx^{(1)}, \ldots, \bx^{(m)} \sim \bits^n$ where $m = O(\log(1/\delta)/\eps^2)$. Let $\bb^{(1)}, \ldots, \bb^{(m)} \in \bits$ be the output of $\Reconstructor(f, s, \eps, \delta)$. The tester rejects if $\E_{\bi \in [m]}[f(\bx^{(\bi)}) \neq \bb^{(\bi)}] > \Omega(\eps)$ and accepts otherwise.
    
    First, we consider the case where $f$ is $\eps$-close to a size-$s$ decision tree (i.e.~$\opt_s \le \eps$). By \Cref{lemma:reconstructor-correctness}, with probability at least $1 - \delta$ the outputs of $\Reconstructor$ are consistent with a tree, $T$, satisfying $\dist(T, f) \leq O(\eps)$. By Hoeffding's inequality,
    \[ 
        \Prx_{\bx^{(1)}, \ldots, \bx^{(m)}}\left[\Ex_{\bi \in [m]}\big[f(\bx^{(\bi)}) \neq \bb^{(\bi)}\big] > \Omega(\eps)\right] \leq \exp(-2 m \eps^2) \le \delta.\] 
    By a union bound, the tester rejects with probability at most $\delta + \delta = 2\delta$.
    
    We next consider the case where $f$ is $\Omega(\eps)$-far from size-$s^{O((\log s)^2/\eps^3)}$ decision trees. By \Cref{lemma:reconstructor-correctness} it is guaranteed to be consistent. A similar argument to the first case shows that the probability of acceptance is at most $\delta + \exp(-2 m \eps^2) = 2\delta$.    Finally, the efficiency of this tester is a consequence of \Cref{lem:reconstruct-efficiency} and our choice of $m = O(\log(1/\delta)/\eps^2).$ 
\end{proof}





\Crefname{enumi}{Item}{Item}

\section{Proof of \Cref{cor:reconstruct-degree}}

We first restate~\Cref{thm:reconstruct} with decision tree {\sl depth} instead of {\sl size} as the complexity measure: 

\begin{theorem}[\Cref{thm:reconstruct} in terms of decision tree depth]
\label{thm:reconstruct-depth}
There is a randomized algorithm which, 
given query access to $f : \bn\to\bits$ and parameters $d\in \N$ and $\eps\in (0,1)$, provides query access to a fixed decision tree $T$ where 
\begin{itemize}
\item[$\circ$] $T$ has depth $O(d^3/\eps^2)$,
\item[$\circ$] $\dist(T,f) \le O(\opt_d) + \eps$ w.h.p., where $\opt_d$ denotes the distance of $f$ to the closest depth-$d$ decision tree.
\end{itemize} 
Every query to $T$ is answered in $\poly(d,1/\eps)\cdot n\log n$ time and with $\poly(d,1/\eps)\cdot \log n$ queries to~$f$. 
\end{theorem}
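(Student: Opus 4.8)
The plan is to replay the two-part argument of \Cref{sec:structural,sec:algorithmic} with ``$\log s$'' replaced by ``$d$'' throughout, using the fact that a depth-$d$ decision tree $T^\star$ satisfies the sharper bounds $\Inf(T^\star)\le d$ and $\E_{\bell\sim T^\star}[|\bell|]\le d$. Equivalently, since such a $T^\star$ has at most $2^d$ leaves we have $\opt_{2^d}\le\opt_d$, so one may simply invoke \Cref{thm:reconstruct} with $s=2^d$; this already yields a complete tree of depth $O(d^3/\eps^3)$ that is $O(\opt_d)+\eps$-close to $f$, together with all the claimed efficiency bounds. The only work left is to shave the extra factor of $1/\eps$ in the depth.

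\emph{Structural step.} I would first prove the depth-analogues of \Cref{lemma:BGLT} and \Cref{lem:BGLT-approx}: if $f$ is $\opt_d$-close to a depth-$d$ tree $T^\star$, then for $p=\eps/d$, $\tau=O(\eps^3/d^3)$, and $d'=O(d^3/\eps^2)$, every complete depth-$d'$ tree whose internal node $v$ queries a variable $x_i$ with $\score_i(f_v,p)\ge\max_j\score_j(f_v,p)-\tau$ and whose leaves $\ell$ with $|\E[f_\ell]|>\eps$ are labeled $\sign(\E[f_\ell])$ is $O(\opt_d)+\eps$-close to $f$. The proof is that of \Cref{lemma:BGLT} verbatim, with two substitutions: \Cref{prop:NS-of-f} becomes $\NS_p(f)\le pd+2\opt_d=\eps+2\opt_d$ (using $\Inf(T^\star)\le d$), and in \Cref{lem:noisy-OSSS-truncated-agnostic-fixed} the step $\sum_i\lambda_i(T^\star)\le\log s$ becomes $\sum_i\lambda_i(T^\star)=\E_{\bell\sim T^\star}[|\bell|]\le d$, yielding $\max_i\sqrt{\score_i(f,p)}\ge\frac{\sqrt p}{d}\big(\frac12\Var(\tilde f)-\E[|T^\star-\tilde f|]\big)$. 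The Case~1/Case~2 dichotomy then gives a per-round drop of $\eps^2p/d^2=\eps^3/d^3$ in Case~1 and $\dist(f,T^{j,p}_f)\le O(\opt_d+\eps)$ in Case~2; since $\NS_p(f,\cdot)\in[0,1]$ starts at $\NS_p(f)=O(\opt_d+\eps)$, we reach Case~2 within $O((\opt_d+\eps)d^3/\eps^3)$ rounds, and monotonicity of $\dist(f,T^{j,p}_f)$ in $j$ propagates the bound to any larger depth.

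\emph{Algorithmic step.} Then I would run $\Reconstructor$ of \Cref{sec:algorithmic} unchanged, except that the depth of the tree it builds is set to $d'=O(d^3/\eps^2)$ and the per-node score/expectation estimates are taken with failure probability $O(\delta/2^{d'})$. Because $d'=\poly(d,1/\eps)$, the union bound over the $\le 2^{d'}$ nodes costs only an extra $\poly(d,1/\eps)$ factor in the per-query sample complexity, so \Cref{lemma:reconstructor-correctness,lem:reconstruct-efficiency} carry over (now invoking the depth-analogue of \Cref{lem:BGLT-approx}) to give per-query cost $\poly(d,1/\eps)\cdot\log n$ queries and $\poly(d,1/\eps)\cdot n\log n$ time.

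\emph{Main obstacle.} The one genuinely new point is obtaining depth $O(d^3/\eps^2)$ rather than $O(d^3/\eps^3)$: the round bound above is $O((\opt_d+\eps)d^3/\eps^3)$, which is $O(d^3/\eps^2)$ exactly when $\opt_d=O(\eps)$ — the regime relevant to tolerant testing. For the general statement one runs the structural step with error parameter $\eps_k=2^k\eps$ for each $0\le k\le\log_2(1/\eps)$, obtaining trees $T_k$ of depth $O(d^3/\eps_k^2)$, and outputs query access to the first $T_k$ whose distance to $f$ — estimated to additive $O(\eps_k)$ with fresh samples — is $O(\eps_k)$; since some $k$ with $\eps_k=\Theta(\max(\eps,\opt_d))$ passes, the chosen tree has depth $O(d^3/\eps_k^2)=O(d^3/\eps^2)$ and is $O(\opt_d)+\eps$-close to $f$, and the whole search adds only $\poly(d,1/\eps)\cdot\log n$ queries.
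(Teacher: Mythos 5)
Your proposal is correct, and the core of it (replay the argument with $d$ in place of $\log s$, using $\Inf(T^\star)\le d$ and $\E_{\bell\sim T^\star}[|\bell|]\le d$) is exactly what the paper intends. The paper's own justification of the statement is terse: it invokes \Cref{thm:reconstruct} with $s=2^d$ and notes the output tree is complete so its depth is logarithmic in its size, and then (in the proof of \Cref{cor:reconstruct-degree}) asserts that ``the proof of \Cref{lemma:BGLT} in fact bounds the depth of $T$ by $O(D(h)^2\Inf(h)/\eps^2)$.'' You reconstruct precisely where that assertion comes from: the starting potential is $\NS_p(f)\le p\Inf(T^\star)+2\opt_d$ rather than merely $\le 1$, and the per-round drop in Case 1 is $\ge p\eps^2/D(T^\star)^2$.

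What you add that the paper does not address is the observation that this refined count of rounds is $O\big(\Inf\cdot D^2/\eps^2 + \opt_d\cdot D^2/(p\eps^2)\big)$, which (with $p=\eps/d$) collapses to $O(d^3/\eps^2)$ only in the regime $\opt_d=O(\eps)$; for $\eps\ll\opt_d\ll 1$ the argument as written gives $O(\opt_d\,d^3/\eps^3)$, not $O(d^3/\eps^2)$. The paper is silent on this point, so as far as the written text is concerned your concern is legitimate, and your geometric search over $\eps_k=2^k\eps$ (accepting the first $T_k$ whose empirically estimated distance to $f$ is $O(\eps_k)$) is a clean patch: every candidate already has depth $\le O(d^3/\eps^2)$, some $k$ with $\eps_k=\Theta(\max(\eps,\opt_d))$ is guaranteed to pass, and the accepted tree's distance is $O(\eps_{k^\star})=O(\opt_d)+\eps$, at a total cost of only an extra $\poly(d,1/\eps)\cdot\log n$ queries. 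So your route is a slight superset of the paper's: same structural and algorithmic machinery, plus an explicit handling of the intermediate-$\opt_d$ regime that the paper leaves implicit.
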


To see that our proof of~\Cref{thm:reconstruct} also establishes~\Cref{thm:reconstruct-depth}, we use the fact that every depth-$d$ decision tree has size $\le 2^d$, and recall that the tree $T$ that the algorithm of~\Cref{thm:reconstruct} provides query access to is a complete tree and hence has depth logarithmic in its size.  

Decision tree depth and Fourier degree of boolean functions are known to be polynomially related: 
 
 \begin{fact}[Decision tree depth vs.~Fourier degree~\cite{Mid04,Tal13}] 
 \label{fact:depth-vs-degree} 
 For $g : \bn\to\bits$ let $\deg(g)$ denote $g$'s Fourier degree and $D(g)$ denote the depth of the shallowest decision tree that computes~$g$.  Then $\deg(g) \le D(g)$ and $D(g) \le \deg(g)^3$.  
 \end{fact}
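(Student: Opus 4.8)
The statement is a pair of inequalities; $\deg(g)\le D(g)$ is elementary and $D(g)\le\deg(g)^3$ carries all the weight.

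For $\deg(g)\le D(g)$ I would fix a shallowest decision tree $T$ for $g$ and, for each leaf $\ell$, let $\rho_\ell$ be the partial assignment read along the root-to-$\ell$ path, which fixes at most $D(g)$ coordinates. The indicator that an input reaches $\ell$ is the product $\prod_{i\in\mathrm{dom}(\rho_\ell)}\tfrac{1+\rho_\ell(i)\,x_i}{2}$, a multilinear polynomial of degree at most $D(g)$. Writing $g$ as the sum, over leaves, of the leaf value times this indicator exhibits $g$ as a multilinear polynomial of degree at most $D(g)$; since the multilinear representation is unique, $\deg(g)\le D(g)$.

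For $D(g)\le\deg(g)^3$, set $d:=\deg(g)$; the plan is to route through block sensitivity. The first ingredient is the Nisan--Szegedy inequality $\mathrm{bs}(g)=O(d^2)$: if $\mathrm{bs}(g)$ is witnessed at an input $x$ by disjoint sensitive blocks $B_1,\dots,B_b$, collapsing each block to a single fresh variable turns $g$ into a multilinear polynomial $p$ on $b$ bits with $\deg(p)\le d$ that equals $g(x)$ at the origin and $1-g(x)$ at each standard basis vector; symmetrizing $p$ yields a univariate polynomial of degree $\le d$ that rises by $1$ over a unit step while staying in $[0,1]$ on $\{0,\dots,b\}$, and the Markov brothers' inequality then forces $b=O(d^2)$. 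The second ingredient is a decision tree. The classical certificate construction gives $D(g)\le\mathrm{C}_0(g)\cdot\mathrm{C}_1(g)$ --- query minimal $1$-certificates greedily, and observe that along any path ending at a $0$-leaf every queried $1$-certificate must meet a fixed minimal $0$-certificate, capping the number of rounds at $\mathrm{C}_0(g)$ --- and, since a maximal family of disjoint minimal sensitive blocks is itself a certificate with at most $\mathrm{bs}_0(g)$ blocks each of size at most $\mathrm{bs}_1(g)$, we get $\mathrm{C}(g)\le\mathrm{bs}(g)^2$, hence $D(g)=O(d^8)$ on its own. To bring this down to $d^3$ I would invoke the refinement of Midrijanis~\cite{Mid04} (see also Tal~\cite{Tal13}): a more economical greedy construction that makes progress while querying only $O(d)$ variables per round, so that the number of rounds along any path is $O(\mathrm{bs}(g))$ and $D(g)=O(\mathrm{bs}(g)\cdot d)$; combined with the first ingredient this is $O(d^3)$, and tracking constants gives the clean $\deg(g)^3$.

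The main obstacle is exactly this last step --- the Midrijanis/Tal improvement that replaces the $\mathrm{C}(g)\approx d^2$ ``per round'' cost of the naive certificate argument by a single degree factor; the rest (the easy direction, symmetrization, and the certificate/block-sensitivity relations) is standard. A minor technical point inside Nisan--Szegedy is that the symmetrized univariate polynomial is a priori only controlled at the integers $\{0,\dots,b\}$, so before applying Markov's inequality one invokes the standard estimate that a degree-$\le d$ univariate polynomial bounded on that grid is not too large on all of $[0,b]$.
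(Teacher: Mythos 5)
The paper states \Cref{fact:depth-vs-degree} as a known result, citing \cite{Mid04,Tal13}, and gives no proof, so there is no in-paper argument to compare against. Your reconstruction follows the standard route and is correct in outline: the easy direction $\deg(g)\le D(g)$ via the leaf-indicator expansion of a shallowest tree, Nisan--Szegedy's $\mathrm{bs}(g)=O(\deg(g)^2)$ via symmetrization (with the Ehlich--Zeller/Rivlin--Cheney step you correctly flag as needed to pass from control on the integer grid to control on the interval), and Midrijanis's $D(g)=O(\deg(g)\cdot\mathrm{bs}(g))$.

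The step you identify as the main obstacle is real but is discharged by one clean lemma, which your sketch leaves implicit. Call a degree-$d$ monomial with nonzero coefficient in the multilinear expansion of $g$ a \emph{maxonomial}, and let $M$ be one. Any restriction that fixes all variables \emph{outside} $M$ still has $M$ as its (unique) top-degree monomial, since a monomial $S$ could only contribute to $M$'s coefficient after such a restriction if $S\supseteq M$ and hence $|S|>d$, which is impossible; so that restriction is non-constant, and therefore every input of $g$ has a sensitive block contained inside $M$. Now, for any assignment $\sigma$ to the variables of $M$, the disjoint sensitive blocks of $g_\sigma$ at any input live outside $M$ and can be augmented by such a block inside $M$, which shows $\mathrm{bs}_0(g_\sigma)+\mathrm{bs}_1(g_\sigma)\le\mathrm{bs}_0(g)+\mathrm{bs}_1(g)-2$ whenever $g$ is non-constant. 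The greedy ``query a maxonomial, restrict, repeat'' tree therefore has at most $\mathrm{bs}(g)$ rounds of at most $d$ queries each, giving $D(g)\le\deg(g)\cdot\mathrm{bs}(g)=O(\deg(g)^3)$.

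One small slip in your exposition: you attribute a per-round cost of $C(g)\approx d^2$ to the naive $D\le C_0(g)C_1(g)$ baseline, but $C(g)\le\mathrm{bs}(g)^2=O(d^4)$; the $O(d^8)$ figure you correctly arrive at comes from both the round count and per-round cost being $O(d^4)$. Midrijanis's argument improves both: the round count drops to $O(\mathrm{bs}(g))=O(d^2)$ and the per-round cost to $d$, and both improvements are needed to reach $O(d^3)$.
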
 
 
 We first observe \Cref{thm:reconstruct-depth} and~\Cref{fact:depth-vs-degree} already gives a quantitatively weaker version of~\Cref{cor:reconstruct-degree} where $g$ has degree $O(d^{\,9}/\eps^2)$.  To see this $f : \bn\to\bits$ be $\opt_d$-close to a degree-$d$ function $h : \bn\to\bits$.  By~\Cref{fact:depth-vs-degree}, $D(h) \le \deg(h)^3$, and so the algorithm of~\Cref{thm:reconstruct-depth} provides query access to a decision tree $T : \bn\to\bits$ that is $(O(\opt_d) + \eps)$-close to $f$ and where the depth of $T$ is $O(D(h)^3/\eps^2) = O(\deg(h)^9/\eps^2)$.  Applying~\Cref{fact:depth-vs-degree} again, we conclude that $\deg(T) \le D(T) \le O(\deg(h)^9/\eps^2)$. 
 
 To obtain the sharper bound of $O(\deg(h)^7/\eps^2)$, we observe that the proof of~\Cref{lemma:BGLT} in fact bounds the depth of $T$ by $O(D(h)^2\, \Inf(h)/\eps^2)$.  
 Influence and degree of boolean functions are related via the following basic fact (see e.g.~\cite[Theorem 37]{ODbook}): 
 \begin{fact}
 For all $h : \bn\to\bits$, we have $\Inf(h) \le \deg(h)$.  
 \end{fact}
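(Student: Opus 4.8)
The Fact is a one-line Fourier computation, so I will record that first and then concentrate on how it, together with the pieces already assembled in this section, yields~\Cref{cor:reconstruct-degree} with the claimed exponent~$7$. For the Fact itself: expanding total influence in the Fourier basis gives $\Inf(h)=\sum_{i=1}^n\Inf_i(h)=\sum_{S\sse[n]}|S|\cdot\wh h(S)^2$, where the per-coordinate identity $\Inf_i(h)=\sum_{S\ni i}\wh h(S)^2$ is exactly the place that uses Boolean-valuedness (via $D_ih$ being $\{-1,0,1\}$-valued, so $\|D_ih\|_2^2=\Prx_{\bx}[h(\bx)\ne h(\bx^{\oplus i})]$). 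Since $\wh h(S)=0$ whenever $|S|>\deg(h)$, bounding $|S|\le\deg(h)$ on the support and applying Parseval together with $h(\bx)^2\equiv1$ gives $\Inf(h)\le\deg(h)\sum_{S}\wh h(S)^2=\deg(h)$. There is no real obstacle here; the only subtlety is that the clean formula for $\Inf_i$ genuinely needs $h$ to be $\pm1$-valued.

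To then close out~\Cref{cor:reconstruct-degree}: let $f$ be $\opt_d$-close to a function $h:\bn\to\bits$ with $\deg(h)\le d$. As the surrounding text observes, a closer reading of the proof of~\Cref{lemma:BGLT} — keeping the finer quantities $\sum_i\lambda_i(T^\star)=\Ex_{\bell\sim T^\star}[|\bell|]\le D(h)$ and $\Inf(T^\star)\le\Inf(h)$ in place of the coarse $\log s$ bounds in~\Cref{lem:noisy-OSSS-truncated-agnostic-fixed} and in~\Cref{prop:NS-of-f}, and retuning the noise rate $p$ accordingly — shows that the complete tree $T$ produced by~\Cref{thm:reconstruct-depth} has depth $O(D(h)^2\cdot\Inf(h)/\eps^2)$ while still satisfying $\dist(f,T)\le O(\opt_d)+\eps$. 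Plugging in $D(h)\le\deg(h)^3\le d^3$ from~\Cref{fact:depth-vs-degree} and $\Inf(h)\le\deg(h)\le d$ from the Fact, the depth of $T$ is $O(d^7/\eps^2)$; applying~\Cref{fact:depth-vs-degree} once more gives $\deg(T)\le D(T)=O(d^7/\eps^2)$, which is the stated degree bound. The per-query complexity, running time, and the fixed-function and locality guarantees are all inherited unchanged from~\Cref{thm:reconstruct-depth} (equivalently~\Cref{thm:reconstruct}).

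The step that requires the most care is the refined depth bound for~\Cref{lemma:BGLT}: one must check that every appearance of $\log s$ in the proofs of~\Cref{lemma:BGLT} and~\Cref{lem:noisy-OSSS-truncated-agnostic-fixed} is in fact an appearance of either the expected leaf-depth or the total influence of the comparison tree $T^\star$, and then re-optimize $p$ so that the starting potential $\NS_p(f)\le p\cdot\Inf(T^\star)+2\opt_d$ and the per-iteration decrease $\Theta(\eps^2 p/D(h)^2)$ combine to give the claimed $O(D(h)^2\cdot\Inf(h)/\eps^2)$ iteration count. Everything past that is bookkeeping — the depth-versus-degree relations and the Fact enter as black boxes, and no new analytic or probabilistic ingredient beyond what is already in~\Cref{sec:structural} is needed.
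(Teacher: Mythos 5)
Your proof of the Fact is correct and is the standard Fourier argument (the paper merely cites \cite{ODbook} rather than spelling it out): writing $\Inf(h)=\sum_{S}|S|\,\wh h(S)^2$, noting the sum is supported on $|S|\le\deg(h)$, and applying Parseval with $h(\bx)^2\equiv 1$ gives the bound, with Boolean-valuedness used exactly where you say. The additional material tracing how $\Inf(h)\le\deg(h)$ feeds into the $O(d^7/\eps^2)$ exponent of \Cref{cor:reconstruct-degree} goes beyond the Fact itself but is consistent with the paper's sketch.
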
 
Therefore, we can bound the degree of $T$ by $O(D(h)^2\,\Inf(h)/\eps^2) \le O(\deg(h)^7/\eps^2)$.  

Guarantees for the other measures listed in~\Cref{table:other-measures} follow from similar calculations and known quantitative relationships between these measures and decision tree complexity; the current best bounds are summarized in Table 1 of~\cite{ABDKRT20}.

%
%
%
%
%
%


\section{Proof of~\Cref{thm:lower-bound-intro}}

In this section, we prove the following theorem:

\begin{theorem}[Tolerant testing of DTs $\Rightarrow$ Proper learning of DTs]
    \label{thm:lower-bound}
    Let $c > 0$ be an absolute constant and $\mathcal{A}$ be an algorithm with the following guarantee. Given query access to $f: \bn \to \bits$ and parameters $s \in \N $ and $\eps \in (0,1)$, the algorithm $\mathcal{A}$: 
    \begin{itemize}
        \item[$\circ$] Accepts w.h.p.~if $f$ is $\eps$-close to a size-$s$ decision tree; 
        \item[$\circ$] Rejects w.h.p.~if $f$ is $(c\eps)$-far from all size-$s$ decision trees.
    \end{itemize}
    Then there is an algorithm $\mathcal{B}$ with the following guarantee. Given parameters $s' \in \N$ and $\eps' \in (0,1)$, and query access to a function $g : \bn\to\bits$ that is computed by a size-$s'$ decision tree, $\mathcal{B}$ makes $\poly(s', n, 1/\eps')$ calls to $\mathcal{A}$, each with parameters $s \leq s'$ and $\eps \ge  \poly(1/s',\eps')$, and produces a decision tree which is $\eps'$-close to $g$ with high probability. Furthermore, the auxiliary computation that $g$ does takes time $\poly(n, s', 1/\eps')$.
\end{theorem}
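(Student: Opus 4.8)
The plan is to use the hypothesized tester $\mathcal{A}$ as an oracle that guides a top-down construction of a decision tree for $g$. The key fact to exploit is that, since $g$ is \emph{exactly} computed by a size-$s'$ decision tree, every restriction $g_\rho$ of $g$ is exactly computed by a decision tree of size at most $s'$. So $\mathcal{B}$ builds a tree $T$ recursively, carrying along a restriction $\rho$ and a size budget $k$, starting from the empty restriction and budget $k=s'$. At a node $(\rho,k)$: first call $\mathcal{A}$ on $g_\rho$ with size parameter $1$ and accuracy $\eps=\poly(1/s',\eps')$; if it accepts, or if $k=1$, make $(\rho,k)$ a leaf labelled by the majority value of $g_\rho$, estimated from $\poly(s',1/\eps')$ random samples. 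Otherwise, for each variable $x_i$ and each way to write $k=k_0+k_1$ with $k_0,k_1\ge 1$, call $\mathcal{A}$ on $g_{\rho,x_i=0}$ with parameters $(k_0,\eps)$ and on $g_{\rho,x_i=1}$ with parameters $(k_1,\eps)$, and recurse on the children of the first $(i,k_0,k_1)$ for which both calls accept.

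Because the budget is partitioned among the two children, the tree $T$ has at most $s'$ leaves, hence fewer than $s'$ internal nodes, hence $\mathcal{B}$ makes $\poly(n,s')$ calls to $\mathcal{A}$ (at most $O(n(s')^2)$, up to an amplification factor), each with a size parameter at most $s'$ and accuracy $\poly(1/s',\eps')$; forming the restricted queries and drawing the leaf-label samples takes $\poly(n,s',1/\eps')$ additional time. For correctness I would first argue that a valid split is always found when one is sought: at such a node $g_\rho$ is exactly a decision tree of size at least $2$ and at most $k$, so taking its root variable together with the sizes of its two subtrees yields a split $(i^\star,k_0^\star,k_1^\star)$ on which each of $g_{\rho,x_{i^\star}=0}$, $g_{\rho,x_{i^\star}=1}$ is \emph{exactly} a decision tree of size at most $k_0^\star$, $k_1^\star$, so $\mathcal{A}$ accepts both calls w.h.p. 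Next, the leaf labels are accurate: $(\rho,k)$ is made a leaf only when $\mathcal{A}$ accepted on $g_\rho$ with size parameter $1$, which by soundness of $\mathcal{A}$ certifies that $g_\rho$ is $O(\eps)$-close to a constant, so the sampled majority label is $O(\eps+\eps')$-far from $g_\rho$. Since $\dist(g,T)=\sum_{\text{leaves }\ell}2^{-|\ell|}\dist(g_{\rho_\ell},\mathrm{label}_\ell)$ and the leaf weights sum to $1$, this gives $\dist(g,T)\le O(\eps+\eps')\le \eps'$ for suitable $\eps$ and sample sizes. All failure events — the $\poly(n,s',1/\eps')$ invocations of $\mathcal{A}$ and the sampling estimates, each amplified to failure probability a small enough inverse polynomial — are handled by a union bound.

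The step I expect to be the main obstacle is controlling the interaction between budgets and the fact that $\mathcal{A}$, when it accepts, only certifies $O(\eps)$-\emph{closeness} to a small decision tree rather than equality: an accepted-but-``wrong'' split can pass a child a budget strictly smaller than the true decision-tree size of its subfunction, so the invariant ``$g_\rho$ is exactly a size-$\le k$ tree'' need not survive, and at such a child no valid split need exist. To handle this I would either recompute an adequate budget at each node by calling $\mathcal{A}$ with increasing size parameters (a coarse search that always returns a genuine upper bound on what is needed), or cap the recursion depth at $D=O(\log(s'/\eps'))$, using the fact that any size-$s'$ decision tree is $(\eps'/2)$-close to a depth-$D$ one, so that whatever error is incurred at a ``forced'' leaf is confined to a set of leaves of total weight $O(\eps')$. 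The remaining pieces — the concentration bound for the leaf-label estimates, the reduction to the case $\eps'\le$ a small absolute constant (run $\mathcal{B}$ with a smaller target accuracy), and the union bound — are routine.
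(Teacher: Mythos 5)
You correctly flag the main obstacle---after one level of recursion the invariant ``$g_\rho$ is \emph{exactly} a size-$\le k$ tree'' breaks, so the completeness argument for finding an accepting split no longer applies---but neither proposed fix resolves it. At a node where $g_\rho$ is only $c\eps$-close to a size-$k$ tree $T$, the two restrictions $g_{\rho,x_i=b}$ along $T$'s root variable have $\opt_{k_0}(g_{\rho,x_i=0})+\opt_{k_1}(g_{\rho,x_i=1})\le 2c\eps$; in the worst case one side is $2c\eps$-far from every size-$k_0$ tree, so $\mathcal{A}$ rejects on it w.h.p.\ (since $2c\eps > c\eps$), and can likewise reject on every other split. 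You have no bound on the total weight of such stuck nodes. Recomputing the budget by an upward search does not help---the subfunction is still only $c\eps$-close to a small tree, so the same problem recurs one level down---and it also destroys the budget-conservation property you need to bound the number of leaves by $s'$, so the depth-$D$ argument no longer controls the weight of the truncated leaves. The depth cap alone bounds the weight of depth-$D$ leaves by $s'\cdot 2^{-D}=O(\eps')$ (via budget conservation), but says nothing about stuck nodes at depth $<D$.

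The paper's algorithm never gets stuck because it does not search for an ``accepting'' split; it \emph{compares} all splits. It first turns $\mathcal{A}$ into a distance estimator (run $\mathcal{A}$ with $\eps=\gamma/c, 2\gamma/c,\ldots,1$ and return the largest $\eps$ on which it rejects, giving $\boldeta\le\opt_s(g)\le c\,\boldeta+\gamma$), and then $\BuildDT$ computes, for every $(i,s_0,s_1)$, the average of the two child estimates and recurses on the minimizer. This yields the recurrence $\dist(T,f)\le c^d\,\opt_s(f)+\gamma\cdot\tfrac{c^d-1}{c-1}+\tfrac{s}{2^{d+2}}$: the multiplicative $c$ per level is the price of the estimator's approximation ratio and is absorbed by taking $\gamma=\poly(\eps'/s')$, while the additive $s/2^{d+2}$ is exactly your depth-cap term at $d=\log(s'/\eps')-1$. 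The missing ingredient in your proposal is a way to rank splits when none is clean; the binary accept/reject signal cannot supply that ranking, and that is precisely why your recursion can dead-end.
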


\Cref{thm:lower-bound-intro} follows as a special case of \Cref{thm:lower-bound}. 

We prove~\Cref{thm:lower-bound} in two steps:

\begin{enumerate}
    \item A tolerant tester implies an algorithm for estimating the distance of any function to the class of size-$s$ decision trees. This is well known~\cite{PRR06} and applies to any function class, not just decision trees.
    \item An algorithm for estimating distance to decision trees implies a proper learner for decision trees. Here, we take advantage of the structure of decision trees.
\end{enumerate}

For a function $g : \bn\to\bits$ and $s\in \N$, we write $\opt_s(g)$ to denote the distance of $g$ to the closest size-$s$ decision tree. 

\begin{lemma}[Tolerant testing $\Rightarrow$ distance estimation~\cite{PRR06}] 
    \label{lem:distance-estimator}
    Let $c$ and $\mathcal{A}$ be as in \Cref{thm:lower-bound}. There exists an estimator $\mathcal{E}$ with the following guarantee. Given query access to $g: \bn \to \bits$ and parameters $s' \in \N$ and $\gamma \in (0,1)$, the estimator $\mathcal{E}$ makes $c/\gamma$ calls to $\mathcal{A}$ and returns an $\boldeta$ that with high probability satisfies
\[         \boldeta \leq  \opt_s(g) \leq c \cdot \boldeta + \gamma.
   \]
    Furthermore, the auxiliary computation of $g$ takes time $O(c/\gamma)$.
\end{lemma}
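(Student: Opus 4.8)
The plan is a grid search over the accuracy parameter of $\mathcal{A}$, exploiting the fact that the accept threshold ($\eps$) and reject threshold ($c\eps$) of $\mathcal{A}$ pin $\opt_s(g)$ down to within a narrow window. First I would fix the grid $\eps^{(j)} \coloneqq j\gamma/c$ for $j = 0,1,2,\ldots$, so that consecutive grid points are $\gamma/c$ apart; since a constant (size-$1$) tree already achieves $\opt_s(g)\le\tfrac12$, only the indices $j$ with $\eps^{(j)}\le 1$ matter, of which there are at most $c/\gamma$. The estimator $\mathcal{E}$ calls $\mathcal{A}(g,s,\eps^{(j)})$ for $j=1,2,\ldots$ in increasing order, stopping at the first index $j^\star$ at which $\mathcal{A}$ accepts, and outputs $\boldeta \coloneqq \eps^{(j^\star-1)}$. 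Such a $j^\star$ exists and is at most $c/\gamma$: once $\eps^{(j)}\ge\tfrac12\ge\opt_s(g)$, $\mathcal{A}$ must accept w.h.p. This makes at most $c/\gamma$ calls to $\mathcal{A}$, with only $O(1)$ bookkeeping per call, so the auxiliary running time is $O(c/\gamma)$.

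Next I would prove the sandwich on the event $E$ that each of the (at most $c/\gamma$) invocations of $\mathcal{A}$ behaves as promised, i.e.\ accepts whenever $\opt_s(g)\le\eps^{(j)}$ and rejects whenever $\opt_s(g) > c\,\eps^{(j)}$ (its behaviour in between being unconstrained). A union bound over the grid points shows $\Pr[E]$ is close to $1$, provided $\mathcal{A}$'s failure probability is $o(\gamma/c)$; if $\mathcal{A}$ only has a constant success probability this is arranged in the usual way by repeating each call $O(\log(c/\gamma))$ times and taking a majority, a logarithmic overhead we suppress. On $E$: acceptance of $\mathcal{A}(g,s,\eps^{(j^\star)})$ rules out $\opt_s(g) > c\,\eps^{(j^\star)}$, so $\opt_s(g) \le c\,\eps^{(j^\star)} = c\,\eps^{(j^\star-1)} + \gamma = c\boldeta + \gamma$; and either $j^\star = 1$, in which case $\boldeta = \eps^{(0)} = 0 \le \opt_s(g)$, or $\mathcal{A}(g,s,\eps^{(j^\star-1)})$ rejected (by minimality of $j^\star$), which rules out $\opt_s(g)\le\eps^{(j^\star-1)}$, so $\boldeta = \eps^{(j^\star-1)} \le \opt_s(g)$. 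Hence $\boldeta \le \opt_s(g) \le c\boldeta + \gamma$, as required.

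There is no serious obstacle here; the point to get right is just that the tester's unspecified ``don't care'' regime $\opt_s(g)\in(\eps,c\eps]$ prevents recovering $\opt_s(g)$ exactly, which is why the grid spacing must be $\gamma/c$ rather than $\gamma$ (so the slack matches the $c\boldeta+\gamma$ in the target) and why a linear scan --- rather than a binary search, which would need a monotonicity that the don't-care regime does not guarantee --- is used, accounting for the $c/\gamma$ call bound. The only other thing to check is termination of the scan, which follows from $\opt_s(g)\le\tfrac12$ together with $\mathcal{A}$ being invoked only with $\eps\in(0,1)$.
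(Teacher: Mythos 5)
Your proposal is correct and follows essentially the same grid-search argument as the paper; the only cosmetic difference is that you stop at the first accepting grid point and output the preceding one, whereas the paper scans the whole grid and outputs the largest rejecting point (both yield the same sandwich). Your treatment is if anything slightly more careful than the paper's, explicitly handling the $j^\star = 1$ edge case (so that $\boldeta = 0 \le \opt_s(g)$) and noting the union bound and amplification needed when $\mathcal{A}$ has only constant success probability.
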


\Crefname{enumi}{Step}{Steps}

\begin{proof}
The algorithm $\mathcal{E}$ runs $\mathcal{A}$ with $\eps = \frac{\gamma}{c},  \frac{2\gamma}{c}, \frac{3\gamma}{c},\ldots, 1$, and sets $\boldeta$ to be the largest $\eps$ for which $\mathcal{A}(g, s, \eps)$ rejects. 
Since $\mathcal{A}(g, s, \boldeta)$ rejected, 
\[         \boldeta < \opt_s(g)\] 
with high probability.    Furthermore, since $\mathcal{A}(g, s, \boldeta + \frac{\gamma}{c})$ accepted,
    \begin{align*}
        \opt_s(g) &< c \cdot \big(\boldeta + \lfrac{\gamma}{c}\big) \\
         &= c \cdot \boldeta + \gamma
    \end{align*}
with high probability.  Finally, we note that $\mathcal{E}$ indeed makes $c/\gamma$ calls to $\mathcal{A}$, and aside from those calls, it only needs to make a single pass over the output of those calls and return the largest $\eps$ that led to a rejection, which takes time $O(c/\gamma)$.
\end{proof}

We are now ready to state our algorithm, $\BuildDT$ (\Cref{fig:BuildDT}), for properly learning size-$s'$ decision trees.  $\BuildDT$ will additionally take in a depth parameter $d$ that will facilitate our analysis of it (looking ahead, $d$ will be chosen to be $O(\log(s'/\eps'))$ in our proof of~\Cref{thm:lower-bound}).
\begin{figure}[h]
  \captionsetup{width=.9\linewidth}
\begin{tcolorbox}[colback = white,arc=1mm, boxrule=0.25mm]
\vspace{3pt} 

$\BuildDT(f,s,d,\gamma)$:

\begin{enumerate}[align=left]
    \item[\textbf{Input:}] Query access to $f: \bn \to \bits$, parameters $s,d\in \N$ and $\gamma \in (0,1)$. 
    \item[\textbf{Output:}] A size-$s$ depth-$d$ decision tree $T$.
\end{enumerate}
\begin{enumerate}
    \item \label{alg-base-case} If $s = 1$ or $d = 0$, return $\sign(\E[f])$.
    \item For each $i \in [n]$ and integers $s_0, s_1 \geq 1$ satisfying $s_0 + s_1 = s$:
    \begin{enumerate}
        \item \label{subroutine} Use $\mathcal{E}$ from~\Cref{lem:distance-estimator} to obtain estimates $\boldeta(x_i=0,s_0)$ and $\boldeta(x_i=1,s_1)$ that satisfy:
        \begin{align*}
           \boldeta(x_i=0,s_0) \leq \opt_{s_0}(f_{x_i = 0}) \leq c \cdot \boldeta(x_i=0,s_0) + \gamma; \\
           \boldeta(x_i=1,s_1) \leq \opt_{s_1}(f_{x_i = 1}) \leq c \cdot \boldeta(x_i=1,s_1) + \gamma.
        \end{align*}
        \item Store $\mathrm{error}(i, s_1, s_2) \leftarrow \frac{1}{2} \big(\boldeta(x_i=0,s_0) + \boldeta(x_i=1,s_1)\big)$.
    \end{enumerate}
    \item \label{minerror} Let $(i^\star, s_0^\star, s_1^\star)$ be the tuple that minimizes $\mathrm{error}(i, s_0, s_1)$. Output the tree with $x_{i^\star}$ as its root, $\BuildDT(f_{x_{i^\star} = 0}, s_0^\star, d-1,\gamma)$ as its left subtree, and $\BuildDT(f_{x_{i^\star} = 1}, s_1^\star, d-1,\gamma)$ as its right subtree.
\end{enumerate}

\end{tcolorbox}
\caption{$\BuildDT$ computes a size-$s$ depth-$d$ decision tree that approximates a target function $f: \bn \to \bits$.} 
\label{fig:BuildDT}
\end{figure}


\begin{lemma}[Error of $\BuildDT$]
    \label{lemma:build-dt-error}
    For all functions $f: \bn \to \bits$ and parameters $s,d \in \N$ and $\gamma\in (0,1)$, the algorithm $\BuildDT(f,s,d,\gamma)$ outputs a decision tree $T$ satisfying
    \begin{align}
        \label{eq:buildDT-accuracy}
        \dist(T, f) \leq c^d \cdot \opt_s(f) + \gamma \cdot \frac{c^d - 1}{c-1} + \frac{s}{2^{d+2}}.
    \end{align}
\end{lemma}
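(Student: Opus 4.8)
The plan is to prove the error bound for $\BuildDT$ by induction on $d$, the depth parameter, carefully tracking how errors compound as we recurse.

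\medskip

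\noindent\textbf{Base case.} When $d = 0$ or $s = 1$, the algorithm returns the single-leaf tree $\sign(\E[f])$. In this case I claim $\dist(T,f) \le \opt_s(f) + \tfrac{s}{4}$ when $s = 1$ (since the best size-$1$ tree is itself a single leaf, and $\sign(\E[f])$ is the optimal leaf label, so actually $\dist(T,f) = \opt_1(f)$), and when $d = 0$ we have $\dist(T,f) = \dist(f, \sign(\E[f])) \le \tfrac12 \le \tfrac{s}{4}$ for $s \ge 2$ — and more carefully $\dist(f,\sign(\E[f])) \le \tfrac12$, which we need to check is bounded by $c^0 \opt_s(f) + 0 + \tfrac{s}{2^2} = \opt_s(f) + \tfrac{s}{4}$; this holds since $s \ge 2$ gives $\tfrac{s}{4} \ge \tfrac12$. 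So the base case of~\eqref{eq:buildDT-accuracy} holds.

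\medskip

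\noindent\textbf{Inductive step.} Assume the bound holds for depth $d-1$. Fix $f, s, \gamma$ with $d \ge 1$, $s \ge 2$. Let $T^\star$ be an optimal size-$s$ decision tree for $f$, so $\dist(f,T^\star) = \opt_s(f)$. If $T^\star$ is a single leaf, handle it separately (then $\opt_s(f) \ge \dist(f,\sign\E[f]) - o(1)$-type reasoning, or simply note the algorithm could also choose a leaf; cleanest is to observe the min over $(i,s_0,s_1)$ competes against the root variable $i^\star_{T^\star}$ of $T^\star$). In the main case, $T^\star$ queries some variable $x_{i_0}$ at its root with left/right subtrees of sizes $s_0^{T^\star}, s_1^{T^\star}$ summing to $\le s$; these witness $\opt_{s_0^{T^\star}}(f_{x_{i_0}=0}) + \opt_{s_1^{T^\star}}(f_{x_{i_0}=1}) \le 2\,\opt_s(f)$ (using that restricting and averaging the two halves recovers $\dist(f,T^\star)$). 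By~\Cref{lem:distance-estimator}, the estimates satisfy $\boldeta(x_i=b,s_b) \le \opt_{s_b}(f_{x_i=b})$, so $\mathrm{error}(i_0, s_0^{T^\star}, s_1^{T^\star}) \le \tfrac12(\opt_{s_0^{T^\star}}(f_{x_{i_0}=0}) + \opt_{s_1^{T^\star}}(f_{x_{i_0}=1})) \le \opt_s(f)$. Hence the minimizing tuple $(i^\star, s_0^\star, s_1^\star)$ has $\mathrm{error}(i^\star,s_0^\star,s_1^\star) \le \opt_s(f)$ as well, which in turn gives $\boldeta(x_{i^\star}=0,s_0^\star) + \boldeta(x_{i^\star}=1,s_1^\star) \le 2\,\opt_s(f)$, and then by the upper side of~\Cref{lem:distance-estimator}, $\opt_{s_b^\star}(f_{x_{i^\star}=b}) \le c\,\boldeta(x_{i^\star}=b,s_b^\star) + \gamma$ for each $b$.

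\medskip

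\noindent\textbf{Combining.} Write $T$ for the output tree, with root $x_{i^\star}$ and subtrees $T_0 = \BuildDT(f_{x_{i^\star}=0},s_0^\star,d-1,\gamma)$, $T_1 = \BuildDT(f_{x_{i^\star}=1},s_1^\star,d-1,\gamma)$. Since a uniform input goes to each side with probability $\tfrac12$, $\dist(T,f) = \tfrac12\big(\dist(T_0, f_{x_{i^\star}=0}) + \dist(T_1, f_{x_{i^\star}=1})\big)$. Apply the inductive hypothesis to each term, with parameters $s_b^\star$ and $d-1$, and then substitute $\opt_{s_b^\star}(f_{x_{i^\star}=b}) \le c\,\boldeta(x_{i^\star}=b,s_b^\star) + \gamma$. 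This yields
\[
\dist(T,f) \le \tfrac12\sum_{b\in\{0,1\}}\Big( c^{d-1}\big(c\,\boldeta(x_{i^\star}{=}b,s_b^\star)+\gamma\big) + \gamma\cdot\tfrac{c^{d-1}-1}{c-1} + \tfrac{s_b^\star}{2^{d+1}}\Big).
\]
Now use $\tfrac12\sum_b \boldeta(x_{i^\star}{=}b,s_b^\star) = \mathrm{error}(i^\star,s_0^\star,s_1^\star) \le \opt_s(f)$ for the leading term, $s_0^\star + s_1^\star = s$ for the last term giving $\tfrac{s}{2^{d+2}}$, and collect the $\gamma$ terms: $c^{d-1}\gamma + \gamma\tfrac{c^{d-1}-1}{c-1} = \gamma\tfrac{c^d - 1}{c-1}$ after simplification. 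This gives exactly~\eqref{eq:buildDT-accuracy}.

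\medskip

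\noindent\textbf{Anticipated obstacle.} The main subtlety is the bookkeeping around which tuple the algorithm actually selects versus the tuple coming from the optimal tree $T^\star$: we only know $T$ uses the \emph{error-minimizing} tuple, so we must push the inequality $\mathrm{error}(i^\star,s_0^\star,s_1^\star) \le \mathrm{error}(i_0, s_0^{T^\star}, s_1^{T^\star}) \le \opt_s(f)$ through the one-sided guarantees of $\mathcal{E}$ in the right direction — the lower bound $\boldeta \le \opt$ is used to bound $\mathrm{error}(i_0,\cdot,\cdot)$ from above, while the upper bound $\opt \le c\boldeta + \gamma$ is used afterward on the selected tuple to feed the recursion. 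Getting these two directions straight, together with correctly handling the degenerate cases where $T^\star$ (or a recursive subtree) is a single leaf so that no root variable is available, is where care is needed; everything else is the geometric-series arithmetic in the $\gamma$ and $\tfrac{s}{2^{d+2}}$ terms.
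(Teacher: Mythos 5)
Your proposal is correct and follows essentially the same route as the paper: induction with the same two base cases ($s=1$ and $d=0,s\ge 2$), comparing the algorithm's chosen split against the root split of an optimal size-$s$ tree to obtain $\mathrm{error}(i^\star,s_0^\star,s_1^\star)\le\opt_s(f)$ via the lower-bound side of the estimator, and then applying the inductive hypothesis to the two subtrees and the upper-bound side of the estimator, with the same geometric-series bookkeeping for the $\gamma$ term. The one point you flag as a possible obstacle — the degenerate case where $T^\star$ is a single leaf — is glossed over in the paper too, and is harmless for the reason you suggest: a single-leaf tree labeled $b$ still witnesses $\tfrac12\big(\opt_{s_0}(f_{x_i=0})+\opt_{s_1}(f_{x_i=1})\big)\le\Pr[f(\bx)\ne b]=\opt_s(f)$ for any split.
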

\begin{proof}
    We proceed by induction on $s$ and $d$. If $s = 1$, then in \Cref{alg-base-case}, \BuildDT\ outputs the best decision tree of size $1$. Therefore, $\dist(T,f) \leq \opt_s(f)$, satisfying \Cref{eq:buildDT-accuracy}. If $d = 0$ and $s \geq 2$, then $\frac{s}{2^{d+2}} \geq \frac{2}{4} = \frac1{2}$. Furthermore, in \Cref{alg-base-case}, \BuildDT\ always outputs a tree with error at most $\frac{1}{2}$. Therefore,
\[        \dist(T, f) \leq  \frac{s}{2^{d+2}} \leq c^d \cdot \opt_s(f) + \gamma \cdot \frac{c^d - 1}{c-1} + \frac{s}{2^{d+2}}.\] 
    Finally, we consider the case where $d \geq 1$ and $s \geq 2$. Let $T_{\opt}$ be the size-$s$ decision tree that is $\opt_s(f)$ close to $f$. Let $x_{i_{\opt}}$ the root of $T_{\opt}$, and $s_{0, \opt}$, $s_{1, \opt}$ the sizes of the left and right subtrees of $T_{\opt}$ respectively. Since the estimates computed in \Cref{subroutine} are underestimates of or equal to the true error (i.e.~$\error(i_{\opt}, s_{0, \opt}, s_{1, \opt}) \leq \opt_s(f)$), and 
since $i^\star, s_0^\star, s_1^\star$ are chosen in~\Cref{minerror} to minimize the estimated error, we have 
 \[ 
        \error(i^\star, s_0^\star, s_1^\star) \leq \error(i_{\opt}, s_{0, \opt}, s_{1, \opt}) \leq \opt_s(f).
 \] 
  Finally, we bound $\dist(T, f)$. Let $T_0$ and $T_1$ be the left and right subtrees of $T$. Then,
    \begin{align*}
        \dist(T, f) &= \lfrac{1}{2} \big(\dist(T_0, f_{x_{i^\star} = 0}) + \dist(T_1, f_{x_{i^\star} = 1}) \big) \\
            &\leq \lfrac{1}{2} \ds \Big(c^{d-1} \cdot \opt_{s_0^\star}(f_{x_{i^\star} = 0}) + \gamma \cdot \frac{c^{d-1} - 1}{c-1} + \frac{s_0^\star}{2^{d+1}} \\
            & \ \ \ \ +   c^{d-1} \cdot \opt_{s_1^\star}(f_{x_{i^\star} = 1}) + \gamma \cdot \frac{c^{d-1} - 1}{c-1} + \frac{s_1^\star}{2^{d+1}} \Big) \tag{Inductive hypothesis} \\
            &= c^{d-1} \cdot \lfrac{1}{2}\, \ds \big(\opt_{s_0^\star}(f_{x_{i^\star} = 0}) + \opt_{s_1^\star}(f_{x_{i^\star} = 1})\big) + \gamma \cdot \frac{c^{d-1} - 1}{c-1} +  \frac{s_0^\star + s_1^\star}{2^{d+2}} \\
            &\leq c^{d-1} \cdot \lfrac{1}{2} \ds \big((c \cdot \boldeta(x_{i^\star}=0,s_0^\star) + \gamma) + (c \cdot \boldeta(x_{i^\star} = 1, s_1^\star) + \gamma)\big) + \gamma \cdot \frac{c^{d-1} - 1}{c-1} + \frac{s}{2^{d+2}}\\
            &= c^{d} \cdot \lfrac{1}{2} \ds \,\big(\boldeta(x_{i^\star} = 0, s_0^\star) + \boldeta(x_{i^\star} = 1, s_1^\star)\big) + c^{d-1} \cdot \gamma + \gamma \cdot \frac{c^{d-1} - 1}{c-1} + \frac{s}{2^{d+2}} \\
            &= c^{d} \cdot  
            \error(i^\star, s_0^\star, s_1^\star) + \gamma \cdot \frac{c^{d-1}(c-1) + c^{d-1} - 1}{c-1} + \frac{s}{2^{d+2}} \\
            &\leq c^d\cdot \opt_s(f) + \gamma \cdot \Big(\frac{c^d  - 1}{c-1}\Big) + \frac{s}{2^{d+2}}.
    \end{align*}
    The desired result holds by induction.
\end{proof}

For readability, \Cref{lemma:build-dt-error} assumes that $\BuildDT$ is able to compute $\round(\E[f])$ in \Cref{alg-base-case}. To make $\BuildDT$ efficient, we would only estimate $\E[f]$ by querying $f$ on uniform random inputs $\bx \in \bn$. If those estimates are computed to accuracy $\eps'$, then each leaf of our tree can have up to $\eps'$ additional error.  This is not an issue since it increases the total error of $T$, which is simply the average of the error at each leaf, by only $\eps'$.

Finally, we prove \Cref{thm:lower-bound}:
\begin{proof}[Proof of \Cref{thm:lower-bound}]
    Our goal is to properly learn a size-$s'$ decision tree $g: \bn \to \bits$ to accuracy $\eps'$. To do so, we run $\BuildDT(g, s',d,\gamma)$,  with $d$ set to
    \begin{align*}
        d = \log(s'/\eps') - 1,
    \end{align*}
    and $\gamma$ set to
\[      \gamma = \frac{\eps'}{2 \cdot \max(2, c)^d} = \frac{\eps'}{2} \cdot (2^{-d})^{\max(1, \log c)}
            = \frac{\eps'}{2} \cdot \left(\frac{\eps'}{s'}\right)^{\max(1, \log c)}.
    \] 
    By \Cref{lemma:build-dt-error}, for $T$ the tree $\BuildDT$ outputs,
    \begin{align*}
        \dist(T,g) &\leq c^d \cdot \opt_{s'}(g) + \gamma \cdot \frac{c^d - 1}{c-1} + \frac{s'}{2^{d+2}} \\
        &\leq 0 + \frac{\eps'}{2 \cdot \max(2, c)^d} \cdot \max(2, c)^d + \frac{s'}{2^{\log(s'/\eps') + 1}} \\
        & \leq \frac{\eps'}{2} + \frac{\eps'}{2} = \eps'.
    \end{align*}
    Hence, $\BuildDT$ produces the desired output. We next argue that it is efficient. During the recursion, $\BuildDT$ is called at most $s'$ times in total. Each such call makes $O(ns')$ calls to $\mathcal{E}$. By \Cref{lem:distance-estimator}, those calls to $\mathcal{E}$ each make $c/\eps$ calls to $\mathcal{A}$. Hence, the total number of calls to $\mathcal{A}$ is
    \begin{align*}
O\left(\frac{n(s')^2}{\gamma}\right) = O\left(\frac{n (s')^2}{\eps'} \cdot \left(\frac{s'}{\eps'}\right)^{\max(1 ,\log c)}\right) = \poly(n,s',1/\eps').
    \end{align*}
    The total auxiliary computation of $\BuildDT$ is bounded by the same quantity. Finally, each call to $\mathcal{A}$ is made with parameters $s$ and $\eps$ where $s \leq s'$ and $\eps = \frac{\eps'}{2} \cdot \left(\frac{\eps'}{s'}\right)^{\max(1, \log c)} \ge \poly(1/s',\eps')$.
\end{proof}

\Crefname{enumi}{Item}{Item}

\section*{Acknowledgements} 

We thank the anonymous reviewers for their thoughtful comments and feedback. 

Guy and Li-Yang are supported by NSF CAREER Award 1942123. Jane is supported by NSF Award CCF-2006664. 

\bibliography{most-influential}{}
\bibliographystyle{alpha}

\end{document}